\newtheorem{definition}{Definition}
\newtheorem{theorem}{Theorem}
\newtheorem{lemma}{Lemma}
\newtheorem{pro}{Proposition}
\newtheorem{assumption}{Assumption}
\begin{document}
	\title{Time-consistent mean-variance reinsurance-investment problem with long--range dependent mortality rate}
	
	\author{Ling Wang\thanks{Department of Statistics, The Chinese University of Hong Kong, Shatin, N.T., Hong Kong. \newline({\tt lingwang@link.cuhk.edu.hk})}
		\and Mei Choi Chiu\thanks{Department of Mathematics \& Information Technology, The Education University of Hong Kong, Tai Po, N.T., Hong Kong. \newline({\tt mcchiu@eduhk.hk})}	
		\and Hoi Ying Wong\thanks{Corresponding author. Department of Statistics, The Chinese University of Hong Kong, Shatin, N.T., Hong Kong. \newline({\tt hywong@cuhk.edu.hk})}
	}
	\date{\today}
	
	\maketitle \pagestyle{plain} \pagenumbering{arabic}
	\begin{abstract}
		This paper investigates the time-consistent mean-variance reinsurance-investment (RI) problem faced by life insurers. Inspired by recent findings that mortality rates exhibit long-range dependence (LRD), we examine the effect of LRD on RI strategies. We adopt the Volterra mortality model proposed in \cite{WCW} to incorporate LRD into the mortality rate process and describe insurance claims using a compound Poisson process with the intensity represented by stochastic mortality rate. Under the open-loop equilibrium mean-variance criterion, we derive explicit equilibrium RI controls and study the uniqueness of these controls in cases of constant and state-dependent risk aversion. We simultaneously resolve difficulties arising from unbounded non-Markovian parameters and sudden increases in the insurer's wealth process. We also use a numerical study to reveal the influence of LRD on equilibrium strategies. 
	\end{abstract}
	\vspace{2mm} {\em Keywords}: Mean--variance;  Time consistency; Reinsurance-investment;  Mortality model;  Long--range dependence  \\

	\section{Introduction}
	Insurers can manage their risk exposure through reinsurance and enhance their profits by investing in financial markets. These market-related practices inspire studies on optimal reinsurance and investment (RI) strategies. Many RI strategies are developed using the mean-variance (MV) criterion, which is popular in the field of practical investment. For instance, \cite{CY} consider an optimal RI problem involving an insurer and the MV criterion under a regime-switching model. \cite{SZ} study an optimal MV-RI problem with a delay using the maximum principle approach. \cite{YVLZ} investigate robust optimal RI strategies under a benchmarking MV criterion. 
	
	As acknowledged in the literature, the optimal MV problem is complicated by time inconsistency, which prevents satisfaction of the dynamic programming principle. Specifically, a strategy that is optimal at the initial time point becomes suboptimal at a later time point. Time inconsistency is a universal property of problems with MV objectives; accordingly, this potential weakness also appears in the corresponding RI problems. The first documentation of time inconsistency appears in \cite{S}. \cite{BC2010} propose to resolve problems complicated by time inconsistency by introducing an equilibrium feedback control framework based on the concept of sequential games. \cite{BMZ2014} argue that the state-independent strategy provided in \cite{BC2010} is economically unrealistic because the amount invested in the risky asset is independent of the investor's current wealth. They propose the concept of state-dependent risk aversion and establish an extended Hamilton--Jacobi--Bellman (HJB) framework for time inconsistency. However, \cite{BKM2017} show that it is technically difficult to study the uniqueness of equilibrium control within the HJB framework. 
	
	To simultaneously resolve time inconsistency and study uniqueness in MV problems, \cite{HJZ2012} introduce an open-loop equilibrium control framework involving a system of forward--backward stochastic differential equations (FBSDEs). \cite{HJZ2017} further provide sufficient and necessary conditions for equilibrium control and rigorously prove uniqueness under the assumption of bounded parameters in the wealth process. Extensions of this approach to cases involving jumps, stochastic volatility, and constraints on frameworks with controls and robust controls can be found in \cite{SG}, \cite{YW19}, \cite{HHLb} and \cite{HPW}, respectively.
	
	The concept of time-consistent (TC) MV formulation extends to studies of RI strategies. Related studies based on diffusion approximations of the claim process include, but are not limited to,  \cite{LRZ2015}, \cite{YVLZ}, \cite{HLV}, \cite{WS20}, \cite{YW}, and the references therein. Alternatively, studies that seriously consider jumps include \cite{LQ2016}, \cite{ZLG}, \cite{ACS}, \cite{SZY}, \cite{GW2020}, and the references therein. We follow the latter set of references and use jumps to describe the surplus process of the insurer and allow the intensity of the Poisson process, or the equivalent mortality rate, to be a stochastic process. The notable feature distinguishing this paper from the literature is that the mortality rate follows a stochastic process with long-range dependence (LRD), which is also known as the long-memory property or persistence.
	
	In a recent study using data from 16 countries, \cite{YPC} show empirically that mortality rate data exhibit LRD. Further, \cite{YPC2020} empirically confirm LRD in a multivariate time series of multi-cohort mortality data. \cite{DO} find strong empirical evidence for the existence of LRD in mortality data from an Italian population. Based on the empirical evidence, we investigate the impact of LRD on RI strategies, specifically the TCMV RI strategy. To the best of our knowledge, this paper is first to consider the TC RI problem using a mortality rate with LRD.
	
	Our investigation is based on the innovative Volterra mortality models (VMMs) proposed by \cite{WCW} for a mortality rate with LRD. Whereas the VMMs are tractable for actuarial valuation and longevity hedging, the non-Markovian and non-semimartingale features of the Volterra process generate subtle difficulties for us in deriving the open-loop equilibrium RI strategy. We also encounter jumps in the insurer's wealth process. Therefore, the main contribution of this paper lies in the its ability to overcome the aforementioned difficulties and derive unique explicit equilibrium controls for an RI problem under a VMM with both constant and state-dependent risk aversion TCMV objectives. Our work reinforces that the admissibility and uniqueness of equilibrium controls are non-trivial in a case involving an unbounded Volterra parameter that is simultaneously associated with the VMM and jumps. We further provide the first set of technical conditions and proofs of the admissibility and uniqueness of the equilibrium RI policy. 
	
	In our open-loop equilibrium framework, the key mathematical challenge is the simultaneous encounter of both the unbounded stochastic Volterra mortality rate and jumps in the state process. Although \cite{SG} extend their open-loop equilibrium framework to jump diffusion, their assumption regarding uniformly bounded parameters is too restrictive for our problem. \cite{ACS} study the open-loop RI problem under a jump diffusion model but require the Poisson intensity to be a bounded deterministic function. \cite{YW} study the open-loop RI problem with unbounded Markovian stochastic volatility, but their models do not include a jump term. We provide a necessary and sufficient condition for equilibrium control with jumps and unbounded Volterra parameters in the state process. We also carefully study the box constraint on the proportional reinsurance policy when the policy process falls in the interval of [0,1].
	
	From an actuarial science perspective, we provide the first rigorous proof that the equilibrium reinsurance policy is independent of the historical mortality rate under the TCMV criterion with constant risk aversion even though the mortality rate exhibits LRD. Such a reinsurance strategy is a unique equilibrium policy. This is a rather strong result because it confirms that LRD does not affect the (unique equilibrium) reinsurance demand of TCMV investors with constant risk aversion. In contrast, \cite{WW21} show that LRD has material effects on longevity pricing and hedging, implying that reinsurance is more robust to LRD in the mortality rate among TCMV investors with constant risk aversion. However, LRD does affect TCMV investors with state-dependent risk aversion. We therefore numerically examine this effect of LRD in the latter case.
	
	The remainder of this paper is organized as follows. In Section \ref{section:problem}, we describe our model of mortality with LRD and formulate an RI problem using the open-loop equilibrium control framework. In Section \ref{subsec:CRA}, we derive explicit forms for equilibrium strategies under constant risk aversion for both cases with the positive constraint and the $[0,1]$ interval constraint on the reinsurance policy, respectively. In Section \ref{Sec:SDRA}, we study the RI problem under the state-dependent risk aversion situation. The admissibility of the equilibrium controls together with some technical conditions. In Section \ref{section:numerical}, we use a numerical study to reveal the influence of the mortality rate with LRD on RI strategies. Our concluding remarks are presented in Section \ref{section:conclusion}.  Technical proofs are collected in the Appendix. 
	
	\section{Problem formulation}
	\label{section:problem}
	In a filtered complete probability space $(\Omega, \mathcal{F},\{\mathcal{F}\}_{t\in[0,T]},\mathbb{P})$, for any positive constant $q$ and positive integer $d$, let $\mathcal{D}$ be a nonempty subset of $\mathbb{R}^d$. The following terms are defined:
	\begin{itemize}
		\item[]$S_\mathcal{F}^q(t,T;\mathcal{D}, \mathbb{P})$: the set of all $\{\mathcal{F}_s\}_{s\in[t,T]}$-adapted $\mathcal{D}$-valued stochastic processes $X(\cdot)=\{X(s): t\leq s\leq T\}$ with $\mathbb{E}[{\rm sup}_{t\leq s\leq T}|X(s)|^q]<+\infty$;
		\item[]$L_\mathcal{F}^\infty(t,T;\mathcal{D}, \mathbb{P})$: the set of all essentially bounded
		$\{\mathcal{F}_s\}_{s\in[t,T]}$-adapted $\mathcal{D}$-valued stochastic processes;
		\item[] $L_\mathcal{F}^q(t,T;\mathcal{D}, \mathbb{P})$: the set of all
		$\{\mathcal{F}_s\}_{s\in[t,T]}$-adapted $\mathcal{D}$-valued stochastic processes with $\mathbb{E}\left[\int_{t}^{T}|X(s)|^qds\right]<+\infty$;
		\item[] $H_\mathcal{F}^q(t,T;\mathcal{D}, \mathbb{P})$: the set of all $\{\mathcal{F}_s\}_{s\in[t,T]}$-adapted $\mathcal{D}$-valued stochastic processes $X(\cdot)=\{X(s): t\leq s\leq T\}$ with $\mathbb{E}\left[\left(\int_{t}^{T}|X(s)|^2ds\right)^{q/2}\right]<+\infty$.
	\end{itemize}
	
	\subsection{Mortality model}
	Use $\hat{\lambda}(t)$ to denote the mortality rate of a population. Insurance claims arrive according to a Poisson process with an intensity proportional to $\hat{\lambda}(t)$. \cite{YPC} show that the mortality rates of different cohorts exhibit LRD, with an estimated Hurst parameter $H$ of around 0.8; in this estimation, the Hurst parameter is restricted to the interval (0,1). The mortality rate follows a Markovian process when $H=0.5$. Therefore, the empirical result reported by \cite{YPC} suggests significant LRD. Let $\mathbb{R}_+ = [0, \infty)$. To incorporate the LRD of the mortality rate, we adopt the VMM proposed in \cite{WCW}, so that 
	\begin{equation}\label{mortality}
		\hat{\lambda}_t = l(t) + \lambda_t, 
	\end{equation}
	where $l(t) \geq 0$ is a bounded deterministic function and $\lambda$ follows a Volterra process. Specifically, $\lambda_t$ follows a stochastic Volterra integral equation (SVIE):
	\begin{equation}\label{lambda}
		\lambda_t = \lambda_0+\int_{0}^{t}K(t-s)( b_1 - a_1\lambda_s)ds+\int_{0}^{t}K(t-s)\sigma_\lambda\sqrt{\lambda_s} dW_0(s),
	\end{equation}
	where $\lambda_0, b_1, a_1, \sigma_\lambda$ are positive constants, $W_0$ represents standard Brownian motion, and $K \in L^2(\mathbb{R}_+, \mathbb{R})$ is the Volterra kernel. When the kernel is set to the fractional kernel displayed in Table \ref{kernel}, then $\int_0^tK(t-s)dW_0(s)$ is proportional to classic fractional Brownian motion (fBM), a building block used in continuous-time models with LRD. The Volterra process is generally non-Markovian and non-semimartingale. Fortunately, the mortality model specified in \eqref{mortality} possesses an affine structure, as shown in \cite{WCW}, and thus is analytically tractable in terms of the Fourier--Laplace function. In this paper, we make the following standard assumption regarding the kernel $K(\cdot) \in L^2_{loc}(\mathbb{R}_+, \mathbb{R})$. 
	
	\begin{assumption}\label{assumption1}
		The kernel $K$ in \eqref{lambda} is strictly positive and completely monotone on $(0, \infty)$, and there exist $\chi \in (0,2]$ and $k > 0$ such that 
		\[\int_{0}^{h}K^2(t)dt + \int_{0}^{T}(K(t+h) - K(t))^2dt \leq kh^\chi, ~ h>0.\]
	\end{assumption}
	\begin{lemma}
		Under Assumption \ref{assumption1}, the SVIE \eqref{lambda} admits a unique in law $\mathbb{R}_+$-valued continuous weak solution for any initial condition $\lambda_0 \in \mathbb{R}_+$. (Theorem 6.1 in \cite{AJ}.)
	\end{lemma}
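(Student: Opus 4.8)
The plan is to prove the two assertions separately: \emph{weak existence} of a continuous, $\mathbb{R}_+$-valued solution, and \emph{uniqueness in law}. The obstruction that rules out the textbook route is twofold. First, the diffusion coefficient $x\mapsto\sigma_\lambda\sqrt{x}$ is only $\tfrac12$-H\"older, so Picard iteration or a Banach fixed-point argument does not produce a strong solution. Second, the kernel $K$ is singular and renders $\lambda$ neither Markovian nor a semimartingale, so classical SDE well-posedness theory does not apply. We therefore construct existence by kernel regularisation plus a stochastic invariance argument, and we obtain uniqueness in law from the affine structure of \eqref{mortality}--\eqref{lambda} identified in \cite{WCW}.

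\emph{Existence.} First regularise the kernel: choose smooth kernels $K^n$ (for instance mollifications of $K$, or the shifts $K(\cdot+\tfrac1n)$) with $K^n\to K$ in $L^2_{loc}(\mathbb{R}_+)$ and satisfying the estimate in Assumption \ref{assumption1} uniformly in $n$. For each $n$, replace $\sqrt{\lambda_s}$ by $\sqrt{\lambda_s^+}$; the resulting SVIE has a bounded continuous coefficient and admits a weak solution $\lambda^n$ by a standard martingale-problem / Euler-scheme construction. Nonnegativity of $\lambda^n$ then follows from a stochastic invariance argument: the drift $b_1-a_1x$ points strictly inward at the boundary $x=0$ because $b_1>0$, the diffusion degenerates there, and $K^n\ge0$, so $\mathbb{R}_+$ is invariant and $\lambda^n_s\ge0$, whence $\sqrt{\lambda_s^+}=\sqrt{\lambda_s}$. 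Next derive uniform moment bounds $\sup_n\mathbb{E}\!\left[\sup_{s\le T}|\lambda^n_s|^p\right]<\infty$ for $p\ge2$ from the affine (linear-growth) coefficients, the Burkholder--Davis--Gundy inequality, and a Volterra-type Gr\"onwall lemma using the resolvent of $a_1K$. Finally establish tightness of $\{\lambda^n\}$ in $C([0,T])$ through a Kolmogorov--Chentsov estimate $\mathbb{E}|\lambda^n_t-\lambda^n_u|^{2}\le C|t-u|^{\chi}$; this is precisely where the bound $\int_0^hK^2(t)\,dt+\int_0^T(K(t+h)-K(t))^2\,dt\le kh^\chi$ enters, since it controls the increments of the stochastic convolution $\int_0^tK(t-s)\sigma_\lambda\sqrt{\lambda^n_s}\,dW_0(s)$. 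Passing to a weakly convergent subsequence, invoking Skorokhod's representation theorem, and using the continuity of $x\mapsto\sqrt{x}$ together with the $L^2$-stability of stochastic convolutions along $K^n\to K$, one obtains a limit $\lambda$ that is a continuous $\mathbb{R}_+$-valued weak solution of \eqref{lambda}.

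\emph{Uniqueness in law.} Here we exploit that \eqref{mortality}--\eqref{lambda} is an affine Volterra process. For any weak solution $\lambda$, applying It\^o's formula to a candidate exponential functional built from the associated forward process and using the stochastic Fubini theorem (licit under Assumption \ref{assumption1}) shows that the Fourier--Laplace functional of $(\lambda_s)_{s\in[t,T]}$ is exponentially affine, with exponent determined by the deterministic Riccati--Volterra equation associated with $(K,b_1,a_1,\sigma_\lambda)$. That Riccati--Volterra equation has a unique solution on $[0,T]$: its vector field is locally Lipschitz, and the complete monotonicity and positivity of $K$ from Assumption \ref{assumption1} furnish the a priori bound needed to exclude blow-up and extend the local solution globally. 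Since the exponential-affine formula pins down all finite-dimensional distributions of $\lambda$, any two weak solutions with the same $\lambda_0$ have the same law.

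\emph{Main obstacle.} The crux is handling the non-Lipschitz square root \emph{simultaneously} with the singular, non-semimartingale kernel: establishing nonnegativity of the regularised solutions and preserving it under the weak limit, and --- because pathwise uniqueness is not available in this setting --- deducing uniqueness in law solely from the Riccati--Volterra / affine-transform machinery rather than from a Yamada--Watanabe argument. Once Assumption \ref{assumption1} is granted, the moment bounds, the tightness estimate, and the limit passage are routine. This is exactly the content of Theorem 6.1 in \cite{AJ}, to which we refer for the complete details.
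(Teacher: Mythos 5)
Your proposal and the paper ultimately take the same route: both defer to Theorem 6.1 of Abi Jaber, Larsson and Pulido (2019), and the paper gives no argument of its own beyond that citation. Your sketch is a faithful reconstruction of that theorem's proof architecture (approximation and tightness plus stochastic invariance for existence; the Riccati--Volterra / exponential-affine transform to pin down finite-dimensional laws for uniqueness in law, sidestepping Yamada--Watanabe), so it is consistent with what the paper is invoking; just be careful that in \cite{AJ} the nonnegativity of approximating solutions is handled via their Theorem 3.6 style linear-growth weak existence and a dedicated invariance lemma rather than a bare ``drift points inward'' heuristic, and the $L^2$-stability of stochastic convolutions along $K^n\to K$ that you assert is exactly where the H\"older-type kernel bound in Assumption \ref{assumption1} is used and should not be treated as routine.
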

	Table \ref{kernel} offers some examples of kernel functions $K$ that satisfy Assumption \ref{assumption1}. Therefore, the fractional kernel satisfies our standing assumption. In fBM, the degree of LRD is reflected by the parameter $\alpha$, which is related to the classic Hurst parameter $H$ such that $\alpha = H-1/2$. Table \ref{kernel} also shows the resolvents corresponding to each example kernel. 
	
	\begin{table}[H]
		\centering
		\begin{tabular}{ccccc}  
			\toprule
			\toprule
			&Constant&Fractional&Exponential&Gamma \\
			\midrule
			\midrule
			$K(t)$& $c$ & $c\frac{t^{\alpha-1}}{\Gamma(\alpha)}$ & $ce^{-\lambda t}$  & $ce^{-\lambda t}\frac{t^{\alpha-1}}{\Gamma(\alpha)}$\\
			\midrule
			$r_t$& $ce^{-ct}$ & $ct^{\alpha-1}E_{\alpha,\alpha}(-ct^{\alpha})$&$ce^{-\lambda t}e^{-ct}$ & $ce^{-\lambda t}t^{\alpha-1}E_{\alpha,\alpha}(-ct^{\alpha})$\\
			\bottomrule
		\end{tabular}
		\caption{Examples of kernel function $K$ and the corresponding resolvent $R$. Here, $E_{\alpha, \beta}(z)=\sum_{n=0}^{\infty}\frac{z^{n}}{\Gamma(\alpha n+\beta)}$ denotes the Mittag-Leffler function.}
		\label{kernel}
	\end{table}
	
	The \textit{resolvent} or \textit{resolvent of the second kind} corresponding to $K$ shown in the table is defined as the kernel $R$, such that $K*R=R*K=K-R$. The convolutions $K*R$ and $R*K$, wherein $K$ is a measurable function on $\mathbb{R}_+$ and $R$ is a measure on $\mathbb{R}_+$ of locally bounded variation, are defined by 
	\[(K*R)(t) = \int_{[0,t]}K(t-s)R(ds),~~ (R*K)(t) = \int_{[0,t]}R(ds)K(t-s)\]
	for $t>0$.
	
	\begin{lemma}\label{lemma:expmu}
		\citep{AJ} If $\lambda$ follows the SVIE \eqref{lambda}, then for any $0\leq t \leq T$ and a constant $c_0$, there exists a solution $\psi\in L^2([0, T],\mathbb{R})$ to the Volterra--Riccati equation $\psi = (c_0 - a_1\psi  + \frac{1}{2}\sigma_{\lambda}^2\psi^2)*K$ such that 
		\begin{equation}\label{expmu}
			\mathbb{E}\left[\left.e^{c_0\int_{0}^{T}\lambda_sds}\right|\mathcal{F}_t\right] = \exp(Y_t(T)), 
		\end{equation}
		where 
		\begin{align}\label{Y}
			\begin{split}
				Y_t(T) &= Y_0 + \int_{0}^{t}\psi(T-s)\sigma_{\lambda}dW_1(s) - \frac{1}{2}\int_{0}^{t}\psi(T-s)^2\sigma_{\lambda}^2ds,\\
				Y_0 &= \int_{0}^{T}\left[c_0\lambda_0 + \psi(s)(b_1- a_1\lambda_0) + \frac{1}{2}\psi(s)^2\sigma_{\lambda}^2\right]ds.
			\end{split}
		\end{align}
		An alternative expression of $Y$ is 
		\begin{equation}\label{Y2}
			Y_t(T) = c_0\int_{0}^{T}\mathbb{E}\left[\lambda_s|\mathcal{F}_t\right]ds + \frac{1}{2}\int_{t}^{T}\psi(T-s)^2\sigma_{\lambda}^2ds, 
		\end{equation}
		where 
		\begin{equation}\label{expectation:mu}
			\mathbb{E}[\lambda_T|\mathcal{F}_t] = \left(1 - \int_{0}^{T}R_B(s)ds\right)\lambda_0 + b_1\int_{0}^{T}E_B(T-s)ds + \int_{0}^{t}E_B(T- s)\sigma_{\lambda}dW_0(s).
		\end{equation}
		Here, $B=-a_1$, $R_B$ is the resolvent of $-KB$, and $E_B = K- R_B*K$. 
	\end{lemma}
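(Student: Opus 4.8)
The plan is to obtain this as an instance of the affine transform formula for affine Volterra processes, following \cite{AJ}. The argument splits into three parts: (i) solvability of the Volterra--Riccati equation for $\psi$; (ii) identification of $\exp(Y_\cdot(T))$ with a closed martingale whose terminal value is $e^{c_0\int_0^T\lambda_s\,ds}$, which gives \eqref{expmu}; and (iii) derivation of the alternative representations \eqref{Y2} and \eqref{expectation:mu}.

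For part (i), I would solve $\psi=(c_0-a_1\psi+\frac{1}{2}\sigma_\lambda^2\psi^2)*K$ by a Picard iteration in $L^2([0,T],\mathbb{R})$, using that $K\in L^2_{loc}$ is completely monotone so that its resolvent is nonnegative. Because the quadratic nonlinearity is only locally Lipschitz, global existence on $[0,T]$ needs an a priori bound; this follows from a comparison principle for Volterra equations with completely monotone kernels, which traps $\psi$ between $0$ and the solution of the associated scalar Riccati ODE (the exponential-kernel case), excluding blow-up on the admissible range of $c_0$. This existence statement is standard in the affine Volterra literature and may be quoted directly.

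Part (ii) is the core. I would introduce the forward curve $\Lambda_t(s):=\mathbb{E}[\lambda_s\mid\mathcal{F}_t]$ for $s\in[t,T]$ and record that, for each fixed $s$, $t\mapsto\Lambda_t(s)$ is a martingale with dynamics $d_t\Lambda_t(s)=E_B(s-t)\,\sigma_\lambda\sqrt{\lambda_t}\,dW_0(t)$; this is obtained by conditioning the linear part of \eqref{lambda} on $\mathcal{F}_t$ and inverting the resulting linear Volterra equation with the resolvent $R_B$ of $a_1K$, which yields \eqref{expectation:mu}. Writing the candidate $\exp(Y_t(T))$ with $Y_t(T)$ expressed through $\psi$ and $\Lambda_t(\cdot)$, I would apply It\^o's formula together with a stochastic Fubini interchange to reorganize the convolution terms; the drift then vanishes \emph{precisely because} $\psi$ solves the Volterra--Riccati equation, so $\exp(Y_\cdot(T))$ is a local martingale. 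Upgrading it to a true martingale is the main obstacle, since $\lambda$ is unbounded and the kernel is singular at the origin: for $c_0\le 0$ the process is dominated by $1$ and this is immediate, while in general one needs an $L^p$ (uniform integrability) bound on $\exp(Y_t(T))$ — equivalently a moment estimate for $\exp(c_0\int_0^T\lambda_s\,ds)$ on the admissible range of $c_0$ — combined with a localization and Fatou argument. With this in place, the martingale property and the terminal identity $Y_T(T)=c_0\int_0^T\lambda_s\,ds$ give $\exp(Y_t(T))=\mathbb{E}[e^{c_0\int_0^T\lambda_s\,ds}\mid\mathcal{F}_t]$.

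Part (iii) is then bookkeeping: \eqref{expectation:mu} has already been derived in part (ii), and \eqref{Y2} follows by splitting $\int_0^T\Lambda_t(s)\,ds=\int_0^t\lambda_s\,ds+\int_t^T\Lambda_t(s)\,ds$ and identifying the remaining deterministic term via the Volterra--Riccati equation; matching \eqref{Y2} with \eqref{Y} is a matter of aligning the stochastic integrands after the same stochastic Fubini step. I expect the two genuinely delicate points to be the a priori bound ruling out Riccati blow-up and the true-martingale verification under the unbounded, singular Volterra driver.
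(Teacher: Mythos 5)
The paper offers no proof of this lemma; it is quoted directly from Abi Jaber, Larsson, and Pulido (2019), and your outline is a faithful reconstruction of the argument in that reference: Riccati--Volterra solvability, the forward process $\Lambda_t(s)=\mathbb{E}[\lambda_s\mid\mathcal{F}_t]$ with martingale dynamics driven by the resolvent kernel $E_B$, an It\^o plus stochastic-Fubini computation in which the Riccati--Volterra identity makes the drift vanish, and a separate integrability argument to upgrade from local to true martingale. So there is no ``paper's own proof'' to diverge from; your route is the one the paper is implicitly invoking.

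Two points worth flagging, both of which your sketch implicitly gets right and the lemma as transcribed in the paper does not. First, you correctly drive $\Lambda_t(s)$ and $Y_t(T)$ by $W_0$ (the Brownian motion of the mortality SVIE) and include the factor $\sqrt{\lambda_s}$ in the stochastic integrands, whereas the paper's display \eqref{Y} writes $dW_1(s)$ and omits $\sqrt{\lambda_s}$, and \eqref{expectation:mu} likewise omits $\sqrt{\lambda_s}$. These are transcription slips: the later, consistent version of the conditional mean in \eqref{expect:lam} restores $\sqrt{\lambda_u}\,dW_0(u)$, and the source theorem in \cite{AJ} has the $\sqrt{\lambda_s}$ factor. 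Second, you are right to be cautious about ``for any constant $c_0$'': the Riccati--Volterra equation can lose global solvability for $c_0>0$ large, so the lemma only holds for $c_0$ in an admissible range, a restriction the paper does impose later through Assumption \ref{assume:lambda} and the sufficient condition $a_1^2-2C\sigma_\lambda^2>0$ of Proposition \ref{largeenough}. Your identification of the two genuinely delicate points --- ruling out blow-up and verifying true martingality under the unbounded, singular Volterra driver --- matches where the technical effort actually sits in \cite{AJ}.
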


	\subsection{State process}
	Consider the classic risk process for an insurer's surplus. When the insurer makes no reinsurance or investment, his surplus process can be described as follows:
	\begin{equation}
		I(t) = I_0 + \int_{0}^{t}(1 + \theta)k_1\mu_z\hat{\lambda}_sds - \sum_{i =1}^{N(t)}z_i,
	\end{equation}
	where $I_0$ is the initial surplus and $\{z_i\}_{i=1}^{\infty}$ are  independent, identically distributed (iid), positive random variables representing insurance claims. $N(t)$ is a stochastic Poisson process with an intensity $k_1\hat{\lambda}_t$, where $k_1> 0$ is a constant. We assume that the insurance claims are independent of the mortality rate. $\tau_t = (1 + \theta)k_1\mu_z\hat{\lambda}_t$ is the premium rate, with $\mathbb{E}[z] = \mu_z$ and $\theta>0$ representing the safety loading of the insurer.  
	
	Suppose that the insurer is allowed to purchase reinsurance or acquire a new business. For any value of $t\in [0,T]$, denote the proportional reinsurance strategy by $a(t) \in [0, + \infty)$. $a(t) \in [0, 1]$ corresponds to a proportional reinsurance coverage. Therefore, when an insurance claim occurs, the reinsurance company pays a $1-a(t)$ fraction, while the insurer pays the remaining $a(t)$ fraction. Meanwhile, the reinsurance company charges the insurer at the rate of $\frac{1+ \eta}{1 + \theta}\tau_t(1-a(t))$, where $\eta\geq \theta$ represents the safety loading of the reinsurer. When $a(t)>1$, the insurer acquires for a new business. The surplus process of the insurer becomes 
	\begin{equation}
		dI(t) = [(\theta - \eta)k_1\hat{\lambda}_t\mu_z + (1+ \eta)k_1\hat{\lambda}\mu_za(t)]dt - a(t)d\sum_{i =1}^{N(t)}z_i.
	\end{equation}
	
	In practice, insurers also attempt to profit from the financial market. Consider a market consisting of risk-free and risky assets. The price of the risk-free asset, $B(t)$, is as follows:
	\[dB(t) = r_tB(t)dt,\]
	where the interest rate $r > 0$ is a bounded deterministic function. The price of the risky asset is as follows:
	\[dS(t) = S(t)[\mu(t) dt + \sigma(t) dW_1(t)],\]
	where $\mu(t), \sigma(t)> 0$ are two bounded deterministic functions, and $W_1$ represents standard Brownian motion independent of $W_0$.
	
	Suppose that the insurer has an initial wealth of $X_0$. For $t \in [0, T]$, denote $\pi(t)$ as the amount of money invested in the risky asset and $X(t)$ as the wealth process. The remaining amount of money, $X(t) - \pi(t)$, is then invested in the risk-free asset. Hence, we obtain
	\begin{equation*}
		dX_t = [r_tX_t + (\mu - r)\pi(t)  + (1 + \eta)k_1\hat{\lambda}_t\mu_za(t) + (\theta - \eta)k_1\hat{\lambda}_t\mu_z]dt + \pi(t)\sigma(t) dW_1(t) - a(t)d\sum_{i =1}^{N(t)}z_i.
	\end{equation*}
	Next, we simplify this notation by introducing a Poisson random measure. Denote the compound Poisson process by $\sum_{i=1}^{N(t)}z_i = \int_{0}^{t}\int_{\mathbb{R}_+}zN(ds,dz)$, where $\mathbb{R}_+ = [0, \infty)$ and $N(dt, dz)$ is a Poisson random measure in the space $[0,T]\times \mathbb{R}_+$.  
	Assume that the positive random claim size $z$ has a probability density function $f(z)$ with finite expectation and moments. Here, $\mu_z = \int_{\mathbb{R}_+} zf(z)dz$ and $\mathbb{E}[z^2] =\int_{\mathbb{R}_+} z^2f(z)dz$.  Then the Poisson random measure $N(dt , dz)$ has the  compensator $\delta(dz)dt \triangleq k_1\hat{\lambda}_tf(z)dzdt$, where $\delta(dz) =  k_1\hat{\lambda}_tf(z)dz$. Define $F^2(t, T; \mathbb{R})$ as the set of all $\{\mathcal{F}_s\}_{s \in[t, T]}$-predictable processes $X(\cdot, \cdot): \Omega\times[t, T]\times\mathbb{R}_+\rightarrow \mathbb{R}$, such that $\mathbb{E}\left[\int_{t}^{T} \|X(s, \cdot)\|^2_{L^2}ds\right] < \infty$. Here, $\|X(s, z)\|^2_{L^2} := \int_{\mathbb{R}_+}|X(s, z)|^2\delta(dz)$. 
	Let $\widetilde{N}(dt,dz) = N(dt,dz)- \delta(dz)dt$. The insurer's wealth process becomes
	\begin{align}
		dX_t &= [r_tX_t + (\mu - r)\pi(t)+  \eta k_1\hat{\lambda}_t\mu_za(t)  + (\theta - \eta)k_1\hat{\lambda}_t\mu_z ]dt + \pi(t)\sigma(t)dW_1(t)\notag \\
		&-\int_{\mathbb{R}_+} a(t)z\widetilde{N}(dt,dz).\notag
	\end{align}
	For convenience, we rewrite the wealth process as follows:
	\begin{equation}\label{X}
		dX_t = [r_tX_t + \nu(t)^\top u(t)  + c_t ]dt + \pi(t)\sigma(t) dW_1(t) -\int_{\mathbb{R}_+} a(t)z\widetilde{N}(dt,dz),
	\end{equation}
	where $u(t) = (\pi(t), a(t))^\top$ is the control pair; $\nu = (\nu_1, \nu_2)^\top$ with $\nu_1(t) = \mu(t) - r_t$ and $\nu_2(t) = \eta k_1\hat{\lambda}_t\mu_z$; and $c_t = (\theta - \eta)k_1\hat{\lambda}_t\mu_z$. The coefficients $\nu_2(t)$ and $c_t$ are unbounded stochastic processes due to the unbounded parameter $\hat{\lambda}_t$. Note the existence of a constraint specifies that the control $a(t)$ should always be nonnegative.  
	
	\subsection{Open-loop equilibrium framework}
	\begin{definition}\label{admissible}
		If the control $u(\cdot) \in H_\mathcal{F}^2(0,T;\mathbb{R}, \mathbb{P}) \times  \cup_{q>2}L_\mathcal{F}^q(0,T;\mathcal{D}, \mathbb{P})$ and Equation \eqref{X} admits a unique strong solution $X \in S_\mathcal{F}^2(0,T;\mathbb{R}, \mathbb{P})$, then $u$ is called an admissible control. 
	\end{definition}
	Two cases are considered in this paper.
	\begin{enumerate}
		\item[\textbullet]$\mathcal{D} = \mathbb{R}_+$. The reinsurance strategy is only required to be positive. 
		\item[\textbullet] $\mathcal{D} = [0, 1]$. The reinsurance strategy is restricted to $[0, 1]$. 
	\end{enumerate}
	
	\begin{lemma}\label{lemma1}
		Let $\lambda(\cdot)$ be the continuous solution to the SVIE \eqref{lambda}. Suppose that kernel $K$ satisfies Assumption 1; then, there exists a constant $C>0$ such that 
		\[ \sup_{0\leq t\leq T}\mathbb{E}[|\lambda_t|^q] < C. \]
		for any constant $q\geq 2$. 
	\end{lemma}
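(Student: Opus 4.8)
\emph{Proof strategy.} The plan is to combine a pathwise a priori bound that exploits the positivity of the kernel with the Burkholder--Davis--Gundy (BDG) inequality, and then to close the resulting recursion by a dyadic bootstrap in the exponent. Throughout, write $M_t:=\int_0^t K(t-s)\sigma_\lambda\sqrt{\lambda_s}\,dW_0(s)$, $\ell:=\int_0^T K(r)\,dr$ and $\kappa:=\int_0^T K^2(r)\,dr$; both are finite because $K\in L^2_{loc}(\mathbb{R}_+,\mathbb{R})$, hence $K\in L^1_{loc}(\mathbb{R}_+,\mathbb{R})$. First I would record two elementary facts. By Theorem~6.1 in \cite{AJ} (the lemma quoted after Assumption~\ref{assumption1}), $\lambda$ is $\mathbb{R}_+$-valued and continuous, so $|\lambda_t|=\lambda_t$ and the stochastic integral defining $M_t$ makes pathwise sense. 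And the first moment $m(t):=\mathbb{E}[\lambda_t]$ is bounded on $[0,T]$: this is read off the explicit formula \eqref{expectation:mu} of Lemma~\ref{lemma:expmu} by setting $t=0$ (the $dW_0$-term then vanishes and $R_B,E_B\in L^1_{loc}$), or, equivalently, from the linear Volterra equation $m=\lambda_0+b_1(1*K)-a_1(K*m)$ together with $K,m\ge0$, which forces $m(t)\le\lambda_0+b_1\ell$.

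The second step is the pathwise bound. Rearranging \eqref{lambda} as
\[
\lambda_t+a_1\int_0^t K(t-s)\lambda_s\,ds=\lambda_0+b_1\int_0^t K(s)\,ds+M_t
\]
and using $K>0$, $\lambda\ge0$, $a_1>0$, the left-hand side dominates $\lambda_t$, whence $0\le\lambda_t\le\lambda_0+b_1\ell+|M_t|$ on $[0,T]$, so that $\mathbb{E}[\lambda_t^q]\le 2^{q-1}(\lambda_0+b_1\ell)^q+2^{q-1}\mathbb{E}[|M_t|^q]$ for any $q\ge2$. To estimate $\mathbb{E}[|M_t|^q]$ I would freeze the kernel: for fixed $t$, the process $r\mapsto\int_0^r K(t-s)\sigma_\lambda\sqrt{\lambda_s}\,dW_0(s)$, $r\in[0,t]$, is a continuous local martingale (its integrand is predictable and, by continuity of $\lambda$, locally bounded) with terminal value $M_t$ and quadratic variation $\sigma_\lambda^2\int_0^t K^2(t-s)\lambda_s\,ds$. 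Applying BDG for local martingales (an inequality valid with both sides in $[0,\infty]$), then Jensen's inequality for the convex map $x\mapsto x^{q/2}$ against the finite measure $K^2(r)\,dr$ on $[0,t]$, and finally the change of variables $r=t-s$, yields, for suitable constants $A_q,B_q$ depending only on $q$, $T$, and the model parameters,
\[
\mathbb{E}[\lambda_t^q]\ \le\ A_q+B_q\int_0^t K^2(r)\,\mathbb{E}\!\left[\lambda_{t-r}^{q/2}\right]dr,\qquad t\in[0,T],
\]
again understood in $[0,\infty]$.

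The third step closes the estimate. For $q=2$ the right-hand side equals $A_2+B_2\int_0^t K^2(r)\,m(t-r)\,dr\le A_2+B_2\kappa\sup_{[0,T]}m<\infty$, so $\sup_{[0,T]}\mathbb{E}[\lambda^2]<\infty$. If $\sup_{[0,T]}\mathbb{E}[\lambda^{q_0}]<\infty$ for some $q_0\ge2$, applying the displayed recursion with $q=2q_0$ gives $\sup_{[0,T]}\mathbb{E}[\lambda^{2q_0}]<\infty$; by induction $\sup_{[0,T]}\mathbb{E}[\lambda^{2^n}]<\infty$ for every integer $n\ge1$. For an arbitrary $q\ge2$, choose $n$ with $2^n\ge q$ and use Lyapunov's inequality $\mathbb{E}[\lambda_t^q]\le(\mathbb{E}[\lambda_t^{2^n}])^{q/2^n}$ to conclude $\sup_{0\le t\le T}\mathbb{E}[|\lambda_t|^q]<\infty$, which is the assertion. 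I expect the only genuinely delicate points to be the kernel-freezing device, which is what turns the stochastic \emph{convolution} $M_t$ (not a martingale in $t$) into a bona fide martingale so that BDG applies, and the fact that the recursion lowers the exponent from $q$ to $q/2$ rather than keeping it fixed, so that a direct Gr\"onwall argument does not suffice and one must bootstrap off the first-moment bound; the complete monotonicity and H\"older-type clauses of Assumption~\ref{assumption1} play no further role here beyond underpinning the existence and continuity of $\lambda$.
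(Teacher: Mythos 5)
Your argument is correct in its core steps but takes a genuinely different route from the paper, whose proof is a one-liner: it records the linear-growth bound $|b_1-a_1\lambda|\vee|\sigma_\lambda\sqrt{\lambda}|\le C_1(1+|\lambda|)$ and then invokes Lemma~3.1 of \cite{AJ}, a general a priori $L^q$ estimate for stochastic Volterra equations with linear-growth coefficients. Your proof is instead tailored to the CIR structure: you use $\lambda\ge 0$, $K>0$, $a_1>0$ to absorb the drift into the one-sided pathwise bound $\lambda_t\le\lambda_0+b_1\ell+|M_t|$, and then the kernel-freezing device plus BDG and Jensen to control $M_t$. These steps are valid. The price is the dyadic bootstrap: since Jensen is applied to $\bigl(\int K^2\lambda\bigr)^{q/2}$ with $\lambda$ only to the first power inside, the recursion drops the exponent from $q$ to $q/2$ and cannot be closed by Gr\"onwall alone. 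Had you instead first used the very linear-growth bound the paper records, $\sqrt{\lambda}\le\tfrac12(1+\lambda)$, Jensen would give the fixed-exponent recursion $\mathbb{E}[\lambda_t^q]\le A_q+B_q\int_0^tK^2(r)\,\mathbb{E}[\lambda_{t-r}^q]\,dr$, which closes in one step by the Gr\"onwall lemma for Volterra inequalities and needs no base case; that is essentially the proof of Lemma~3.1 of \cite{AJ}.

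The base case is the one place where your argument is not airtight. Of the two routes you sketch for $\sup_{[0,T]}m<\infty$, the second --- taking expectations of the SVIE to get $m=\lambda_0+b_1(1*K)-a_1(K*m)$ --- is circular: it silently uses $\mathbb{E}[M_t]=0$, which requires $\int_0^tK^2(t-s)m(s)\,ds<\infty$, i.e.\ precisely the local boundedness of $m$ you are trying to prove; and your pathwise lower bound $M_t\ge-\lambda_0-b_1\ell$ holds only at the terminal time $r=t$, not along the frozen-kernel local martingale $r\mapsto\int_0^r K(t-s)\sigma_\lambda\sqrt{\lambda_s}\,dW_0(s)$, so it cannot feed a localization or dominated-convergence fix. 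Your first route, reading the bound off \eqref{expectation:mu} of Lemma~\ref{lemma:expmu}, is admissible within the paper's layout since that lemma precedes this one; but it too is a black box imported from \cite{AJ}, whose derivation of the affine transform already presupposes these moment bounds, so as a self-contained argument it still leaves an unproved anchor. Running the fixed-exponent recursion avoids the issue entirely.
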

	\begin{proof}
		In \eqref{lambda}, it is clear that 
		\[ |b_1- a_1\lambda| \vee |\sigma_{\lambda}\sqrt{\lambda}| \leq C_1(1+ |\lambda|),\]
		for a constant $C_1$. The result follows from Lemma 3.1 in \cite{AJ}. 
	\end{proof}
	
	The insurer's objective is to minimize 
	\begin{equation}\label{objective}
		J(t,x_t; u(\cdot)) = \frac{1}{2}{\rm Var}_t(X_T) -  (\phi_1x_t + \phi_2)\mathbb{E}_t[X_T] = \frac{1}{2}\left(\mathbb{E}_t[X_{T}^2] - \mathbb{E}_t^2[X_{T}]\right) -  (\phi_1x_t + \phi_2)\mathbb{E}_t[X_{T}]
	\end{equation}
	by an admissible control $u$, where $x_t = X(t)$, $\mathbb{E}_t[\cdot] = \mathbb{E}[\cdot|\mathcal{F}_t]$, and $\phi_1, \phi_2 \in \mathbb{R}_+$.  If $\phi_1 = 0$ and $\phi_2> 0$, the insurer has a constant risk aversion; otherwise, if $\phi_1 > 0$, the insurer has a state-dependent risk aversion. 
	
	Because of the unboundedness of the parameter $\hat{\lambda}$, the results in \cite{SG} cannot be applied directly to our problem. In our case, the admissibility of the control is highly nontrivial, especially for the state-dependent case. The following theorem and the proof detailed in Appendix \ref{Appendix:remarkproof} are useful in this regard. 
	\begin{theorem}\label{remark:admissible}
		If the control $u(\cdot) \in H_\mathcal{F}^2(0,T;\mathbb{R}, \mathbb{P})\times \cup_{q>2}L_\mathcal{F}^q(0,T;\mathcal{D}, \mathbb{P}) $,  then  $X \in S_\mathcal{F}^2(0,T;\mathbb{R}, \mathbb{P})$.  
	\end{theorem}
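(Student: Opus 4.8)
The plan is to represent any strong solution of \eqref{X} by the variation-of-constants formula and then establish the a priori bound $\mathbb{E}[\sup_{0\le t\le T}|X_t|^2]<\infty$ by estimating the resulting terms one at a time. Because $r$ is a bounded deterministic function, the factor $\Phi(s,t):=\exp(\int_s^t r_u\,du)$ satisfies $1\le\Phi(s,t)\le c^*:=e^{T\|r\|_\infty}$ for $0\le s\le t\le T$, so any solution of \eqref{X} must obey
\begin{align*}
X_t={}&\Phi(0,t)X_0+\int_0^t\Phi(s,t)\big[\nu_1(s)\pi(s)+\nu_2(s)a(s)+c_s\big]ds\\
&+\int_0^t\Phi(s,t)\pi(s)\sigma(s)\,dW_1(s)-\int_0^t\!\!\int_{\mathbb{R}_+}\Phi(s,t)a(s)z\,\widetilde N(ds,dz);
\end{align*}
conversely, once the four integrands are shown to be sufficiently integrable, this formula defines a solution, which simultaneously delivers existence and uniqueness. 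It therefore suffices to bound the $L^2(\Omega)$-norm of the running supremum of each of the four terms.

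The first term is dominated by $c^*|X_0|$. For the $c_s$-drift, $\sup_{0\le t\le T}|\int_0^t\Phi(s,t)c_s\,ds|\le c^*|\theta-\eta|k_1\mu_z\int_0^T\hat\lambda_s\,ds$, and Cauchy--Schwarz in $s$ together with the boundedness of $l$ and Lemma \ref{lemma1} gives $\mathbb{E}[(\int_0^T\hat\lambda_s\,ds)^2]\le T\int_0^T\mathbb{E}[\hat\lambda_s^2]\,ds<\infty$. For the $\nu_1$-drift and the Brownian term, $\nu_1$ and $\sigma$ are bounded, so Cauchy--Schwarz and, respectively, the Burkholder--Davis--Gundy (BDG) inequality reduce both of these to $\mathbb{E}[\int_0^T|\pi(s)|^2\,ds]<\infty$, which holds because $\pi\in H_{\mathcal F}^2(0,T;\mathbb{R},\mathbb{P})$.

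The substantive difficulty is the coupling of the unbounded, non-Markovian mortality factor $\hat\lambda$ with the reinsurance control $a$: it enters the drift through $\nu_2(s)a(s)=\eta k_1\mu_z\hat\lambda_s a(s)$ and the jump martingale through its predictable quadratic variation. For the compensated Poisson integral, BDG for purely discontinuous martingales together with the form $\delta(dz)=k_1\hat\lambda_s f(z)\,dz$ of the compensator gives
\[
\mathbb{E}\Big[\sup_{0\le t\le T}\Big|\int_0^t\!\!\int_{\mathbb{R}_+}\Phi(s,t)a(s)z\,\widetilde N(ds,dz)\Big|^2\Big]\le C\,k_1\mathbb{E}[z^2]\,\mathbb{E}\Big[\int_0^T a(s)^2\hat\lambda_s\,ds\Big],
\]
and a parallel Cauchy--Schwarz/H\"older splitting in $s$ reduces $\mathbb{E}[(\int_0^T\nu_2(s)a(s)\,ds)^2]$ to quantities of the same nature (products of a power of $\int_0^T|a(s)|^q\,ds$ and a power of $\int_0^T\hat\lambda_s^{p}\,ds$). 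Everything thus comes down to showing $\mathbb{E}[\int_0^T a(s)^2\hat\lambda_s\,ds]<\infty$ when $a\in L_{\mathcal F}^q(0,T;\mathcal D,\mathbb{P})$ for some $q>2$. This follows from H\"older's inequality in $s$ with the conjugate exponents $q/2$ and $q/(q-2)$, followed by H\"older's inequality on $\Omega$ with the same pair, which bounds the quantity by
\[
\Big(\mathbb{E}\Big[\int_0^T|a(s)|^q\,ds\Big]\Big)^{2/q}\Big(\int_0^T\mathbb{E}\big[\hat\lambda_s^{q/(q-2)}\big]\,ds\Big)^{(q-2)/q},
\]
which is finite by the definition of $L_{\mathcal F}^q$, the boundedness of $l$, and Lemma \ref{lemma1}, since the latter supplies $\sup_{0\le s\le T}\mathbb{E}[\lambda_s^p]<\infty$ for every $p\ge2$ (the cases $1\le p<2$ following by Jensen). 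This is precisely the step that makes the strict inequality $q>2$ indispensable in the admissible class: when $\mathcal D=[0,1]$ the control is bounded and the estimate is immediate, but when $\mathcal D=\mathbb{R}_+$ one genuinely needs the integrability gap between $a$ and the unbounded $\hat\lambda$. Collecting the four estimates yields $\mathbb{E}[\sup_{0\le t\le T}|X_t|^2]<\infty$, i.e.\ $X\in S_{\mathcal F}^2(0,T;\mathbb{R},\mathbb{P})$. The main obstacle throughout is exactly this interaction between the unbounded Volterra-driven intensity and the merely $L^q$-integrable reinsurance control; the initial-data, $\nu_1$-drift and Brownian contributions are routine.
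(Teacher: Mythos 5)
Your proof is correct and follows essentially the same route as the paper's: write $X$ via the variation-of-constants formula, bound the running supremum of each term with BDG (for the stochastic integrals) and Cauchy--Schwarz/H\"older (for the drifts), and isolate the term coupling $a$ with the unbounded $\hat\lambda$, which is controlled using H\"older with exponents $q/2$ and $q/(q-2)$ together with the moment bound of Lemma \ref{lemma1}. Your additional remarks on why $q>2$ is indispensable when $\mathcal D=\mathbb{R}_+$ and on existence/uniqueness via the explicit formula are accurate but supplementary to the paper's terser argument.
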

	Theorem \ref{remark:admissible} asserts that the condition for the control space in Definition \ref{admissible} alone is sufficient to ensure admissibility because the condition for the state process is immediate.
	
	We employ the open-loop equilibrium framework in \cite{HJZ2012, HJZ2017} and \cite{HHLb}. Given  a control pair $(\pi^*(t), a^*(t)) \in H_\mathcal{F}^2(0,T;\mathbb{R}, \mathbb{P})\times \cup_{q>2}L_\mathcal{F}^q(0,T;\mathcal{D}, \mathbb{P})$, for any $t \in [0, T)$, $\epsilon > 0$, $\rho_1 \in H_\mathcal{F}^2(t,T;\mathbb{R}, \mathbb{P})$, and $\rho_2 \in \cup_{q>2}L_\mathcal{F}^q(t,T;\mathcal{D}, \mathbb{P})$, define
	\begin{align}\label{perturbation}
		\begin{split}
			\pi_s^{t, \epsilon, \rho_1} &= \pi^*_s + \rho_1 \boldsymbol{1}_{s \in [t, t+\epsilon)}, ~ s\in [t, T],\\
			a_s^{t, \epsilon, \rho_2} &= a_s^* + ( \rho_2(s) -  a_s^*)\boldsymbol{1}_{s \in [t, t+\epsilon)}, ~ s\in [t, T]. 
		\end{split}
	\end{align}
	Under this construction, for any $t\in [0, T)$, we obtain $a_s^{t, \epsilon, \rho_2} =  \rho_2(s) \geq 0$, when $s \in [t, t+\epsilon]$, and $a_s^{t, \epsilon, \rho_2} = a^*_s \geq 0$, when $s \in [t + \epsilon, T]$. A similar setting is used in \cite{HHLb} and \cite{YW}.
	
	\begin{definition}\label{def2}
		Let $(\pi^*(t), a^*(t)) \in H_\mathcal{F}^2(0,T;\mathbb{R}, \mathbb{P})\times \cup_{q>2}L_\mathcal{F}^q(0,T;\mathcal{D}, \mathbb{P})$ be a given control pair  and $X^*$ be the corresponding state process. Then, the control pair $(\pi^*(t), a^*(t))$ is an equilibrium strategy for problem \eqref{objective} if, for any $t\in [0, T)$, $\rho_1 \in H_\mathcal{F}^2(0,T;\mathbb{R}, \mathbb{P})$, and $\rho_2 \in \cup_{q>2}L_\mathcal{F}^q(t,T;\mathcal{D}, \mathbb{P})$, we have  $X^* \in S_\mathcal{F}^2(t,T;\mathbb{R}, \mathbb{P})$ and
		\begin{equation}\label{liminf}
			\liminf_{\epsilon \downarrow 0} \frac{J(t, X^*_t; \pi^{t, \epsilon, \rho_1}, a^{t, \epsilon, \rho_2}) - J(t, X^*_t; \pi^*, a^*)}{\epsilon} \geq 0,
		\end{equation}
		where $\pi^{t, \epsilon, \rho_1}, a^{t, \epsilon, \rho_2}$ is as defined in \eqref{perturbation}. 
	\end{definition}
	
	Note that we consider the constraint on the reinsurance control to be in line with \cite{HHLb}. When the control $a^*$ is restricted to $\mathcal{D}$, then $\rho_2$ is also restricted to $\mathcal{D}$ in the definition. 
	
	The nature of the Volterra process prevents the use of a classic HJB framework in our problem. Inspired by \cite{HJZ2012, HJZ2017} and \cite{SG}, we adopt the BSDE approach and provide an equivalent condition to \eqref{liminf} as follows. 
	
	For any $t\in[0, T]$, define the adjoint process $(p^*(s; t), Z^*(s; t), Z^*_2(s,z;t)) \in L_\mathcal{F}^2(t,T;\mathbb{R}, \mathbb{P})\times H_\mathcal{F}^2(t,T;\mathbb{R}^2, \mathbb{P})\times F^2(t,T;\mathbb{R})$ that satisfies the following BSDE:
	\begin{align}\label{p*}
		\left\{
		\begin{array}{lr}
			dp^*(s; t) = -r_sp^*(s; t)ds + Z^*(s;t)^\top dW_s + \int_{\mathbb{R}_+} Z_2^*(s,z;t)\widetilde{N}(ds,dz),\\
			p^*(T; t) = X^*_T - \mathbb{E}_t[X^*_T] -(\phi_1X^*_t + \phi_2), \end{array}
		\right.
	\end{align}
	where $Z^*(s;t) = (Z_0^*(s;t), Z_1^*(s;t))^\top$, $W = (W_0, W_1)^\top$,  and $X^*$ is the state process corresponding to the control $(\pi^*(t), a^*(t)) \in H_\mathcal{F}^2(0,T;\mathbb{R}, \mathbb{P})\times \cup_{q>2}L_\mathcal{F}^q(0,T;\mathcal{D}, \mathbb{P})$. The flow of BSDEs in \eqref{p*} is constructed to perturb $J$ in \eqref{liminf} and thus obtain the leading order term of the $\liminf$ for a small value of $\epsilon$ when the terminal condition is set to match the objective function. Although \cite{SG} extend the perturbation in \cite{HJZ2012, HJZ2017} to incorporate random jumps, in our case, the unboundedness of the Volterra process generates extra difficulties. Still, we prove the following theorem in Appendix \ref{appendix:Theo2prop1}. 
	
	\begin{theorem}\label{theorem1}
		For any $t \in [0,T]$, $\epsilon>0$, $\rho_1 \in H_\mathcal{F}^2(t,T;\mathbb{R}, \mathbb{P})$, and $\rho_2 \in \cup_{q>2}L_\mathcal{F}^q(t,T;\mathcal{D}, \mathbb{P})$, let  $(\pi^*(t), a^*(t)) \in H_\mathcal{F}^2(0,T;\mathbb{R}, \mathbb{P})\times \cup_{q>2}L_\mathcal{F}^q(0,T;\mathcal{D}, \mathbb{P})$ be a given control pair and  $\pi^{t, \epsilon, \rho_1}, a^{t, \epsilon, \rho_2}$ be as defined in \eqref{perturbation}. Then,    
		\[J(t, X^*_t; \pi_s^{t, \epsilon, \rho_1}, a_s^{t, \epsilon, \rho_2}) - J(t, X^*_t; \pi^*, a^*)= \mathbb{E}_t\int_{t}^{t + \epsilon} \left[\langle \Lambda(s; t), \rho_s \rangle + \langle \Theta(s)\rho_s, \rho_s \rangle\right] ds + o(\epsilon),\]
		where $\rho_s = (\rho_1, \rho_2- a^*_s)^\top$, 
		\begin{equation}\label{Lambda}
			\Lambda(s; t) = \left(\nu_1p^*(s;t) + \sigma(s)Z^*_1(s;t), \nu_2p^*(s;t) - \int_{\mathbb{R}_+} zZ^*_2(s;z,t)\delta(dz)\right)^\top, 
		\end{equation}
		and $\Theta(s) =\frac{1}{2}e^{\int_{s}^{T} 2r_u du }\left(\sigma(s)^2 + k_1\hat{\lambda}_s\mathbb{E}[z^2]\right)$ with $\left(p^*(s;t), Z^*_1(s;t), Z^*_2(s;z,t) \right)$ as defined in \eqref{p*}. 
	\end{theorem}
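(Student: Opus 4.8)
The plan is to carry out a spike-variation expansion of the state and of $J$ around $(\pi^*,a^*)$, following the BSDE perturbation technique of \cite{HJZ2012,HJZ2017} and \cite{SG}, and to read off the first- and second-order terms from the adjoint system \eqref{p*} via It\^o's product rule; the novelty is that the integrability bookkeeping must accommodate the unbounded non-Markovian factor $\hat\lambda$. Write $X^{t,\epsilon}$ for the state driven by $(\pi^{t,\epsilon,\rho_1},a^{t,\epsilon,\rho_2})$ and set $\Delta X:=X^{t,\epsilon}-X^*$. Since \eqref{X} is affine in $X$ with coefficient $r_s$ and the controls enter only through $\nu^\top u$, $\pi\sigma$ and the jump size $az$, subtracting the two copies of \eqref{X} shows $\Delta X$ solves a \emph{linear} SDE whose inhomogeneous terms are all supported on $[t,t+\epsilon)$ and which starts from $\Delta X_t=0$; solving it explicitly gives $\Delta X_T=D_\epsilon+M_\epsilon$, where $D_\epsilon=\int_t^{t+\epsilon}e^{\int_s^T r_u\,du}\big[\nu_1(s)\rho_1+\nu_2(s)(\rho_2(s)-a^*_s)\big]ds$ and $M_\epsilon$ is the sum of the two (compensated) stochastic integrals over $[t,t+\epsilon)$, so that $\mathbb E_t[M_\epsilon]=0$, $\mathbb E_t[M_\epsilon^2]=O(\epsilon)$, and, by H\"older together with $q>2$, $\mathbb E_t[D_\epsilon^2]=o(\epsilon)$. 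Lemma \ref{lemma1} controls $\hat\lambda_s$ inside $\nu_2$ and the jump compensator, and H\"older against $\rho_2,a^*\in\cup_{q>2}L_\mathcal F^q$ closes these estimates.

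Because the perturbation begins at time $t$ we have $X^{t,\epsilon}_t=X^*_t=x_t$, so substituting $X_T=X^*_T+\Delta X_T$ into \eqref{objective} and cancelling the $\Delta X_T$-free terms (this is why the $\phi_1 x_t$ term disappears) gives
\[
J(t,x_t;\pi^{t,\epsilon,\rho_1},a^{t,\epsilon,\rho_2})-J(t,x_t;\pi^*,a^*)=\mathbb E_t\big[\big(X^*_T-\mathbb E_t[X^*_T]-(\phi_1 x_t+\phi_2)\big)\Delta X_T\big]+\tfrac12\mathrm{Var}_t(\Delta X_T),
\]
and by the terminal condition in \eqref{p*} the first bracket equals $p^*(T;t)$. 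I would then apply It\^o's product rule to $s\mapsto p^*(s;t)\Delta X_s$ on $[t,T]$: the $r_s$-drifts from the BSDE and from $\Delta X$ cancel, while the remaining $ds$-terms are exactly $\big(\nu_1 p^*(s;t)+\sigma(s)Z^*_1(s;t)\big)\rho_1+\big(\nu_2 p^*(s;t)-\int_{\mathbb R_+}zZ^*_2(s,z;t)\delta(dz)\big)(\rho_2(s)-a^*_s)$ on $[t,t+\epsilon)$, coming respectively from the drift of $\Delta X$, the $dW_1$-bracket $\sigma Z^*_1$, and the jump bracket $-\int z(\rho_2-a^*_s)Z^*_2\,N(ds,dz)$ after compensation. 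Using $\Delta X_t=0$ and taking $\mathbb E_t$, once the stochastic integrals produced by the product rule are shown to be true martingales, gives $\mathbb E_t[p^*(T;t)\Delta X_T]=\mathbb E_t\int_t^{t+\epsilon}\langle\Lambda(s;t),\rho_s\rangle\,ds$ with $\Lambda$ as in \eqref{Lambda}.

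For the second-order term, write $\Delta X_T=D_\epsilon+M_\epsilon$ again; since $\mathbb E_t[D_\epsilon^2]=o(\epsilon)$ and $\mathbb E_t[D_\epsilon M_\epsilon]\le\mathbb E_t[D_\epsilon^2]^{1/2}\mathbb E_t[M_\epsilon^2]^{1/2}=o(\epsilon)$, one gets $\mathrm{Var}_t(\Delta X_T)=\mathbb E_t[M_\epsilon^2]+o(\epsilon)$. The It\^o isometry for the orthogonal $dW_1$- and $\widetilde N$-integrals, together with $\int_{\mathbb R_+}z^2\delta(dz)=k_1\hat\lambda_s\mathbb E[z^2]$, then yields $\mathbb E_t[M_\epsilon^2]=\mathbb E_t\int_t^{t+\epsilon}e^{2\int_s^T r_u\,du}\big[\sigma(s)^2\rho_1^2+k_1\hat\lambda_s\mathbb E[z^2](\rho_2(s)-a^*_s)^2\big]ds=2\,\mathbb E_t\int_t^{t+\epsilon}\langle\Theta(s)\rho_s,\rho_s\rangle\,ds$, with $\Theta$ read as the diagonal quadratic form carrying those two entries. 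Combining the three pieces proves the expansion.

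The hard part is not the algebra but the integrability behind these steps. The coefficients $\nu_2$, $c_t$ and the jump compensator $\delta(dz)$ all carry the unbounded non-Markovian factor $\hat\lambda_t=l(t)+\lambda_t$, so the stochastic integrals generated by the product rule — for instance $\int p^*_s\sigma(s)\rho_1\,dW_1$, $\int\Delta X_{s-}Z^*\,dW$, $\int\!\int\Delta X_{s-}Z^*_2\,\widetilde N$ and $\int\!\int z(\rho_2(s)-a^*_s)Z^*_2\,\widetilde N$ — are only a priori local martingales, and every error term mixes moments of $\hat\lambda$, $p^*$, $Z^*$, $Z^*_2$, $\Delta X$, $\rho_2$ and $a^*$. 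I would settle this by localization: stop at $\tau_n:=\inf\{s\ge t:|\Delta X_s|+|X^*_s|+\int_t^s(|p^*|^2+|Z^*|^2+\|Z^*_2\|_{L^2}^2)\,du>n\}\wedge T$, prove the identity on $[t,\tau_n]$ where every integral is a genuine martingale, and pass to $n\to\infty$ by uniform integrability. The uniform estimates come from Lemma \ref{lemma1} (all $L^q$-moments of $\lambda$ bounded uniformly in $s$), Theorem \ref{remark:admissible} (so $X^*\in S_\mathcal F^2$ and the BSDE \eqref{p*} is well posed with $(p^*,Z^*,Z^*_2)$ in the stated spaces, with the usual $L^2$ estimate), Burkholder--Davis--Gundy, and H\"older's inequality exploiting that the exponent is \emph{strictly} above $2$, so that three-way products such as $\hat\lambda_s p^*_s(\rho_2(s)-a^*_s)$ and $\hat\lambda_s(\rho_2(s)-a^*_s)^2$ stay integrable; the short perturbation window $[t,t+\epsilon]$ is precisely what renders all the non-leading contributions genuinely $o(\epsilon)$.
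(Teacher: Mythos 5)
Your proof is correct and follows the same route as the paper: decompose $\Delta X=X^{t,\epsilon}-X^*$ into a deterministic drift part and a stochastic (martingale) part, estimate their orders via Burkholder--Davis--Gundy and H\"older (exploiting the exponent $q>2$ against the unbounded $\hat\lambda$, which is exactly what Lemma \ref{lemma1} is for), expand $J$ around $X^*_T$, and pair the first-order term with the adjoint BSDE \eqref{p*} via It\^o's product rule. The paper's appendix records only the state decomposition and the two order estimates and then cites \cite{SG} for the remainder; your write-up supplies the duality step, the variance computation with $\Theta(s)$ read --- as it must be --- as the diagonal quadratic form with entries $\tfrac12 e^{2\int_s^T r_u\,du}\sigma(s)^2$ and $\tfrac12 e^{2\int_s^T r_u\,du}k_1\hat\lambda_s\mathbb{E}[z^2]$, and the localization/uniform-integrability step needed because the stochastic integrals produced by the product rule are a priori only local martingales; your $o(\epsilon)$ for the drift part is in fact the honest order here (the paper's $O(\epsilon^2)$ would need a uniform-in-$s$ second-moment bound on $\nu^\top\rho$ that the $\cup_{q>2}L^q_{\mathcal F}$ hypothesis does not directly supply, though the discrepancy is harmless for the theorem's $o(\epsilon)$ remainder).
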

	
	Theorem \ref{theorem1} asserts that the leading order term of $\liminf$ consists of two parts. The first part is a functional of $\Lambda$, and the second is a functional of $\Theta$. Then, it is sufficient to deduce an equilibrium control $u^*$ that makes both parts nonnegative. From this expression, however, it is clear that $\Theta \geq 0$. Therefore, we present the following proposition. 
	\begin{pro}\label{prop1}
		Under the same assumption applied to Theorem \ref{theorem1},  $(\pi^*(t), a^*(t)) \in H_\mathcal{F}^2(0,T;\mathbb{R}, \mathbb{P})\times \cup_{q>2}L_\mathcal{F}^q(0,T;\mathcal{D}, \mathbb{P})$ is an equilibrium control if 
		\[ \liminf_{ \epsilon \downarrow 0}\frac{1}{\epsilon}\int_{t}^{t + \epsilon} \mathbb{E}_t[\langle \Lambda(s;t ), \rho_s \rangle]ds\geq 0, ~ a.s.,~ \forall t \in [0,T] \] 
		for any  $\rho_1 \in H_\mathcal{F}^2(t,T;\mathbb{R}, \mathbb{P})$ and $\rho_2 \in \cup_{q>2}L_\mathcal{F}^q(t,T;\mathcal{D}, \mathbb{P})$, where $\Lambda(s; t)$ is defined as in \eqref{Lambda}, and $\rho_s = (\rho_1, \rho_2- a^*_s)^\top$.
	\end{pro}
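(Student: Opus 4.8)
The plan is to read the claim directly off the second-order expansion in Theorem \ref{theorem1}, exploiting only that the quadratic term $\Theta$ is nonnegative. Fix $t\in[0,T)$, $\rho_1\in H_\mathcal{F}^2(t,T;\mathbb{R},\mathbb{P})$ and $\rho_2\in\cup_{q>2}L_\mathcal{F}^q(t,T;\mathcal{D},\mathbb{P})$, and write $\rho_s=(\rho_1,\rho_2(s)-a_s^*)^\top$ as in \eqref{perturbation}. The first step is bookkeeping: the perturbed pair $(\pi^{t,\epsilon,\rho_1},a^{t,\epsilon,\rho_2})$ still lies in $H_\mathcal{F}^2(0,T;\mathbb{R},\mathbb{P})\times\cup_{q>2}L_\mathcal{F}^q(0,T;\mathcal{D},\mathbb{P})$ — in the first coordinate it is $\pi^*$ plus an $H^2$ process supported on $[t,t+\epsilon)$, and in the second it coincides with $\rho_2$ on $[t,t+\epsilon)$ and with $a^*$ elsewhere, so it still belongs to $L_\mathcal{F}^q$ for some $q>2$ — and therefore, by Theorem \ref{remark:admissible}, both $X^*$ and the perturbed state process belong to $S_\mathcal{F}^2(t,T;\mathbb{R},\mathbb{P})$. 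In particular the two quantities appearing in \eqref{liminf} are well-defined finite $\mathcal{F}_t$-measurable random variables and Theorem \ref{theorem1} applies.

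Next I would invoke Theorem \ref{theorem1} to decompose, for each $\epsilon>0$,
\[\frac{J(t,X_t^*;\pi^{t,\epsilon,\rho_1},a^{t,\epsilon,\rho_2})-J(t,X_t^*;\pi^*,a^*)}{\epsilon}=A_\epsilon+B_\epsilon+C_\epsilon,\]
with $A_\epsilon:=\tfrac{1}{\epsilon}\int_t^{t+\epsilon}\mathbb{E}_t\big[\langle\Lambda(s;t),\rho_s\rangle\big]ds$, $B_\epsilon:=\tfrac{1}{\epsilon}\mathbb{E}_t\int_t^{t+\epsilon}\langle\Theta(s)\rho_s,\rho_s\rangle ds$, and $C_\epsilon:=o(\epsilon)/\epsilon$. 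Passing from $\mathbb{E}_t\int_t^{t+\epsilon}\langle\Lambda(s;t),\rho_s\rangle ds$ to the form in $A_\epsilon$ uses Fubini's theorem; its hypotheses hold because $p^*(\cdot;t)\in L_\mathcal{F}^2$, $Z^*(\cdot;t)\in H_\mathcal{F}^2$, $Z_2^*\in F^2$ and $\rho\in H_\mathcal{F}^2\times\cup_{q>2}L_\mathcal{F}^q$, while the unbounded factor $\hat{\lambda}_s$ entering $\Lambda$ through $\nu_2$ and $\delta(dz)$ has moments of every order by Lemma \ref{lemma1}, so $\langle\Lambda(\cdot;t),\rho\rangle\in L^1([t,t+\epsilon]\times\Omega)$.

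The final step uses the sign of $\Theta$. As noted just before the statement, $\Theta(s)$ is a positive multiple of $\sigma(s)^2+k_1\hat{\lambda}_s\mathbb{E}[z^2]\geq0$, hence positive semidefinite, so $\langle\Theta(s)\rho_s,\rho_s\rangle\geq0$ pathwise and thus $B_\epsilon\geq0$ a.s.; also $C_\epsilon\to0$ as $\epsilon\downarrow0$. By superadditivity of $\liminf$ (legitimate since $C_\epsilon$ converges),
\[\liminf_{\epsilon\downarrow0}\frac{J(t,X_t^*;\pi^{t,\epsilon,\rho_1},a^{t,\epsilon,\rho_2})-J(t,X_t^*;\pi^*,a^*)}{\epsilon}\geq\liminf_{\epsilon\downarrow0}A_\epsilon+\liminf_{\epsilon\downarrow0}B_\epsilon+\lim_{\epsilon\downarrow0}C_\epsilon\geq0,\]
where the last inequality combines $B_\epsilon\geq0$, $C_\epsilon\to0$, and the standing hypothesis $\liminf_{\epsilon\downarrow0}A_\epsilon\geq0$ a.s.\ for every $t$. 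Since $t,\rho_1,\rho_2$ were arbitrary, \eqref{liminf} holds and $(\pi^*,a^*)$ is an equilibrium control in the sense of Definition \ref{def2}.

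I do not anticipate a genuine obstacle: the substance is entirely in Theorems \ref{theorem1} and \ref{remark:admissible}, and what remains is elementary. The points needing care are the three routine verifications above — admissibility of the perturbed controls so that $J$ is finite and Theorem \ref{theorem1} is applicable, the Fubini interchange justified via the function-space memberships of $(p^*,Z^*,Z_2^*)$ together with the moment bound of Lemma \ref{lemma1} controlling the unbounded $\hat{\lambda}$, and the positive semidefiniteness of $\Theta$ — after which the conclusion is just superadditivity of $\liminf$ and $o(\epsilon)/\epsilon\to0$.
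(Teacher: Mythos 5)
Your proof is correct and follows essentially the same route as the paper: the paper itself presents Proposition \ref{prop1} as an immediate consequence of the expansion in Theorem \ref{theorem1} together with the pointwise nonnegativity $\Theta(s)\ge 0$, and the appendix proof (shared with Theorem \ref{theorem1}) is just the perturbation argument plus a reference to the method of \cite{SG}. You make the same observation — $\Theta(s)$ is a nonnegative scalar so the quadratic term is nonnegative — and formalize the conclusion via superadditivity of $\liminf$; the extra admissibility and Fubini remarks are routine and consistent with the paper's setup.
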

	
	In other words, it is sufficient for us to consider the non-negativity of the functional of $\Lambda$, which involves $(p^*, Z^*, Z_2^*)$ as a solution of BSDE \eqref{p*}. In fact, $\Lambda$ can be expressed more explicitly as follows; the proof is given in Appendix \ref{proofP2}.
	\begin{pro}\label{prop2}
		Let $(\pi^*(t), a^*(t)) \in H_\mathcal{F}^2(0,T;\mathbb{R}, \mathbb{P})\times \cup_{q>2}L_\mathcal{F}^q(0,T;\mathcal{D}, \mathbb{P})$ be a given control pair, and $X^*$ be the corresponding state process. For any $t_1, t_2\in[0, T]$, the unique solution to BSDE \eqref{p*} satisfies $Z^*(s;t_1) = Z^*(s;t_2)$ and $Z^*_2(s, \cdot;t_1) = Z^*_2(s,\cdot;t_2)$ for $a.e. ~ s \geq \max(t_1, t_2)$. Moreover, there exists a stochastic process $\Lambda_0$ valued in $\mathbb{R}^2$ and $\xi \in S_\mathcal{F}^2(t,T;\mathbb{R}, \mathbb{P})$ such that 
		\[ \Lambda(s; t) = \Lambda_0(s) + e^{\int_{s}^{T}r_udu}\nu(s)\xi(t).\]
	\end{pro}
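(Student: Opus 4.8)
The plan is to exploit the linearity of the state dynamics \eqref{X} and the BSDE \eqref{p*} to decouple the dependence of the adjoint triple on the two time indices $t_1$ and $t_2$. First I would observe that the terminal condition in \eqref{p*} splits as $p^*(T;t)=\bigl(X^*_T-\mathbb{E}[X^*_T]\bigr)-(\phi_1X^*_t+\phi_2)$, where only the second bracket carries the index $t$, and it is an $\mathcal{F}_t$-measurable constant as seen from time $s\geq t$. Consequently, for $s\geq\max(t_1,t_2)$ the two BSDEs for $p^*(\cdot;t_1)$ and $p^*(\cdot;t_2)$ have terminal data differing by the constant $\phi_1(X^*_{t_1}-X^*_{t_2})$, and their drivers are identical (both equal $-r_sp^*$). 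Hence the difference $p^*(s;t_1)-p^*(s;t_2)$ solves a linear BSDE with zero martingale part driving term; by uniqueness this difference equals $\phi_1(X^*_{t_1}-X^*_{t_2})e^{\int_s^T r_u\,du}$ multiplied by nothing random in $(Z,Z_2)$, which forces $Z^*(s;t_1)=Z^*(s;t_2)$ and $Z^*_2(s,\cdot;t_1)=Z^*_2(s,\cdot;t_2)$ for a.e.\ $s\geq\max(t_1,t_2)$. This establishes the first assertion.

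Next I would introduce a canonical ``reference'' adjoint process. Fix any $t_0\in[0,T]$ (say $t_0=0$) and let $(\bar p(s),\bar Z(s),\bar Z_2(s,z))$ solve \eqref{p*} with terminal condition $\bar p(T)=X^*_T-\mathbb{E}[X^*_T]-\phi_2$, i.e.\ the part of the terminal data not involving $X^*_t$. Then for a general $t$, on $s\in[t,T]$, the solution decomposes as $p^*(s;t)=\bar p(s)-\phi_1 X^*_t\, e^{\int_s^T r_u\,du}$, with $Z^*(s;t)=\bar Z(s)$ and $Z^*_2(s,z;t)=\bar Z_2(s,z)$, because the extra term $-\phi_1X^*_t e^{\int_s^T r_u\,du}$ is precisely the (deterministically discounted) $\mathcal{F}_t$-measurable constant that solves the homogeneous linear equation $d\xi_s=-r_s\xi_s\,ds$ with $\xi_T=-\phi_1X^*_t$ and no martingale part. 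Substituting this decomposition into the definition \eqref{Lambda} of $\Lambda$ gives
\[
\Lambda(s;t)=\Bigl(\nu_1\bar p(s)+\sigma(s)\bar Z_1(s),\ \nu_2\bar p(s)-\int_{\mathbb{R}_+}z\bar Z_2(s,z)\delta(dz)\Bigr)^\top -\phi_1 X^*_t\, e^{\int_s^T r_u\,du}\,\nu(s).
\]
Setting $\Lambda_0(s)$ equal to the first vector and $\xi(t):=-\phi_1 X^*_t$ yields exactly the claimed representation $\Lambda(s;t)=\Lambda_0(s)+e^{\int_s^T r_u\,du}\nu(s)\xi(t)$.

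It remains to verify the integrability claims: that $\Lambda_0$ is a well-defined $\mathbb{R}^2$-valued process and that $\xi\in S_\mathcal{F}^2(t,T;\mathbb{R},\mathbb{P})$. The latter is immediate from $\xi(t)=-\phi_1X^*_t$ together with $X^*\in S_\mathcal{F}^2$ (guaranteed by Theorem \ref{remark:admissible} since $u^*$ is admissible), so $\mathbb{E}[\sup_{t\leq s\leq T}|\xi(s)|^2]=\phi_1^2\,\mathbb{E}[\sup_{t\leq s\leq T}|X^*_s|^2]<\infty$. For $\Lambda_0$, one uses standard a priori estimates for the linear BSDE \eqref{p*} with jumps: since $X^*_T\in L^2$, the solution satisfies $\bar p\in S_\mathcal{F}^2$, $\bar Z\in H_\mathcal{F}^2$, $\bar Z_2\in F^2$, and then the two components of $\Lambda_0$ lie in appropriate $L^2$-type spaces after using boundedness of $\sigma$ and $r$, finiteness of the moments of $z$, and Lemma \ref{lemma1} to control the unbounded coefficient $\nu_2=\eta k_1\hat\lambda_s\mu_z$ via Cauchy--Schwarz (or Hölder) — this is where the moment bound $\sup_t\mathbb{E}[|\lambda_t|^q]<\infty$ for all $q\geq2$ is used. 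The main obstacle I anticipate is precisely this last integrability bookkeeping: because $\nu_2$ and $\delta(dz)$ both involve the unbounded non-Markovian $\hat\lambda$, one must be careful that products such as $\nu_2\bar p$ and $\int z\bar Z_2\,\delta(dz)$ are genuinely integrable, which requires combining the higher-moment estimates of $\lambda$ with the $L^2$ bounds on $(\bar p,\bar Z_2)$ via Hölder with well-chosen conjugate exponents rather than a naive $L^2\times L^2$ pairing; the well-posedness of the BSDE itself under these unbounded coefficients is the delicate point, but it is exactly what is established in the proof of Proposition \ref{prop1} / Theorem \ref{theorem1} in the appendix, which I would invoke.
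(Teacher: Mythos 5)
Your overall strategy coincides with the paper's: construct a single $t$-independent reference adjoint triple, then express $p^*(\cdot;t)$ as that reference plus a $t$-dependent, deterministic-in-$s$ offset, so that the $t$-dependence of $\Lambda$ factors through a single scalar $\xi(t)$. The paper does this by setting $\widetilde{p}(s;t)=e^{-\int_s^T r_u du}\,p^*(s;t)+\mathbb{E}_t[X^*_T]+\phi_1 X^*_t+\phi_2$, observing that $\widetilde{p}$ is a driftless BSDE solution with terminal value $X^*_T$ (hence independent of $t$), and reading off $\xi(t)=-(\mathbb{E}_t[X^*_T]+\phi_1 X^*_t+\phi_2)$.

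However, there is a concrete error in your version: you have misread the terminal condition of \eqref{p*}. The paper's BSDE terminates at $p^*(T;t)=X^*_T-\mathbb{E}_t[X^*_T]-(\phi_1 X^*_t+\phi_2)$, where $\mathbb{E}_t[\cdot]=\mathbb{E}[\cdot\mid\mathcal{F}_t]$ is the \emph{conditional} expectation, so both $\mathbb{E}_t[X^*_T]$ and $\phi_1 X^*_t$ carry the index $t$ — not just the second bracket as you write. As a consequence your claimed decomposition $p^*(s;t)=\bar p(s)-\phi_1 X^*_t\,e^{\int_s^T r_u du}$ fails already at $s=T$: the left-hand side equals $X^*_T-\mathbb{E}_t[X^*_T]-\phi_1X^*_t-\phi_2$, whereas your right-hand side equals $X^*_T-\mathbb{E}[X^*_T]-\phi_2-\phi_1X^*_t$, and these differ by the nonconstant random variable $\mathbb{E}[X^*_T]-\mathbb{E}_t[X^*_T]$. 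Correspondingly your $\xi(t)=-\phi_1X^*_t$ is wrong; the correct choice (modulo an irrelevant additive constant that can be shifted into $\Lambda_0$) is $\xi(t)=-(\mathbb{E}_t[X^*_T]+\phi_1X^*_t+\phi_2)$, which is still in $S^2_{\mathcal{F}}$ since $\mathbb{E}_t[X^*_T]$ is an $L^2$ martingale (Doob) and $X^*\in S^2_{\mathcal{F}}$. Your first paragraph happens to reach the right conclusion $\bar Z\equiv\bar Z_2\equiv 0$ on $[\max(t_1,t_2),T]$, but for the wrong reason: the relevant point is that the \emph{entire} $t$-dependent part of the terminal data, including $\mathbb{E}_t[X^*_T]$, is $\mathcal{F}_{\max(t_1,t_2)}$-measurable, not merely that $\phi_1 X^*_t$ is. Fixing the terminal condition repairs everything and aligns your argument with the paper's.
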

	
	By substituting the expression of $\Lambda$ from Proposition \ref{prop2} into Proposition \ref{prop1}, we obtain a sufficient and necessary condition for equilibrium control in the following theorem. This condition is useful for deducing an explicit solution to the corresponding equilibrium control.
	
	\begin{theorem}\label{theorem2}
		Let $(p^*(s; t), Z^*(s; t), Z^*_2(s,z;t)) \in L_\mathcal{F}^2(t,T;\mathbb{R}, \mathbb{P})\times H_\mathcal{F}^2(t,T;\mathbb{R}^2, \mathbb{P})\times F^2(t,T;\mathbb{R})$ be the unique solution to BSDE \eqref{p*},  then  $(\pi^*(t), a^*(t)) \in H_\mathcal{F}^2(0,T;\mathbb{R}, \mathbb{P})\times \cup_{q>2}L_\mathcal{F}^q(0,T;\mathcal{D}, \mathbb{P})$  is an open-loop equilibrium control if and only if, for any   $\rho_2 \in \cup_{q>2}L_\mathcal{F}^q(t,T;\mathcal{D}, \mathbb{P})$, 
		\begin{align}\label{condition}
			\left\{
			\begin{array}{lr}
				\nu_1(t)p^*(t; t) + \sigma(t)Z^*_1(t;t) = 0,~ a.e.~ t \in [0, T], a.s.,\\ 
				\langle\nu_2(t)p^*(t; t) - \int_{\mathbb{R}_+}z Z^*_2(t, z; t)\delta(dz),  \rho_2(t) - a^*_t\rangle \geq 0~ a.e.~ t \in [0, T], a.s..
			\end{array}\right.
		\end{align}
	\end{theorem}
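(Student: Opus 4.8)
\emph{Proof proposal.} The plan is to prove the two implications separately. In both directions the crucial step is a \emph{diagonalization}: the $\liminf_{\epsilon\downarrow0}$ appearing in Definition~\ref{def2} and Proposition~\ref{prop1} is reduced to a pointwise‑in‑$t$ condition along $s=t$ by Lebesgue differentiation, and the decomposition $\Lambda(s;t)=\Lambda_0(s)+e^{\int_s^T r_u du}\nu(s)\xi(t)$ of Proposition~\ref{prop2} is used to peel off the only $\mathcal{F}_t$‑measurable, $t$‑dependent ingredient $\xi(t)$. Write $\Lambda=(\Lambda^{(1)},\Lambda^{(2)})^\top$ for the two components in \eqref{Lambda} and recall $\rho_s=(\rho_1,\rho_2(s)-a^*_s)^\top$.

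\emph{Necessity.} Suppose $(\pi^*,a^*)$ is an equilibrium control. By Theorem~\ref{theorem1} and Definition~\ref{def2}, for every $t\in[0,T)$ and every admissible perturbation,
\[
\liminf_{\epsilon\downarrow0}\frac{1}{\epsilon}\,\mathbb{E}_t\!\int_t^{t+\epsilon}\!\big[\langle\Lambda(s;t),\rho_s\rangle+\langle\Theta(s)\rho_s,\rho_s\rangle\big]\,ds\ \ge\ 0,\qquad a.s.
\]
To obtain the first relation in \eqref{condition} I would take $\rho_2=a^*$, so the reinsurance perturbation in \eqref{perturbation} vanishes on $[t,t+\epsilon)$ and $\rho_s=(v,0)^\top$ with $\rho_1\equiv v\in\mathbb{R}$; dividing by $\epsilon$, letting $\epsilon\downarrow0$ and applying the diagonalization lemma yields $v\,\Lambda^{(1)}(t;t)+v^2\Theta(t)\ge0$ for a.e.\ $t$, a.s.\ Running $v$ through a countable dense subset of $\mathbb{R}$ (and using $\Theta(t)<\infty$) forces $\Lambda^{(1)}(t;t)=0$, which by \eqref{Lambda} is the first relation in \eqref{condition}. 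For the second relation I would use the convexity of $\mathcal{D}$ ($\mathcal{D}=\mathbb{R}_+$ or $[0,1]$ in the two cases): for a constant $d\in\mathcal{D}$ and $c\in(0,1]$ the control $\rho_2=a^*+c(d-a^*)$ is again admissible; inserting it together with $\rho_1=0$, letting $\epsilon\downarrow0$ for fixed $c$ and then $c\downarrow0$---which removes the now $O(c)$ quadratic term---gives $\Lambda^{(2)}(t;t)(d-a^*_t)\ge0$ for a.e.\ $t$, a.s.\ Since $d$ ranges over a countable dense subset of $\mathcal{D}$ and $a^*_t\in\mathcal{D}$, this is precisely the second relation in \eqref{condition} for every $\rho_2\in\cup_{q>2}L_\mathcal{F}^q(t,T;\mathcal{D},\mathbb{P})$.

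\emph{Sufficiency.} Conversely, assume \eqref{condition}. By Proposition~\ref{prop1} it suffices to show $\liminf_{\epsilon\downarrow0}\frac{1}{\epsilon}\int_t^{t+\epsilon}\mathbb{E}_t[\langle\Lambda(s;t),\rho_s\rangle]\,ds\ge0$ for every admissible $(\rho_1,\rho_2)$. By Proposition~\ref{prop2},
\[
\langle\Lambda(s;t),\rho_s\rangle=\langle\Lambda_0(s),\rho_s\rangle+e^{\int_s^T r_u du}\langle\nu(s),\rho_s\rangle\,\xi(t),
\]
so the diagonalization lemma, applied to the adapted integrand $\langle\Lambda_0(s),\rho_s\rangle$ and to the continuous factor $e^{\int_s^T r_u du}\nu(s)$, shows that this average converges as $\epsilon\downarrow0$ to $\langle\Lambda(t;t),\rho_t\rangle=\Lambda^{(1)}(t;t)\rho_1+\Lambda^{(2)}(t;t)(\rho_2(t)-a^*_t)$ for a.e.\ $t$, a.s.\ Under \eqref{condition} the first summand is zero and the second is non‑negative, so the limit---hence the $\liminf$---is $\ge0$, and Proposition~\ref{prop1} gives the claim.

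The step requiring the most care, and the main obstacle, is the diagonalization lemma: for an $\{\mathcal{F}_s\}$‑adapted process $\eta$ with sufficient integrability, $\frac{1}{\epsilon}\int_t^{t+\epsilon}\mathbb{E}_t[\eta(s)]\,ds\to\eta(t)$ for a.e.\ $(t,\omega)$ as $\epsilon\downarrow0$. Because $\Lambda_0$ carries the BSDE martingale integrands $Z^*,Z^*_2$ of \eqref{p*}, which lie only in $H_\mathcal{F}^2$ and $F^2$ and possess no pathwise regularity, this cannot be done $\omega$ by $\omega$. Instead I would write $\frac{1}{\epsilon}\int_t^{t+\epsilon}\mathbb{E}_t[\eta(s)]\,ds=\mathbb{E}_t\big[\frac{1}{\epsilon}\int_t^{t+\epsilon}\eta(s)\,ds\big]$, invoke the classical Lebesgue differentiation theorem to get $\frac{1}{\epsilon}\int_t^{t+\epsilon}\eta(s)\,ds\to\eta(t)$ a.e.\ $(t,\omega)$, dominate the averages uniformly in $\epsilon$ by the one‑sided Hardy--Littlewood maximal function of $s\mapsto|\eta(s)|$, and apply conditional dominated convergence. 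Obtaining a dominating function in $L^p$ for some $p>1$ is exactly where the strictly‑better‑than‑$L^2$ class $\cup_{q>2}L_\mathcal{F}^q$ for the reinsurance controls is needed---via H\"older against the unbounded $\hat\lambda$ that enters $\nu_2$, $\Theta$ and $\delta(dz)$---together with a truncation/density argument for the merely $H_\mathcal{F}^2$‑integrable investment perturbations $\rho_1$, the bound $X^*\in S_\mathcal{F}^2$ from Theorem~\ref{remark:admissible}, and $\sup_t\mathbb{E}[|\lambda_t|^q]<\infty$ from Lemma~\ref{lemma1}. Proposition~\ref{prop2} is what makes this feasible, since it confines all $t$‑dependence to $\xi(t)\in\mathcal{F}_t$; the remaining average $\frac{1}{\epsilon}\int_t^{t+\epsilon}e^{\int_s^T r_u du}\nu(s)\,ds\to e^{\int_t^T r_u du}\nu(t)$ then follows from continuity of $r$ and $\nu_1$, with the $\nu_2=\eta k_1\hat\lambda_s\mu_z$ component covered by continuity of $\lambda$ and Lemma~\ref{lemma1}.
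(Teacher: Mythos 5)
Your architecture is the same as the paper's: reduce the $\liminf$ of time averages to a pointwise condition along the diagonal $s=t$, use Proposition~\ref{prop2} to isolate the only $t$‑dependent factor $\xi(t)\in\mathcal{F}_t$, and kill the quadratic term by a small‑perturbation scaling. Your necessity direction is essentially identical in substance. The paper delegates the pointwise localization to \cite{SG} (Theorem~3.1) for the $\pi$‑component and to Lemma~3.5 of \cite{HHLb} for the $a$‑component, then scales the constrained perturbation by $\theta\downarrow0$; you instead prove the localization by hand with \emph{constant} perturbations and scale by $c\downarrow0$ via convexity of $\mathcal{D}$. Because your perturbations are bounded, the Lebesgue‑differentiation/maximal‑function machinery you invoke works cleanly there, and letting $v$, $d$ range over countable dense sets recovers \eqref{condition}.

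Where you take a genuinely different (and riskier) route is sufficiency. You attempt to prove full convergence $\frac{1}{\epsilon}\int_t^{t+\epsilon}\mathbb{E}_t[\langle\Lambda(s;t),\rho_s\rangle]\,ds\to\langle\Lambda(t;t),\rho_t\rangle$, which is more than needed and is also where your maximal‑function argument becomes fragile: $\Lambda_0$ carries $Z^*\in H_\mathcal{F}^2$ and $Z_2^*\in F^2$, and $\rho_1\in H_\mathcal{F}^2$, so $\langle\Lambda_0(s),\rho_s\rangle$ is a priori only in $L^1(\Omega\times[0,T])$; the Hardy--Littlewood maximal function of an $L^1$ function need not be $L^1$, so conditional dominated convergence does not directly apply, and a density/truncation argument is not obviously stable under $\liminf_{\epsilon\downarrow0}$. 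The paper avoids this entirely: under \eqref{condition}, $\langle\Lambda(s;s),\rho_s\rangle\ge0$ for a.e.\ $s$ directly (first component vanishes, second is nonnegative by the constrained inequality), so $\frac{1}{\epsilon}\int_t^{t+\epsilon}\mathbb{E}_t[\langle\Lambda(s;s),\rho_s\rangle]\,ds\ge0$ for \emph{every} $\epsilon>0$ with no limit taken. Only the difference between the $\Lambda(s;t)$ and $\Lambda(s;s)$ averages, which by Proposition~\ref{prop2} collapses to $\frac{1}{\epsilon}\int_t^{t+\epsilon}\mathbb{E}_t[e^{\int_s^T r_u du}\langle\nu(s),\rho_s\rangle(\xi(t)-\xi(s))]\,ds$, needs to vanish in the $\liminf$ (equation \eqref{limLam}); this term has much more structure than a generic adapted $L^1$ integrand. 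I would replace your sufficiency paragraph with the paper's shift‑then‑pointwise‑nonnegativity argument, keeping your necessity argument essentially as is, and keep the maximal‑function lemma only where the perturbations are bounded.
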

	Although the proof of Theorem \ref{theorem2} is presented in Appendix \ref{appendix:theorem3}, we explain here the use of this theorem to derive an admissible equilibrium control $(\pi^*, a^*)$. In this case, we must first solve for the flow of BSDEs in \eqref{p*} for $(p^*, Z^*, Z_2^*)$ upon the existence of a solution. Here, Theorem \ref{theorem2} is useful for verifying a potential solution to $p^*$. Once an appropriate solution is determined, Theorem \ref{theorem2} is also useful for extracting the corresponding equilibrium control $(\pi^*, a^*)$. Although this procedure is similar to others reported in the literature, we encounter the Volterra mortality rate when applying Theorem \ref{theorem2}. Interestingly, this rate has no effect on the case of constant risk aversion ($\phi_1 = 0$), such that the classic equilibrium control also works in the VMM. We further prove for the first time that such an equilibrium is unique. However, the proof under our consideration is non-trivial, although the equilibrium control agrees with the classic one without LRD. The state-dependent case ($\phi_1>0$) presents an additional challenge.
	
	\section{Equilibrium strategy under constant risk aversion}\label{subsec:CRA}
	Constant risk aversion refers to a setting wherein $\phi_1 = 0$ and $\phi_2 \geq 0$ in \eqref{p*}. Here, let $\mathcal{D} = \mathbb{R}_+$. Note that the reinsurance strategy should obey the non-negative constraint. When $\phi_1=0$, the terminal condition in \eqref{p*} does not involve $X^*_t$; therefore, the derivation procedure is very similar to the one discussed in \cite{YW}. Fortunately, in the case with a zero $\phi_1$, the derivation procedure eliminates the Volterra mortality rate from the solution form considered below in \eqref{pform}. As the BSDE in \eqref{p*} is linear, the solution is unique. 
	
	Inspired by \cite{YW}, consider the form of the solution to $p^*(s;t)$ as follows:
	\begin{equation}\label{pform}
		p^*(s;t) = M_sX^*_s + \Gamma_s^{(2)} - \mathbb{E}_t[M_sX^*_s + \Gamma_s^{(2)} - \Phi_s], 
	\end{equation}
	where $M_s = e^{\int_{s}^{T}2r_udu}$, $\Phi_s = -\phi_2e^{\int_{s}^{T}r_udu}$, and $(\Gamma_s^{(2)},\gamma_s^{(2)})$ is the solution to the following BSDE: 
	\begin{align}\label{BSDEgamma}
		d\Gamma_s^{(2)} = - F^{(2)}_sds + \gamma^{(2)}_sdW_s, ~ \Gamma^{(2)}_T = -\phi_2 .
	\end{align}
	Here, the driver $F_s^{(2)}$ and $\gamma_s^{(2)} = (\gamma^{(2)}_0(s), \gamma^{(2)}_1(s))^\top$ remain to be determined. For a fixed $t$, we apply the It\^o lemma, with jumps at time point $s$, to the solution form \eqref{pform}, and then compare the coefficients of the resulting SDE to those in \eqref{p*}. This enables us to identify the driver $F_s^{(2)}$.
	
	After substituting the result obtained for $p^*$ with an appropriate driver $F_s^{(2)}$ into Theorem \ref{theorem2}, we deduce that
	\begin{align}
		\pi_s^* &= - \frac{1}{M_s\sigma^2}(\nu_1\Phi_s + \sigma \gamma_1^{(2)}(s))= - \frac{(\mu(s) - r_s)\Phi_s}{M_s\sigma^2} - \frac{\gamma_1^{(2)}(s)}{M_s\sigma}, \label{pi*}\\
		a^*_s &= -\frac{\nu_2\Phi_s}{k_1\hat{\lambda}_sM_s\mathbb{E}[z^2]} =  -\frac{\eta \mu_z\Phi_s}{M_s\mathbb{E}[z^2]} > 0, \label{a*}
	\end{align}
	where $(\Gamma^{(2)}, \gamma^{(2)})$ is the solution to the BSDE as follows:
	\begin{align}\label{BSDEgamma2}
		\left\{
		\begin{array}{lr}
			d\Gamma^{(2)}_s = -\left\{r_s\Gamma^{(2)}_s - \frac{\nu_1}{\sigma} \gamma_1^{(2)}(s)- \left(\frac{\nu_1^2}{\sigma^2} + \frac{\nu_2^2}{k_1\hat{\lambda}_s\mathbb{E}[z^2]}\right)\Phi_s + M_sc_s\right\}ds+ \gamma_s^{(2)}dW_s, \\
			\Gamma_T^{(2)} = -\phi_2.
		\end{array}\right.
	\end{align} 
	As $\hat{\lambda}$ is the only stochastic unbounded coefficient in the BSDE \eqref{BSDEgamma2}, we consider the case in which $\gamma_1^{(2)} = 0$ and obtain a BSDE with a unique solution. 
	
	\begin{pro}\label{pro:gam0}
		At a set $\gamma_1^{(2)} = 0$, the BSDE \eqref{BSDEgamma2} admits a unique solution  $(\Gamma^{(2)}, \gamma^{(2)}_0)\in S_\mathcal{F}^q(0,T;\mathbb{R}, \mathbb{P})\times H_\mathcal{F}^q(0,T;\mathbb{R}, \mathbb{P})$ for any $q >2$. 
	\end{pro}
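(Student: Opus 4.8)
The plan is to read \eqref{BSDEgamma2} with $\gamma_1^{(2)}\equiv 0$ as a one--dimensional \emph{linear} BSDE driven by $W_0$ only, solve it through the explicit integrating--factor representation, and control the integrability using Lemma \ref{lemma1}.

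First I would simplify the driver. The coefficients $\nu_1=\mu-r$, $\sigma$, $\Phi$ and $M$ are bounded deterministic functions, while $\nu_2(s)=\eta k_1\hat\lambda_s\mu_z$ and $c_s=(\theta-\eta)k_1\hat\lambda_s\mu_z$; the key point is the cancellation
\[
\frac{\nu_2(s)^2}{k_1\hat\lambda_s\mathbb{E}[z^2]}=\frac{\eta^2 k_1\mu_z^2}{\mathbb{E}[z^2]}\,\hat\lambda_s ,
\]
which removes the singular factor $1/\hat\lambda_s$. Hence, after setting $\gamma_1^{(2)}\equiv 0$, \eqref{BSDEgamma2} becomes
\[
d\Gamma^{(2)}_s=-\bigl(r_s\Gamma^{(2)}_s+g_s\bigr)\,ds+\gamma^{(2)}_0(s)\,dW_0(s),\qquad \Gamma^{(2)}_T=-\phi_2,
\]
with $g_s:=-\tfrac{\nu_1^2}{\sigma^2}\Phi_s+\bigl(M_s(\theta-\eta)k_1\mu_z-\tfrac{\eta^2k_1\mu_z^2}{\mathbb{E}[z^2]}\Phi_s\bigr)\hat\lambda_s=\alpha_s+\beta_s\hat\lambda_s$ for bounded deterministic $\alpha,\beta$. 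Since $\hat\lambda$ is $\sigma(W_0)$--measurable and the terminal datum $-\phi_2$ is deterministic, no $dW_1$ or $\widetilde N$ term can appear, which is precisely why the ansatz $\gamma_1^{(2)}\equiv 0$ (and the absence of a jump component) is consistent here.

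Next, $\hat\lambda_s=l(s)+\lambda_s$ with $l$ bounded, so Lemma \ref{lemma1} gives, for every $q\ge2$,
\[
\mathbb{E}\!\int_0^T|g_s|^q\,ds\le C_q\int_0^T\bigl(1+\mathbb{E}[|\lambda_s|^q]\bigr)\,ds<\infty,
\]
i.e. $g\in L_\mathcal{F}^q(0,T;\mathbb{R},\mathbb{P})$, and the deterministic terminal value lies in every $L^q$. I would then set
\[
\Gamma^{(2)}_s:=-\phi_2\,e^{\int_s^T r_u du}+\mathbb{E}_s\!\left[\int_s^T e^{\int_s^u r_v dv}\,g_u\,du\right],
\]
and define $\gamma^{(2)}_0$ through the $W_0$--martingale representation $dN_s=\zeta_s\,dW_0(s)$ of $N_s:=\mathbb{E}_s\!\left[\int_0^T e^{\int_0^u r_v dv}g_u\,du\right]$, namely $\gamma^{(2)}_0(s):=e^{-\int_0^s r_v dv}\zeta_s$. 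Applying It\^o's formula to $e^{\int_0^s r_u du}\Gamma^{(2)}_s$ verifies that $(\Gamma^{(2)},\gamma^{(2)}_0)$ solves the BSDE. For the norms: Doob's maximal inequality on the conditional--expectation term gives $\mathbb{E}[\sup_{s\le T}|\Gamma^{(2)}_s|^q]\le C_q'\bigl(1+\mathbb{E}[(\int_0^T|g_u|\,du)^q]\bigr)<\infty$, so $\Gamma^{(2)}\in S_\mathcal{F}^q(0,T;\mathbb{R},\mathbb{P})$; and the Burkholder--Davis--Gundy inequality combined with Doob's inequality for $N$ gives $\mathbb{E}[(\int_0^T|\gamma^{(2)}_0(s)|^2ds)^{q/2}]\le C_q''\,\mathbb{E}[\sup_{s\le T}|N_s|^q]\le C_q'''\,\mathbb{E}[|N_T|^q]<\infty$, so $\gamma^{(2)}_0\in H_\mathcal{F}^q(0,T;\mathbb{R},\mathbb{P})$, for every $q>2$.

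For uniqueness, the difference $(\Delta\Gamma,\Delta\gamma_0)$ of two $S_\mathcal{F}^2\times H_\mathcal{F}^2$ solutions solves the homogeneous linear BSDE $d\Delta\Gamma_s=-r_s\Delta\Gamma_s\,ds+\Delta\gamma_0(s)\,dW_0(s)$ with $\Delta\Gamma_T=0$; the same integrating--factor identity forces $\Delta\Gamma_s=\mathbb{E}_s[e^{\int_s^T r_u du}\cdot 0]=0$ and then $\Delta\gamma_0=0$ a.e. As $q>2$ implies $S_\mathcal{F}^q\subset S_\mathcal{F}^2$ and $H_\mathcal{F}^q\subset H_\mathcal{F}^2$, the solution built above is the unique one in every $S_\mathcal{F}^q\times H_\mathcal{F}^q$. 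I do not expect a genuine obstacle here; the only steps needing attention are the cancellation that eliminates $1/\hat\lambda_s$ and the use of Lemma \ref{lemma1} to handle the now merely affine (rather than bounded) inhomogeneity $g_s$.
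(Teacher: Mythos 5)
Your proof is correct and shares the two key observations with the paper's own argument: the algebraic cancellation $\nu_2(s)^2/(k_1\hat\lambda_s\mathbb{E}[z^2]) = \eta^2\mu_z^2 k_1\hat\lambda_s/\mathbb{E}[z^2]$ that turns the driver into an affine function of $\hat\lambda_s$, and the use of Lemma~\ref{lemma1} (together with H\"older/Jensen) to place this driver in $L^q_\mathcal{F}(0,T;\mathbb{R},\mathbb{P})$ for every $q>2$. Where you diverge is the last step: the paper simply invokes Theorem~5.1 of El~Karoui--Peng--Quenez~\cite{EPQ} as a black-box existence, uniqueness and $L^q$-estimate result for linear BSDEs with unbounded inhomogeneity, whereas you give an explicit integrating-factor representation $\Gamma^{(2)}_s = -\phi_2 e^{\int_s^T r_u\,du} + \mathbb{E}_s[\int_s^T e^{\int_s^u r_v\,dv} g_u\,du]$, extract $\gamma^{(2)}_0$ from the Brownian martingale representation of $N_s$, verify it is a solution by It\^o, and obtain the $S^q\times H^q$ bounds from Doob and BDG. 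Your route is more elementary and self-contained, and it also surfaces a point the paper leaves implicit: that the terminal datum and the driver are $\sigma(W_0)$-measurable, which is precisely why the ansatz $\gamma_1^{(2)}\equiv 0$ (and the absence of a jump integrand) is not merely convenient but consistent. The only place where you are slightly informal, as is the paper, is in passing from the full filtration $\{\mathcal{F}_s\}$ (which contains $W_1$ and the jump measure with $\hat\lambda$-dependent intensity) to the sub-filtration $\{\mathcal{F}^{W_0}_s\}$ when applying the Brownian martingale representation theorem --- this uses the mutual independence/immersion structure of the model. This does not constitute a gap relative to the paper's own level of rigor.
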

	\begin{proof}
		Recall that $\nu_1, r$, and $\gamma$ are deterministic functions;  $\frac{\nu_2^2}{k_1\hat{\lambda}\mathbb{E}[z^2]} = \frac{\eta^2\mu_z^2}{\mathbb{E}[z^2]}k_1\hat{\lambda}$; and $c = (\theta -\eta)\mu_zk_1\hat{\lambda}$.  According to H\"older's inequality and Lemma \ref{lemma1}, there exists a constant $C > 0$, such that  
		\[\mathbb{E}\left[\left(\int_{0}^{T}|\hat{\lambda}_s|ds\right)^q\right] \leq C \int_{0}^{T}\mathbb{E}\left[|\hat{\lambda}_s|^q\right]ds  < \infty,\]
		for any constant $q>2$. The result follows according to Theorem 5.1 in \cite{EPQ}.
	\end{proof}
	
	From \eqref{pi*} and \eqref{a*}, it is clear that $(\pi^*(t), a^*(t)) \in H_\mathcal{F}^2(0,T;\mathbb{R}, \mathbb{P})\times \cup_{q>2}L_\mathcal{F}^q(0,T;\mathbb{R}_+, \mathbb{P})$ when $\gamma_1^{(2)} = 0$. According to Theorem \ref{remark:admissible}, we obtain the following equilibrium reinsurance-investment strategy. 
	\begin{theorem}\label{ThmSolConstant}
		Let an explicit equilibrium control to Problem \eqref{objective} with a constant risk aversion $\phi_2>0$ be given as  
		\begin{equation}\label{u*c}
			\pi^*_s = \frac{\mu(s)- r_s}{\sigma(s)^2}\phi_2 e^{-\int_{s}^{T}r_udu}, ~ a^*_s = \frac{\eta\mu_z}{\mathbb{E}[z^2]}\phi_2e^{-\int_{s}^{T}r_udu}>0. 
		\end{equation}
	\end{theorem}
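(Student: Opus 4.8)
The plan is to verify directly that the pair $(\pi^*,a^*)$ displayed in \eqref{u*c} satisfies the necessary and sufficient condition of Theorem~\ref{theorem2}; the derivation that produced \eqref{pi*} and \eqref{a*} already supplies the candidate, so the proof is essentially an assembly of the pieces established above. First I would dispatch admissibility: since $r$, $\mu$, $\sigma$ are bounded deterministic functions, both $\pi^*_s$ and $a^*_s$ in \eqref{u*c} are bounded deterministic, hence $(\pi^*,a^*)\in H^2_{\mathcal F}(0,T;\mathbb{R},\mathbb{P})\times\cup_{q>2}L^q_{\mathcal F}(0,T;\mathbb{R}_+,\mathbb{P})$, and $a^*_s>0$ respects the constraint $\mathcal D=\mathbb{R}_+$; Theorem~\ref{remark:admissible} then yields the associated wealth process $X^*\in S^2_{\mathcal F}(0,T;\mathbb{R},\mathbb{P})$, so $(\pi^*,a^*)$ is an admissible control pair.

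Next I would exhibit the (unique, by linearity of \eqref{p*}) solution $(p^*,Z^*,Z^*_2)$ of the adjoint BSDE attached to $(\pi^*,a^*)$. With $\phi_1=0$ the terminal value is $p^*(T;t)=X^*_T-\mathbb{E}_t[X^*_T]-\phi_2$, which is reproduced by the ansatz \eqref{pform} at $s=T$ because $M_T=1$ and $\Phi_T=\Gamma^{(2)}_T=-\phi_2$, and at $s=t$ the ansatz collapses to $p^*(t;t)=\Phi_t=-\phi_2 e^{\int_t^T r_u\,du}$. Applying It\^o's formula with jumps to \eqref{pform} and matching the $ds$, $dW_s$ and $\widetilde N(ds,dz)$ terms against \eqref{p*} identifies the driver $F^{(2)}_s$ (the finite-variation $\mathbb{E}_t[\cdot]$ term feeding only the drift), the Brownian integrand $Z^*_1(s;t)=M_s\sigma(s)\pi^*_s+\gamma^{(2)}_1(s)$ and the jump integrand $Z^*_2(s,z;t)=-M_s a^*_s z$; substituting $\pi^*$ and $a^*$ gives the closed BSDE \eqref{BSDEgamma2}, and choosing $\gamma^{(2)}_1=0$ kills the $dW_1$ term so that the only unbounded ingredient left is $\hat\lambda$ entering the driver linearly. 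Proposition~\ref{pro:gam0}, which rests on the moment bound of Lemma~\ref{lemma1}, then furnishes a unique $(\Gamma^{(2)},\gamma^{(2)}_0)$, hence the unique solution triple of \eqref{p*}.

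Finally I would check the two items of \eqref{condition}. Using $p^*(t;t)=\Phi_t$, $\gamma^{(2)}_1=0$ and $Z^*_1(t;t)=M_t\sigma(t)\pi^*_t$, the first becomes $(\mu(t)-r_t)\Phi_t+M_t\sigma(t)^2\pi^*_t=0$, which is exactly \eqref{pi*} and is solved by the $\pi^*$ of \eqref{u*c}. For the second, $\int_{\mathbb{R}_+} z Z^*_2(t,z;t)\,\delta(dz)=-M_t a^*_t k_1\hat\lambda_t\mathbb{E}[z^2]$ while $\nu_2(t)p^*(t;t)=\eta k_1\hat\lambda_t\mu_z\Phi_t$, so the bracket in \eqref{condition} equals $k_1\hat\lambda_t\bigl(\eta\mu_z\Phi_t+M_t a^*_t\mathbb{E}[z^2]\bigr)$, which vanishes precisely for the $a^*$ of \eqref{u*c}; the inner product with $\rho_2(t)-a^*_t$ is then identically zero, hence $\ge 0$ for every admissible $\rho_2$, and Theorem~\ref{theorem2} concludes that $(\pi^*,a^*)$ is an open-loop equilibrium control. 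The step I expect to be the crux is this last cancellation: the same factor $k_1\hat\lambda_t$ occurs inside $\nu_2(t)$ and inside the compensator $\delta(dz)$, so it drops out of the equilibrium reinsurance level, which is the mechanism by which the unbounded non-Markovian mortality rate disappears from $(\pi^*,a^*)$; making this rigorous is exactly what the correct reading of $Z^*_2$ from the jump ansatz and the solvability of \eqref{BSDEgamma2} with an unbounded driver (both secured above) are needed for.
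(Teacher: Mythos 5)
Your proof is correct and follows essentially the same route as the paper: the paper derives \eqref{u*c} by plugging the ansatz \eqref{pform} into the adjoint BSDE \eqref{p*}, identifying $Z_1^*(t;t)=M_t\sigma(t)\pi^*_t$ and $Z_2^*(t,z;t)=-M_ta^*_tz$ via It\^o with jumps, setting $\gamma_1^{(2)}=0$, invoking Proposition~\ref{pro:gam0} for unique solvability, and reading the controls off condition~\eqref{condition}, with admissibility secured by Theorem~\ref{remark:admissible} exactly as you argue. Your explicit observation that $k_1\hat\lambda_t$ factors out of both $\nu_2(t)p^*(t;t)$ and the compensator integral, so that the reinsurance level is free of the Volterra mortality rate, is precisely the cancellation the paper relies on.
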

	Although we obtain an equilibrium reinsurance-investment strategy in Theorem \ref{ThmSolConstant}, it remains unclear whether this equilibrium is unique and whether the upper bound of the reinsurance strategy is 1. The latter boundedness condition eliminates the possibility that the insurer acquires a new (reinsurance) business. 
	
	\subsection{The $[0,1]$ interval constraint on reinsurance}
	\label{section:constrain}
	If we impose $a_t\in [0,1]$ as a bounded constraint on the reinsurance strategy, then we set $\mathcal{D} = [0, 1]$ in Theorems \ref{remark:admissible} to \ref{theorem2} and reiterate the derivation procedure over such a constraint, using \eqref{p*} and Theorem \ref{theorem2}. 
	
	Consider the solution form for $p^*(s;t)$, as in \eqref{pform}. For a fixed $t$,  we apply the It\^o formula at $s$ and find that $p^*(t; t) = \Phi(t)$, $Z_0^* = \gamma_0^{(2)}(t)$, $Z_1^*(t; t) = M(t)\pi^*(t)\sigma(t) + \gamma_1^{(2)}(t)$, and $Z_2^*(t, z; t) = -M(t)a^*(t)z$. Furthermore, we set $\gamma_1^{(2)}(t) = 0$. According to \eqref{condition},  
	\begin{align*}
		&\pi_s^* = - \frac{1}{M_s\sigma^2}\nu_1\Phi_s;\\
		&\langle \nu_2(t)\Phi(t) +M(t) \mathbb{E}[z^2]k_1\hat{\lambda}_ta^*(t), \rho_2(t)-a^*(t) \rangle \geq 0. 
	\end{align*}
	for any  $\rho_2 \in \cup_{q>2}L_\mathcal{F}^q(t,T;\mathcal{D}, \mathbb{P})$. Hence, we deduce that 
	\begin{equation}
		a^*(t) = {\rm Proj}_\mathcal{D}\left(-\frac{\nu_2\Phi_t}{k_1\hat{\lambda}_tM_t\mathbb{E}[z^2]}  \right) = {\rm Proj}_\mathcal{D}\left(-\frac{\eta \mu_z\Phi_t}{M_t\mathbb{E}[z^2]}\right), 
	\end{equation}
	where ${\rm Proj}_\mathcal{D}(x)$ represents the projection of a point $x \in \mathbb{R}$ onto $\mathcal{D}$. We refer to \cite{HL} for detailed information about projections.
	Thus, $(\Gamma^{(2)}, \gamma^{(2)}_0)$ becomes the solution to the BSDE.
	\begin{align}\label{BSDE:gamma22}
		\left\{
		\begin{array}{lr}
			d\Gamma_s^{(2)} = -\left\{r_s\Gamma_s^{(2)} - \frac{\nu_1^2}{\sigma^2}(s) +M_s\nu_2(s){\rm Proj}_\mathcal{D}\left(-\frac{\eta \mu_z\Phi_t}{M_t\mathbb{E}[z^2]}\right) +M_sc_s \right\}ds + \gamma^{(2)}_0(s)dW_0(s), \\
			\Gamma_T^{(2)} = -\phi_2. 
		\end{array}
		\right.
	\end{align}
	
	\begin{pro}
		The BSDE \eqref{BSDE:gamma22} admits a unique solution, $(\Gamma^{(2)}, \gamma^{(2)}_0)\in S_\mathcal{F}^q(0,T;\mathbb{R}, \mathbb{P})\times H_\mathcal{F}^q(0,T;\mathbb{R}, \mathbb{P})$, for any $q >2$. 
	\end{pro}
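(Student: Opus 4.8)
The plan is to read \eqref{BSDE:gamma22} as a \emph{linear} BSDE for the unknown $\Gamma^{(2)}$, driven only by $W_0$, with generator
\[
f(s,y) = r_s y - \frac{\nu_1^2}{\sigma^2}(s) + M_s\nu_2(s)\,{\rm Proj}_\mathcal{D}\!\left(-\frac{\eta\mu_z\Phi_s}{M_s\mathbb{E}[z^2]}\right) + M_s c_s,
\]
which has no dependence on the martingale integrand $\gamma^{(2)}_0$ and is Lipschitz in $y$ with the bounded deterministic constant $r_s$. The terminal datum $-\phi_2$ is a bounded constant, hence belongs to every $L^q$, so the only nontrivial point is the integrability of the affine (free) term $g_s := f(s,0)$, which carries the unbounded processes $\nu_2(s) = \eta k_1\mu_z\hat{\lambda}_s$ and $c_s = (\theta-\eta)k_1\mu_z\hat{\lambda}_s$. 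In this respect the situation is exactly that of Proposition \ref{pro:gam0}, with the extra nonlinear term $M_s\nu_2(s)\,{\rm Proj}_\mathcal{D}(\cdot)$ replacing the term $\frac{\nu_2^2}{k_1\hat\lambda\mathbb{E}[z^2]}\Phi_s$ appearing there.

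The key observation is that $\Phi_s = -\phi_2 e^{\int_s^T r_u du}$ and $M_s = e^{\int_s^T 2r_u du}$ are deterministic, so the argument of the projection is a deterministic function of $s$; since $\mathcal{D} = [0,1]$, the quantity ${\rm Proj}_\mathcal{D}(\cdot)$ is a deterministic function of $s$ taking values in $[0,1]$. Hence there is a constant $C>0$ with $|g_s| \le C\big(1 + \hat{\lambda}_s\big)$ for all $s\in[0,T]$, i.e.\ the nonlinearity contributes nothing beyond the linear-in-$\hat\lambda_s$ growth already present in Proposition \ref{pro:gam0}. Writing $\hat{\lambda}_s = l(s) + \lambda_s$ with $l$ bounded and invoking Lemma \ref{lemma1} gives $\sup_{0\le s\le T}\mathbb{E}[|\hat{\lambda}_s|^q]<\infty$ for every $q\ge 2$, and then Hölder's inequality yields $\mathbb{E}\big[(\int_0^T |g_s|\,ds)^q\big] \le C_q\int_0^T\mathbb{E}[(1+\hat\lambda_s)^q]\,ds < \infty$ for every $q>2$. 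With these $L^q$ bounds on the data, the standard existence and uniqueness theory for linear BSDEs (Theorem 5.1 in \cite{EPQ}, as already used in Proposition \ref{pro:gam0}) produces a unique solution $(\Gamma^{(2)},\gamma^{(2)}_0)\in S_\mathcal{F}^q(0,T;\mathbb{R},\mathbb{P})\times H_\mathcal{F}^q(0,T;\mathbb{R},\mathbb{P})$ for every $q>2$; uniqueness is in any case immediate from linearity.

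The one step that requires genuine care — and hence the main obstacle — is verifying that the projection term does not destroy $L^q$-integrability of $g$. This is resolved by noting that the argument of ${\rm Proj}_\mathcal{D}$ is deterministic, so the projection is bounded and the nonlinear term is $O(\hat\lambda_s)$, leaving the $\hat\lambda$-moment estimate of Lemma \ref{lemma1} as the only real input; everything else is a routine transcription of the argument used for Proposition \ref{pro:gam0}.
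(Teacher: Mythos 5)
Your proof is correct and takes essentially the same route as the paper: both reduce the question to observing that the free term of the linear BSDE grows at most linearly in $\hat\lambda_s$ (since $M_s$, $\Phi_s$ are deterministic and the projection onto $[0,1]$ is bounded), obtain the requisite $L^q$-integrability from Lemma \ref{lemma1} via H\"older's inequality, and then invoke Theorem 5.1 of \cite{EPQ}. Your explicit remark that the projection argument is deterministic and hence yields a bounded deterministic function is a useful clarification that the paper leaves implicit, but it is the same argument.
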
                                                              
	
	\begin{proof}
		Recall that $\nu_2 = \eta\mu_zk_1\hat{\lambda}$, and $c = (\theta -\eta)\mu_zk_1\hat{\lambda}$. Lemma \ref{lemma1} clearly demonstrates that $\hat{\lambda} \in H_\mathcal{F}^q(0,T;\mathbb{R}, \mathbb{P})$ for any $q>2$. The result follows according to Theorem 5.1 in \cite{EPQ}.
	\end{proof}
	
	As both $\pi_s^*$ and $a^*$ are bounded, $(\pi^*(t), a^*(t)) \in H_\mathcal{F}^2(0,T;\mathbb{R}, \mathbb{P})\times \cup_{q>2}L_\mathcal{F}^q(0,T;\mathcal{D}, \mathbb{P})$. The following theorem is immediate.
	
	\begin{theorem}
		An admissible equilibrium control on Problem \eqref{objective}, with constraint $[0,1]$ on the reinsurance control and a constant risk aversion $\phi_2>0$, is given by 
		\begin{equation}\label{u*constrain}
			\pi^*_s = \frac{\mu(s)- r_s}{\sigma(s)^2}\phi_2 e^{-\int_{s}^{T}r_udu}, ~ a^*_s = {\rm Proj}_\mathcal{D}\left( \frac{\eta\mu_z}{\mathbb{E}[z^2]}\phi_2e^{-\int_{s}^{T}r_udu}\right). 
		\end{equation}
	\end{theorem}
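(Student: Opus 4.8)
The plan is to check that the pair in \eqref{u*constrain} satisfies the necessary and sufficient characterisation of open-loop equilibrium controls in Theorem \ref{theorem2}, together with admissibility. First I would note that $\pi^*$ is a bounded deterministic function and $a^*$ is $\mathcal{D}$-valued with $\mathcal{D}=[0,1]$, hence bounded; consequently $(\pi^*,a^*)\in H_\mathcal F^2(0,T;\mathbb R,\mathbb P)\times\cup_{q>2}L_\mathcal F^q(0,T;\mathcal D,\mathbb P)$, and Theorem \ref{remark:admissible} immediately gives that the associated state process $X^*$ solves \eqref{X} and lies in $S_\mathcal F^2(0,T;\mathbb R,\mathbb P)$. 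This settles admissibility, so the remaining task is to produce the adjoint triple solving \eqref{p*} and to verify the two conditions in \eqref{condition}.

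Next I would construct the adjoint process from the ansatz \eqref{pform}, with $M_s=e^{\int_s^T 2r_u\,du}$, $\Phi_s=-\phi_2 e^{\int_s^T r_u\,du}$, and $(\Gamma^{(2)},\gamma^{(2)})$ solving a BSDE of type \eqref{BSDEgamma}. Applying the It\^o formula with jumps to \eqref{pform} along \eqref{X} and matching the drift, the Brownian integrands, and the jump integrand against \eqref{p*}, and then setting $\gamma_1^{(2)}\equiv 0$ to decouple the unbounded coefficient, one reads off $p^*(t;t)=\Phi_t$, $Z_0^*(t;t)=\gamma_0^{(2)}(t)$, $Z_1^*(t;t)=M_t\pi^*_t\sigma(t)$, $Z_2^*(t,z;t)=-M_t a^*_t z$, and the driver is pinned down so that $(\Gamma^{(2)},\gamma_0^{(2)})$ solves exactly \eqref{BSDE:gamma22}. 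Since the projection term in \eqref{BSDE:gamma22} is bounded and Lemma \ref{lemma1} gives $\hat\lambda\in H_\mathcal F^q(0,T;\mathbb R,\mathbb P)$ for every $q>2$, the proposition just established yields a unique $(\Gamma^{(2)},\gamma_0^{(2)})\in S_\mathcal F^q\times H_\mathcal F^q$; combined with moment bounds on $X^*$ and $\hat\lambda$, this places $(p^*,Z^*,Z_2^*)$ in $L_\mathcal F^2\times H_\mathcal F^2\times F^2$, and by linearity of \eqref{p*} it is the unique such solution.

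It then remains to verify \eqref{condition}. The first equation is $\nu_1(t)\Phi_t+\sigma(t)\bigl(M_t\pi^*_t\sigma(t)\bigr)=0$, which, after substituting $\Phi_t$ and $M_t$, is precisely the formula for $\pi^*$ in \eqref{u*constrain}. For the second, using $\delta(dz)=k_1\hat\lambda_t f(z)\,dz$ one computes $\int_{\mathbb R_+} z Z_2^*(t,z;t)\,\delta(dz)=-M_t k_1\hat\lambda_t\mathbb E[z^2]\,a^*_t$, so the inner-product condition becomes
\[
\bigl\langle\,\nu_2(t)\Phi_t+M_t k_1\hat\lambda_t\mathbb E[z^2]\,a^*_t,\ \rho_2(t)-a^*_t\,\bigr\rangle\ \ge\ 0 \quad\text{for every }\rho_2(t)\in[0,1].
\]
This is exactly the first-order variational inequality for the minimiser over $[0,1]$ of the strictly convex quadratic $a\mapsto \tfrac12 M_t k_1\hat\lambda_t\mathbb E[z^2]\,a^2+\nu_2(t)\Phi_t\,a$, whose unconstrained minimiser is $-\nu_2(t)\Phi_t/(M_t k_1\hat\lambda_t\mathbb E[z^2])=-\eta\mu_z\Phi_t/(M_t\mathbb E[z^2])$; hence the solution is the projection of that point onto $\mathcal D$, which is the $a^*_s$ of \eqref{u*constrain}. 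Both parts of \eqref{condition} therefore hold, and Theorem \ref{theorem2} concludes.

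On where the effort lies: every individual step is light once Theorems \ref{theorem2} and \ref{remark:admissible} are available, and the $[0,1]$ constraint only costs a standard appeal to the projection theorem. The one place that demands care is confirming that \eqref{pform} genuinely produces a solution of \eqref{p*} in the prescribed spaces — the It\^o-with-jumps coefficient matching has to be carried out cleanly, and the driver of \eqref{BSDE:gamma22} inherits the unbounded mortality rate through $\nu_2$ and $c_s$, so the $L^q$ moment estimate of Lemma \ref{lemma1} is what keeps $(\Gamma^{(2)},\gamma_0^{(2)})$, and hence $(p^*,Z^*,Z_2^*)$, in the correct integrability class.
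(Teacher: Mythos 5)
Your proposal is correct and mirrors the paper's own argument: the same ansatz \eqref{pform} with $\gamma_1^{(2)}\equiv 0$, the same It\^o-with-jumps coefficient matching that yields $p^*(t;t)=\Phi_t$, $Z_1^*(t;t)=M_t\pi^*_t\sigma(t)$, $Z_2^*(t,z;t)=-M_t a^*_t z$, verification of \eqref{condition} with the second inequality read as the variational inequality for projection onto $[0,1]$, the BSDE \eqref{BSDE:gamma22} handled via Lemma \ref{lemma1} and Theorem 5.1 of El Karoui--Peng--Quenez, and admissibility from the boundedness of both components combined with Theorem \ref{remark:admissible}. You spell out the projection step slightly more explicitly than the paper, but the route and the key ingredients are identical.
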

	
	The strategies in \eqref{u*constrain} are implemented in a straightforward manner. In fact, the investment-reinsurance strategy $(\pi^*_t, a^*_t)$ is exactly the same as that in Theorem \ref{ThmSolConstant} when $a^*_t \le 1$. Otherwise, if $a^*_t \le 1$ in Theorem \ref{ThmSolConstant}, we obtain exactly the same $\pi^*$ but must return $a^*_t$ to 1 to fulfill the upper bound. The uniqueness of this equilibrium strategy remains to be demonstrated. 
	
	\subsection{Uniqueness of the equilibrium control}
	\begin{theorem}\label{uniqueCRA}
		Under Assumption \ref{assume:lambda}, the control $u^* = (\pi^*, a^*)$, given by \eqref{u*c}, is the unique equilibrium control for the RI problem \eqref{objective} when $\phi_1 = 0$.
	\end{theorem}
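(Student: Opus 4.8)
The plan is to take an \emph{arbitrary} equilibrium control $(\pi^*,a^*)\in H_\mathcal{F}^2(0,T;\mathbb{R},\mathbb{P})\times\cup_{q>2}L_\mathcal{F}^q(0,T;\mathbb{R}_+,\mathbb{P})$ (recall $\mathcal{D}=\mathbb{R}_+$ here), with state $X^*$ and adjoint triple $(p^*,Z^*,Z_2^*)$ solving \eqref{p*}, and to show it must equal \eqref{u*c}; since Theorem \ref{ThmSolConstant} already produces \eqref{u*c} as an equilibrium, this gives uniqueness. The key first step is that, because $\phi_1=0$, once the control is fixed \eqref{p*} is a linear BSDE with the bounded deterministic generator $-r_sp$, hence it has a unique solution that can be written explicitly: multiplying by $e^{\int_0^s r_u\,du}$ makes $e^{\int_0^s r_u\,du}p^*(s;t)$ a true martingale on $[t,T]$ (the stochastic integrals are genuine martingales since $Z^*\in H_\mathcal{F}^2$, $Z_2^*\in F^2$), so that $p^*(s;t)=e^{\int_s^T r_u\,du}\big(\mathbb{E}_s[X^*_T]-\mathbb{E}_t[X^*_T]-\phi_2\big)$ for $s\in[t,T]$. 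In particular $p^*(t;t)=-\phi_2e^{\int_t^T r_u\,du}=\Phi_t$ is \emph{deterministic and independent of the control}, which is the structural feature that makes the constant-risk-aversion case tractable.

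Solving the linear SDE \eqref{X} forward gives $\mathbb{E}_s[X^*_T]=e^{\int_s^T r_u\,du}X^*_s+\mathbb{E}_s\big[\int_s^T e^{\int_v^T r_u\,du}(\nu(v)^\top u^*(v)+c_v)\,dv\big]$, and substituting into the formula above shows that $p^*(\cdot;t)$ has exactly the form \eqref{pform} with $\Gamma^{(2)}_s=e^{\int_s^T r_u\,du}\mathbb{E}_s\big[\int_s^T e^{\int_v^T r_u\,du}(\nu(v)^\top u^*(v)+c_v)\,dv\big]+\Phi_s$ (so $\Gamma^{(2)}_T=-\phi_2$). Writing the martingale integrand of $\Gamma^{(2)}$ as $\gamma^{(2)}_s\,dW_s+\int_{\mathbb{R}_+}\zeta^{(2)}_2(s,z)\widetilde{N}(ds,dz)$, applying It\^o's formula to \eqref{pform} and comparing with \eqref{p*} yield $Z^*_1(s;t)=M_s\pi^*(s)\sigma(s)+\gamma_1^{(2)}(s)$ and $Z^*_2(s,z;t)=-M_sa^*(s)z+\zeta^{(2)}_2(s,z)$, both independent of $t$ as in Proposition \ref{prop2}.

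Feeding this into the necessary-and-sufficient condition \eqref{condition} of Theorem \ref{theorem2}: the first identity forces $\pi^*(s)=-\big(\nu_1(s)\Phi_s+\sigma(s)\gamma_1^{(2)}(s)\big)/\big(M_s\sigma(s)^2\big)$; the variational inequality, using $\mathcal{D}=\mathbb{R}_+$ and testing against $\rho_2\equiv 0$ and $\rho_2\equiv 2a^*$, gives $a^*(s)=\mathrm{Proj}_{\mathbb{R}_+}\big(-\eta\mu_z\Phi_s/(M_s\mathbb{E}[z^2])+\int_{\mathbb{R}_+}z\zeta^{(2)}_2(s,z)f(z)\,dz/(M_s\mathbb{E}[z^2])\big)$ --- note that here the unbounded mortality factor $\hat{\lambda}_s=l(s)+\lambda_s$ has already cancelled out of the projection argument. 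Substituting these forms of $(\pi^*,a^*)$ back into the dynamics of $\Gamma^{(2)}$ turns its defining equation into a closed BSDE with jumps for $(\Gamma^{(2)},\gamma^{(2)},\zeta^{(2)}_2)$ with terminal value $-\phi_2$. Because with $\zeta^{(2)}_2\equiv 0$ the projection argument is strictly positive (as $\Phi_s<0$) and the generator collapses to that of \eqref{BSDEgamma2}, the triple $(\Gamma^{(2)},(\gamma_0^{(2)},0),0)$, where $(\Gamma^{(2)},\gamma_0^{(2)})$ is the unique (automatically $\mathcal{F}^{W_0}$-adapted) solution of \eqref{BSDEgamma2} with $\gamma_1^{(2)}=0$ furnished by Proposition \ref{pro:gam0}, solves this BSDE.

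Uniqueness of the BSDE with jumps then forces $\gamma_1^{(2)}\equiv 0$ and $\zeta^{(2)}_2\equiv 0$, whence $a^*(s)>0$ a.e., and plugging back gives precisely the $\pi^*_s$ and $a^*_s$ of \eqref{u*c}; admissibility is supplied by Theorem \ref{remark:admissible}. The step I expect to be the main obstacle is exactly this uniqueness: the generator of the reduced BSDE is Lipschitz in the jump component $\zeta^{(2)}_2$ only with an \emph{unbounded}, non-Markovian constant proportional to $\sqrt{\hat{\lambda}_s}$ (coming from $\nu_2(s)=\eta k_1\hat{\lambda}_s\mu_z$ and from $M_sc_s$), so the classical bounded-coefficient jump-BSDE results (as in \cite{SG}) do not apply and one must invoke the integrability provided by Assumption \ref{assume:lambda} together with the moment bound in Lemma \ref{lemma1} to get well-posedness and uniqueness. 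A secondary technical point is making the integrating-factor and martingale-representation arguments rigorous while keeping $p^*\in L_\mathcal{F}^2$, $Z^*\in H_\mathcal{F}^2$, $Z_2^*\in F^2$, which again rests on Theorem \ref{remark:admissible} and Lemma \ref{lemma1}.
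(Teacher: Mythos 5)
Your plan --- take an arbitrary equilibrium, read off the explicit form $p^*(s;t)=e^{\int_s^T r_u\,du}\bigl(\mathbb{E}_s[X^*_T]-\mathbb{E}_t[X^*_T]-\phi_2\bigr)$, identify $\Gamma^{(2)}$ together with its full martingale representation (including a jump component $\zeta^{(2)}_2$), feed condition \eqref{condition} back into the dynamics of $\Gamma^{(2)}$, and conclude by uniqueness of the resulting closed jump-BSDE --- is organised quite differently from the paper's, and your intermediate observations are correct: $p^*(t;t)=\Phi_t$ is indeed deterministic and control-independent, and $\hat\lambda_t$ does cancel out of the projection condition. But you never actually establish the uniqueness of the reduced BSDE, which is the entire technical content of the theorem. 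You correctly flag this as ``the main obstacle'' and note that the generator is only stochastically Lipschitz (with constant $\propto\sqrt{\hat\lambda_s}$ in the jump variable), but ``one must invoke the integrability provided by Assumption~\ref{assume:lambda}'' is a statement of intent, not a proof. Moreover the generator also contains the \emph{nonlinear} $\mathrm{Proj}_{\mathbb{R}_+}$ acting on $\zeta^{(2)}_2$, which places it outside the scope of the standard stochastic-Lipschitz jump-BSDE results you would want to invoke, so the gap is not merely a reference that needs filling in.

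The paper resolves both difficulties at once by working with the \emph{difference} $\bar p(s;t) = p(s;t) - \bigl(M_sX_s+\Gamma^{(2)}_s-\mathbb{E}_t[M_sX_s+\Gamma^{(2)}_s-\Phi_s]\bigr)$ between the adjoint process of a putative second equilibrium $(\pi,a)$ and the candidate ansatz, together with $\bar Z_1,\bar Z_2$. Taking $r\equiv 0$ w.l.o.g.\ and conditional expectations gives $\mathbb{E}_t[\bar p(s;t)]=0$, so $\bar p(t;t)=0$ and the control perturbations $D^c_1,D^c_2$ become functionals of $(\bar Z_1,\bar Z_2)$ alone. Crucially, the projection \emph{linearises across the difference}: since $\mathrm{Proj}_{\mathbb{R}_+}(\alpha+\beta)-\mathrm{Proj}_{\mathbb{R}_+}(\alpha)=A\beta$ with $A\in[0,1]$ when $\alpha\geq 0$, the term $a-a^*$ carries a bounded stochastic multiplier $A^c_t\in[0,1]$, so the driver of $\bar p$ is linear in $(\bar Z_1,\bar Z_2)$ with kernels $\theta_1$ (bounded deterministic) and $\theta_2\geq 0$ (bounded in $z$). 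The entire unboundedness issue is then concentrated in verifying that the Girsanov density $\mathcal{E}_T$ is square-integrable, which is exactly what the Novikov computation together with Assumption~\ref{assume:lambda} (with the explicit threshold $C_2\geq k_1(2+\eta)\eta$) supplies; from there $\tilde p$ is a $\mathbb{Q}$-martingale with zero terminal value, hence $\bar Z\equiv\bar Z_2\equiv 0$ and $D^c_1\equiv D^c_2\equiv 0$. To make your version rigorous you would in effect have to re-derive this difference-plus-Girsanov mechanism for the reduced $\Gamma^{(2)}$-BSDE, at which point you would essentially be reproducing the paper's proof.
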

	\begin{proof}
		Suppose that an alternative admissible equilibrium control pair, $(\pi, a)$, exists with the corresponding state process $X$. By replacing $X^*$ with $X$, the BSDE \eqref{p*} admits a unique solution $(p(s; t), Z(s; t), Z_2(s,z;t))$. This satisfies the condition \eqref{condition}, wherein $Z(s; t) =\left(Z_0(s; t), Z_1(s; t)\right)^\top$.  Define
		\begin{align*}
			& \bar{p}(s;t) = p(s; t) - \left(M_sX_s + \Gamma_s^{(2)} - \mathbb{E}_t[M_sX_s + \Gamma_s^{(2)} - \Phi_s]\right), \\
			&\bar{Z}_0(s;t) = Z_0(s;t)- \gamma^{(2)}_0(s),~\bar{Z}_1(s; t) = Z_1(s;t) - M_s \pi(s)\sigma(s) ,\\
			&\bar{Z}_2(s,z;t) = Z_2(s,z;t) + M_sa(s)z,
		\end{align*}
		using the definitions of $M$, $\Gamma^{(2)}$, $\Phi$, and $\gamma^{(2)}_0$ provided in Section \ref{subsec:CRA}. 
		
		According to Proposition \ref{pro:gam0}, $(\bar{p}(s; t), \bar{Z}(s; t), \bar{Z}_2(s,z;t)) \in L_\mathcal{F}^2(t,T;\mathbb{R}, \mathbb{P})\times H_\mathcal{F}^2(t,T;\mathbb{R}^2, \mathbb{P}) \times F^2(t,T;\mathbb{R})$. Substituting this into \eqref{condition}, we obtain
		\begin{align}
			\left\{
			\begin{array}{lr}
				\nu_1(t)\left[\bar{p}(t; t) +\Phi_t\right] + \sigma(t)\left[\bar{Z}_1(t; t) + M_t \pi(t)\sigma(t)  \right] = 0,\\ 
				\langle\nu_2(t)\left[\bar{p}(t; t) +\Phi_t\right]  - k_1\hat{\lambda}_t\int_{\mathbb{R}_+}z_tf(z) \left[\bar{Z}_2(t, z; t)- M_ta(t)z\right]dz, \rho_2(t) - a_t\rangle \geq 0
			\end{array}\right.
		\end{align}
		for any $\rho_2 \in \cup_{q>2}L_\mathcal{F}^q(t,T;\mathbb{R}_+, \mathbb{P})$. Therefore, 
		\begin{align*}
			\pi(t) &= -\frac{\nu_1}{M_t\sigma(t)^2} \Phi_t - \frac{1}{M_t\sigma(t)^2}(\nu_1(t)\bar{p}(t; t) + \sigma(t)\bar{Z}_1(t; t))\\
			& = \pi^*(t) - \frac{1}{M_t\sigma(t)^2}(\nu_1(t)\bar{p}(t; t) + \sigma(t)\bar{Z}_1(t; t)) \triangleq  \pi^*(t) + D^c_1(t), \\
			a(t) &= \frac{1}{k_1\hat{\lambda}_tM_t\mathbb{E}[z^2]}\left[-\nu_2\Phi_t  - \left(\nu_2\bar{p}(t; t) - k_1\hat{\lambda}_t\int_{\mathbb{R}_+}zf(z)\bar{Z}_2(t, z;t)dz\right)\right]^+\\
			& = a^*_t - \frac{A^c_t}{k_1\hat{\lambda}_tM_t\mathbb{E}[z^2]}\left(\nu_2\bar{p}(t; t) - k_1\hat{\lambda}_t\int_{\mathbb{R}_+}zf(z)\bar{Z}_2(t, z;t)dz\right) \triangleq a^*_t + D^c_2(t), 
		\end{align*}
		where $0\leq A^c_t\leq 1$ is a bounded adapted process. To prove the uniqueness of the equilibrium control, we must demonstrate that $D^c_1(t) \equiv D^c_2(t)\equiv 0$ as follows.
		Here,
		\begin{align}\label{pbar0}
			\begin{split}
				d\bar{p}(s;t) &=\left\{ -r_s\bar{p}(s;t) - \nu_1D^c_1(s)M_s - \nu_2D^c_2(s)M_s + \mathbb{E}_t[ \nu_1D^c_1(s)M_s + \nu_2D^c_2(s)M_s]\right\}ds \\
				&+ \bar{Z}(s;t)^\top dW_s + \int_{\mathbb{R}_+}\bar{Z}_2(s, z;t)d\widetilde{N}(ds, dz),\\
				\bar{p}(T; t) &= 0, ~s \in [t, T].
			\end{split}
		\end{align}
		As the interest rate $r(\cdot)$ is a bounded deterministic function, we take $r \equiv 0$ without loss of generality. Taking the conditional expectation on both sides of \eqref{pbar0}, we get $\mathbb{E}_t[\bar{p}(s; t)] = 0$ for $s \geq t$. Specifically, $\bar{p}(t; t) =0$. Hence,  $D_1^c(t) = -\frac{\bar{Z}_1(t; t)}{M_t\sigma(t)}$, and $D_2^c(t)  = \frac{A^c_t}{M_t\mathbb{E}[z^2]}\int_{\mathbb{R}_+}zf(z)\bar{Z}_2(t, z; t)dz$.  Then, $\nu_1D_1^c(t)M_t = -\frac{\mu(t)-r_t}{\sigma(t)}\bar{Z}_1(t; t)$, and $\nu_2D_2^c(t)M_t = \frac{\eta\mu_zA^c_t}{\mathbb{E}[z^2]}\int_{\mathbb{R}_+}z\bar{Z}_2(t, z; t)\delta(dz)$.  According to Proposition \ref{prop2}, we obtain $Z(s; t_1) = Z(s; t_2)$ and $Z_2(s, z; t_1) = Z_2(s, z; t_2)$ for a.e. $s \geq \max(t_1, t_2)$. We define $\Delta_c(t)= -\theta_1(t)\bar{Z}_1(t; t)+ \int_{\mathbb{R}_+}\theta_2(t; z)\bar{Z}_2(t, z; t)\delta(dz)$ and $\tilde{p}(s; t) = \bar{p}(s; t) - \int_{s}^{T}\mathbb{E}_t[\Delta_c(u)]du$, where $\theta_1(s) = \frac{\mu(s)-r_s}{\sigma(s)}$ and $\theta_2(s, z) = \frac{\eta\mu_zA^c_t}{\mathbb{E}[z^2]}z \geq 0$.  Then, 
		\begin{align}
			\begin{split}
				d\tilde{p}(s;t) &=\left\{ - \nu_1D^c_1(s)M_s - \nu_2D^c_2(s)M_s \right\}ds + \bar{Z}(s;t)^\top dW_s + \int_{\mathbb{R}_+}\bar{Z}_2(s, z;t)d\widetilde{N}(ds, dz),\\
				\tilde{p}(T; t) &= 0, ~s \in [t, T].
			\end{split}
		\end{align}
		We then introduce a new measure, $\mathbb{Q}$, on $\mathcal{F}_t$ by $\frac{d\mathbb{Q}}{d\mathbb{P}}= \mathcal{E}_t$,  where 
		\begin{align}\label{mathcalE}
			\begin{split}
				\mathcal{E}_t &= \exp\bigg\{-\int_{0}^{t}\theta_1(s)dW_1(s) - \frac{1}{2}\int_{0}^{t}\theta_1(s)^2ds\\
				&+ \int_{0}^{t}\int_{\mathbb{R}_+}\ln(1+\theta_2(s, z))\widetilde{N}(ds, dz) + \int_{0}^{t}\int_{\mathbb{R}_+}\left\{\ln(1+\theta_2(s, z)) -\theta_2(s, z)\right\}\delta(dz)ds \bigg\}. 
			\end{split}
		\end{align}
		Note that if $\theta_2(s,z)\geq 0$, then $\ln(1 +\theta_2(s,z)) \leq \theta_2(s,z)$.
		If $C_2>k_1\frac{\eta^2\mu_z^2}{\mathbb{E}[z^2]}$ as in Assumption \ref{assume:lambda}, then the Novikov condition,
		\begin{align*}
			\mathbb{E}\left[\exp\left(\frac{1}{2}\int_{0}^{T}\theta_1(s)^2ds + \int_{0}^{T}\int_{\mathbb{R}_+}\{(1+\theta_2(s,z))\ln(1 + \theta_2(s,z)) - \theta_2(s,z)\}\delta(dz)ds \right) \right] < \infty,
		\end{align*}
		is satisfied. 
		Using measure $\mathbb{Q}$,  we obtain 
		\begin{equation}
			d\tilde{p}(s; t) =  \bar{Z}_0(s;t)dW_0(s) + \bar{Z}_1(s; t)dW^\mathbb{Q}_1(s) + \int_{\mathbb{R}_+}\bar{Z}_2(s, z;t)d\widetilde{N}^\mathbb{Q}(ds, dz),
		\end{equation}
		where $W^\mathbb{Q}_1(s)$ and $\widetilde{N}^\mathbb{Q}(ds, dz)$ are the standard Brownian motion and compensated Poisson random measure under $\mathbb{Q}$, respectively. 
		
		For a set $q_0 \in (1, 2)$, and given that $z$ has finite moments, according to Lemma \ref{lemma1}, there exists a constant $C$ such that
		\begin{align*}
			&\mathbb{E}\left[\sup_{s \in [t, T]}\left|\int_{s}^{T}\mathbb{E}_t[\Delta_c(u)] du\right|^{q_0} \right]
			\leq C\mathbb{E}\left[\left(\int_{t}^{T}|\Delta_c(u)|du\right)^{q_0}\right] \leq C \left\{\mathbb{E}\left[\int_{t}^{T}\bar{Z}_1(s; s)^2ds\right]\right\}^\frac{q_0}{2} \\
			&+ C\left\{\mathbb{E}\left[\int_{t}^{T}\int_{\mathbb{R}_+}\bar{Z}_2(s, z; s)^2\delta(dz)ds\right]\right\}^\frac{q_0}{2}\left\{\mathbb{E}\left[\int_{t}^{T}\int_{\mathbb{R}_+}\theta_2(s,z)^\frac{2q_0}{2-q_0}\delta(dz)ds\right]\right\}^\frac{2-q_0}{2} <\infty. 
		\end{align*}
		Hence, $\mathbb{E}\left[\sup_{t\leq s \leq T}|\tilde{p}(s; t)|^{q_0}\right]< \infty$.  
		For any $m \geq 1$, assume that $\mathbb{E}[\mathcal{E}_T^m]  < \infty$; then, for the aforementioned $q_0$, $q_1 \in (1, q_0)$ exists such that 
		\begin{align*}
			\mathbb{E}^\mathbb{Q}\left[\sup_{t\leq s \leq T}|\tilde{p}(s; t)|\right] \leq \left\{\mathbb{E}\left[\sup_{t\leq s \leq T}|\tilde{p}(s; t)|^{q_1}\right]\right\}^\frac{1}{q-1}\left\{\mathbb{E}\left[\mathcal{E}_T^\frac{q_1}{q_1-1}\right]\right\}^\frac{q_1-1}{q_1}< \infty. 
		\end{align*}
		If $m \geq 2$, then values of $q_0$ and $q_1$ exist such that the above boundedness holds. We analyze the condition  $ \mathbb{E}\left[ \mathcal{E}_T^2\right] < \infty$, where $\mathcal{E}$ is given in \eqref{mathcalE}. Note that $\theta_1$ is a bounded deterministic function and $\theta_2(s,z)\geq 0$. Therefore,
		\begin{align*}
			&\mathbb{E}[\mathcal{E}_T^2]  \leq\\
			&C\mathbb{E}\left[\exp\left(\int_{0}^{T}\int_{\mathbb{R}_+}2\ln(1+\theta_2(s, z))\widetilde{N}(ds, dz) + \int_{0}^{T}\int_{\mathbb{R}_+}2\left\{\ln(1+\theta_2(s, z)) -\theta_2(s, z)\right\}\delta(dz)ds\right)\right]\\
			&\leq C\mathbb{E}\left[\exp\left(\int_{0}^{T}\int_{\mathbb{R}_+}\ln(1+\theta_2(s, z))^2N(ds, dz)\right)\right]  = C\mathbb{E}\left[\exp\left(\sum_{i = 1}^{N_T}\ln(1+ \theta_2(t_i, z_i))^2  \right)\right]\\
			& = C\mathbb{E}\left[\mathbb{E}\left[\left.e^{\sum_{i =1}^{N_T}\ln(1+ \theta_2(t_i, z_i))^2}\right|N_T \right] \right] \leq C\mathbb{E}\left[\mathbb{E}\left[\left(1 +\frac{\eta\mu_z}{\mathbb{E}[z^2]}z\right)^2\right]^{N_T}\right]
		\end{align*}
		for a constant $C$. Let $\mathbb{E}\left[\left(1 +\frac{\eta\mu_z}{\mathbb{E}[z^2]}z\right)^2\right] \triangleq A_0$ for a constant $A_0$. Then, we obtain 
		\begin{align*}
			\mathbb{E}\left[\mathbb{E}\left[\left(1 +\frac{\eta\mu_z}{\mathbb{E}[z^2]}z\right)^2\right]^{N_T}\right] &= \mathbb{E}\left[e^{(\ln A_0)N_T}\right] = \mathbb{E}\left[\mathbb{E}\left[\left.e^{(\ln A_0)N_T}\right|\hat{\lambda}_{[0, T]}\right]\right] \\
			&= \mathbb{E}\left[\exp\left((A_0 -1)\int_{0}^{T}k_1\hat{\lambda}_sds\right) \right].
		\end{align*}
		As $\mu_z^2 \leq \mathbb{E}[z^2]$,  $A_0 \leq 1 + \frac{2\eta\mu_z^2}{\mathbb{E}[z^2]} + \frac{\eta^2\mu_z^2}{\mathbb{E}[z^2]}\leq(1 + \eta)^2$. There exists a constant value of $C$, such that $\mathbb{E}\left[\mathcal{E}_T^2\right] \leq C\mathbb{E}\left[\exp\left((2\eta+\eta^2)\int_{0}^{T}k_1\hat{\lambda}_sds\right) \right]$. If $C_2 \geq k_1(2 + \eta)\eta$ as in Assumption \ref{assume:lambda}, then $\mathbb{E}\left[\mathcal{E}_T^2\right] < \infty$.  
		Therefore, $\tilde{p}(s; t)$ is a $\mathbb{Q}$-martingale, 
		$\widetilde{p} \equiv \bar{Z} \equiv \bar{Z}_2 \equiv 0$, and  $D_1^c \equiv 0$ and $D_2^c \equiv 0$.  Following this, equilibrium control $(\pi^*, a^*)$, as derived by \eqref{u*c}, is unique. 
	\end{proof}
	
	From the above proof, we recognize that $C_2\geq k_1(2 + \eta)\eta$ is sufficient to confirm the admissibility and uniqueness of the equilibrium control \eqref{u*c} in a case with constant risk aversion.

	We next study the uniqueness of the equilibrium strategy under the interval reinsurance constraint. 
	\begin{theorem}
		Under Assumption \ref{assume:lambda}, the strategy in \eqref{u*constrain} represents the unique equilibrium control to Problem \eqref{objective}, where $\phi_1=0$ and $a\in [0,1]$ is the constraint.
	\end{theorem}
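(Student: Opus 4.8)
The plan is to follow the proof of Theorem \ref{uniqueCRA} almost verbatim; the only structural novelty is that the variational inequality for reinsurance in \eqref{condition} now corresponds to a projection onto the bounded set $[0,1]$ rather than onto $\mathbb{R}_+$. I would suppose $(\pi,a)$ is an alternative admissible equilibrium control with state process $X$. Replacing $X^*$ by $X$, the linear BSDE \eqref{p*} again admits a unique solution $(p(s;t),Z(s;t),Z_2(s,z;t))$, which by Theorem \ref{theorem2} satisfies \eqref{condition}. As in Theorem \ref{uniqueCRA}, I would subtract the candidate form \eqref{pform} --- now with $\gamma_1^{(2)}\equiv 0$ and $(\Gamma^{(2)},\gamma_0^{(2)})$ the solution of the projected BSDE \eqref{BSDE:gamma22}, whose well-posedness is recorded in Section \ref{section:constrain} --- to define $\bar p(s;t)=p(s;t)-\left(M_sX_s+\Gamma_s^{(2)}-\mathbb{E}_t[M_sX_s+\Gamma_s^{(2)}-\Phi_s]\right)$ together with $\bar Z_0,\bar Z_1,\bar Z_2$ defined exactly as there; by the cited well-posedness result these processes lie in $L_\mathcal{F}^2\times H_\mathcal{F}^2\times F^2$ over $[t,T]$.

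Next, substituting $(\bar p,\bar Z,\bar Z_2)$ into \eqref{condition}, the investment equation yields $\pi(t)=\pi^*(t)+D_1^c(t)$ with $D_1^c(t)=-\frac{1}{M_t\sigma(t)^2}\left(\nu_1(t)\bar p(t;t)+\sigma(t)\bar Z_1(t;t)\right)$. For the reinsurance component, $a(t)$ and $a^*_t$ are both images under ${\rm Proj}_{[0,1]}$ of expressions that differ only through $\bar p(t;t)$ and $\bar Z_2$; since ${\rm Proj}_{[0,1]}$ is nondecreasing and $1$-Lipschitz, one writes $a(t)=a^*_t+D_2^c(t)$ with $D_2^c(t)=-\frac{A^c_t}{k_1\hat\lambda_tM_t\mathbb{E}[z^2]}\left(\nu_2(t)\bar p(t;t)-k_1\hat\lambda_t\int_{\mathbb{R}_+}zf(z)\bar Z_2(t,z;t)\,dz\right)$ for a bounded adapted process $0\le A^c_t\le 1$, exactly as in Theorem \ref{uniqueCRA}. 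The equation for $\bar p(\cdot;t)$ then has the same form as \eqref{pbar0}; taking $\mathbb{E}_t[\cdot]$ and using $\bar p(T;t)=0$ gives $\mathbb{E}_t[\bar p(s;t)]=0$ on $[t,T]$, in particular $\bar p(t;t)=0$, so that, after invoking Proposition \ref{prop2} to identify $\bar Z_i(s;t)$ with $\bar Z_i(s;s)$ for $s\ge t$, $D_1^c$ and $D_2^c$ become linear functionals of $\bar Z_1$ and $\bar Z_2$ alone.

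From here the argument is the one in Theorem \ref{uniqueCRA}. Take $r\equiv 0$ without loss of generality, set $\theta_1(s)=\frac{\mu(s)-r_s}{\sigma(s)}$ and $\theta_2(s,z)=\frac{\eta\mu_z A^c_s}{\mathbb{E}[z^2]}z\ge 0$, and introduce $\mathbb{Q}$ through the density $\mathcal{E}_t$ of \eqref{mathcalE}. Because $0\le A^c_s\le 1$, the bound $\theta_2(s,z)\le\frac{\eta\mu_z}{\mathbb{E}[z^2]}z$ and the estimate $A_0:=\mathbb{E}\left[\left(1+\frac{\eta\mu_z}{\mathbb{E}[z^2]}z\right)^2\right]\le(1+\eta)^2$ are unchanged, so $\mathbb{E}[\mathcal{E}_T^2]\le C\,\mathbb{E}\left[\exp\left((2+\eta)\eta k_1\int_0^T\hat\lambda_s\,ds\right)\right]$, which is finite under Assumption \ref{assume:lambda}; the Novikov condition for $\mathcal{E}$ holds for the same reason. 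With $\Delta_c(t)=-\theta_1(t)\bar Z_1(t;t)+\int_{\mathbb{R}_+}\theta_2(t;z)\bar Z_2(t,z;t)\,\delta(dz)$ and $\tilde p(s;t)=\bar p(s;t)-\int_s^T\mathbb{E}_t[\Delta_c(u)]\,du$, the dynamics of $\tilde p(\cdot;t)$ under $\mathbb{Q}$ carry no drift, so $\tilde p(\cdot;t)$ is a $\mathbb{Q}$-local martingale; the H\"older and moment estimates of Theorem \ref{uniqueCRA}, together with $\mathbb{E}[\mathcal{E}_T^2]<\infty$ and Lemma \ref{lemma1}, upgrade it to a genuine $\mathbb{Q}$-martingale with terminal value $0$. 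Hence $\tilde p\equiv\bar Z\equiv\bar Z_2\equiv 0$, so $D_1^c\equiv D_2^c\equiv 0$ and $(\pi,a)$ coincides with the strategy $(\pi^*,a^*)$ of \eqref{u*constrain}.

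The step requiring a genuinely new observation --- and hence the main obstacle --- will be the slope representation of $a(t)-a^*_t$ used in the second paragraph: I would need to verify that for any reals $y,y^*$ there is a number $A\in[0,1]$, chosen measurably in the data, with ${\rm Proj}_{[0,1]}(y)-{\rm Proj}_{[0,1]}(y^*)=A(y-y^*)$, so that reading this $A$ off along the optimal and the alternative policies produces an adapted process $A^c_t\in[0,1]$. This is immediate from the piecewise-linear, nondecreasing, $1$-Lipschitz nature of ${\rm Proj}_{[0,1]}$, and every downstream estimate uses only $A^c_t\in[0,1]$ --- never the specific projection --- so it transfers verbatim from the proof of Theorem \ref{uniqueCRA}, where the relevant map was $[\,\cdot\,]^+={\rm Proj}_{\mathbb{R}_+}$.
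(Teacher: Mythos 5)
Your proof follows essentially the same route as the paper's: define $\bar p,\bar Z,\bar Z_2$ relative to the candidate in \eqref{pform} with the projected BSDE \eqref{BSDE:gamma22}, extract correction processes from \eqref{condition} via a bounded adapted slope in $[0,1]$, and re-run the Girsanov/Novikov martingale argument of Theorem \ref{uniqueCRA} to force them to vanish. Your justification of the slope representation $a(t)-a^*_t=A^c_t(y-y^*)$ through the $1$-Lipschitz, nondecreasing nature of ${\rm Proj}_{[0,1]}$ is in fact more carefully argued than the paper's intermediate display, which writes the projection as if it distributed over sums before landing on the same bounded-slope process $\tilde A_t\in[0,1]$.
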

	\begin{proof}
		Suppose instead that there is an alternative admissible equilibrium control pair, $(\pi, a) \in  H_\mathcal{F}^2(0,T;\mathbb{R}, \mathbb{P})\times \cup_{q>2}L_\mathcal{F}^q(0,T;\mathcal{D}, \mathbb{P})$, with the corresponding state process $X$ and $\mathcal{D} = [0, 1]$. By replacing $X^*$ with $X$, the BSDE \eqref{p*} admits a unique solution $(p(s; t), Z(s; t), Z_2(s,z;t))$. This satisfies the condition \eqref{condition}, where $Z(s; t) =\left(Z_0(s; t), Z_1(s; t)\right)^\top$. We define
	 \begin{align*}
	 & \bar{p}(s;t) = p(s; t) - \left(M_sX_s + \Gamma_s^{(2)} - \mathbb{E}_t[M_sX_s + \Gamma_s^{(2)} - \Phi_s]\right),\\
	&\bar{Z}_0(s;t) = Z_0(s;t)- \gamma^{(2)}_0(s),~\bar{Z}_1(s; t) = Z_1(s;t) - M_s \pi(s)\sigma(s),\\
	&\bar{Z}_2(s,z;t) = Z_2(s,z;t) + M_sa(s)z,
	\end{align*}
	
	Using $M$, $\Gamma^{(2)}$, $\Phi$, and $\gamma^{(2)}_0$ as defined in  Section \ref{section:constrain}. Substituting the above definitions into \eqref{condition}, we obtain 
	\begin{align}
		\left\{
		\begin{array}{lr}
			\nu_1(t)\left[\bar{p}(t; t) +\Phi_t\right] + \sigma(t)\left[\bar{Z}_1(t; t) + M_t \pi(t)\sigma(t)  \right] = 0,\\ 
			\langle\nu_2(t)\left[\bar{p}(t; t) +\Phi_t\right]  - k_1\hat{\lambda}\int_{\mathbb{R}_+}z_tf(z) \left[\bar{Z}_2(t, z; t)- M_ta(t)z\right]dz, \rho_2(t) - a_t\rangle \geq 0
		\end{array}\right.
	\end{align}
	for any $\rho_2 \in \cup_{q>2}L_\mathcal{F}^q(t,T;\mathcal{D}, \mathbb{P})$. Therefore, 
	\begin{align*}
		\pi(t) &= -\frac{\nu_1}{M_t\sigma(t)^2} \Phi_t - \frac{1}{M_t\sigma(t)^2}(\nu_1(t)\bar{p}(t; t) + \sigma(t)\bar{Z}_1(t; t))\\
		& = \pi^*(t) - \frac{1}{M_t\sigma(t)^2}(\nu_1(t)\bar{p}(t; t) + \sigma(t)\bar{Z}_1(t; t)) \triangleq  \pi^*(t) + \tilde{D}^c_1(t), \\
		a(t) &= {\rm Proj}_\mathcal{D}\left\{ \frac{1}{k_1\hat{\lambda}_tM_t\mathbb{E}[z^2]}\left[-\nu_2\Phi_t  - \left(\nu_2\bar{p}(t; t) - k_1\hat{\lambda}_t\int_{\mathbb{R}_+}zf(z)\bar{Z}_2(t, z;t)dz\right)\right]^+\right\}\\
		& = {\rm Proj}_\mathcal{D}\left(-\frac{\eta \mu_z\Phi_t}{M_t\mathbb{E}[z^2]}\right) + {\rm Proj}_\mathcal{D}\left\{ - \frac{1}{k_1\hat{\lambda}_tM_t\mathbb{E}[z^2]}\left(\nu_2\bar{p}(t; t) - k_1\hat{\lambda}_t\int_{\mathbb{R}_+}zf(z)\bar{Z}_2(t, z;t)dz\right)\right\}\\
		& = a^*_t - \frac{\tilde{A}_t}{k_1\hat{\lambda}_tM_t\mathbb{E}[z^2]}\left(\nu_2\bar{p}(t; t) - k_1\hat{\lambda}_t\int_{\mathbb{R}_+}zf(z)\bar{Z}_2(t, z;t)dz\right) \triangleq a^*_t + \tilde{D}^c_2(t), 
	\end{align*}
	where $0 \leq \tilde{A}_t \leq 1$ is a bounded adapted process. By following a method similar to the proof of Theorem \ref{uniqueCRA}, we derive a unique control in \eqref{u*constrain} when $C_2 \geq k_1(2 + \eta)\eta$ in Assumption \ref{assume:lambda}. 
	
    \end{proof}
		
		Therefore, the equilibrium controls for the reinsurance policy situations with and without the boundedness constraint are unique in their respective settings. This finding is important because it covers both Markovian models and non-Markovian VMM situations with respect to the mortality rate. The resulting strategies are Markovian and independent of the historical mortality rate for insurers with constant risk aversion. The result for the state-dependent case is rather different, as shown in the following section.
		
		\section{Equilibrium strategy under state-dependent risk aversion}
		\label{Sec:SDRA}
		Consider a case of state-dependent risk aversion when the reinsurance company offers a cheap reinsurance premium, where $\phi_2 = 0$ and $\phi_1 >0$. In other words, $c \equiv 0 $ or, equivalently, $\eta = \theta$. Here, let $\mathcal{D} = [0, \infty)$. 
		
		Consider the following form for $p^*(s; t)$.
		\begin{equation}
			p^*(s;t) = M_sX^*_s - \Gamma_s^{(1)}X^*_t - \mathbb{E}_t[M_sX^*_s ], 
		\end{equation}
		where  $\Gamma_t^{(1)}  = \phi_1e^{\int_{t}^{T}r_sds}$,  $(M, U)$ is a solution to the BSDE, as follows:
		\begin{align}
			dM_s = -F_sds + U_s^\top dW_s, ~ M_T = 1, 
		\end{align}
		where $U(s) = \left(U_0(s), U_1(s)\right)^\top$ represents a 2-dimensional vector. 
		
		After suppressing the dependence of $s$, we repeat the procedure in a manner similar to that described in Section \ref{subsec:CRA}. We deduce that 
		\begin{align}
			\pi^*(t) = \frac{1}{M_t\sigma(t)} \left(\frac{\nu_1(t) \Gamma_t^{(1)}}{\sigma(t)} - U_1(t)\right)X^*_t,~
			a^*(t) = \frac{\nu_2(t)}{M_tk_1\hat{\lambda}_t\mathbb{E}[z^2]}\Gamma_t^{(1)}X^*_t,
		\end{align}
		where $(M, U)$ is a solution to the BSDE: 
		\begin{align*}
			\left\{
			\begin{array}{lr}
				dM_s = \left\{-2r_sM_s + \left(\frac{\nu_1}{\sigma} - \frac{\nu_1(s)\Gamma_s^{(1)}}{M_s\sigma(s)}\right)U_1(s)+ \frac{1}{M_s} U_1^2(s) - \left( \frac{\nu_1(s)^2}{\sigma(s)^2} + \frac{\nu_2(s)^2}{k_1\hat{\lambda}_s\mathbb{E}[z^2]}\right)\Gamma_s^{(1)}\right\}ds + U_s^\top dW_s,\\
				M_T =1.\\
			\end{array}
			\right.
		\end{align*}
		
		As $\nu_1 = \mu - r$ and $\nu_2 = \eta\mu_zk_1\hat{\lambda}$ are deterministic, $\hat{\lambda}$ is the only stochastic parameter among the coefficients of the BSDE \eqref{BSDE:M}. Therefore, we set $U_1(\cdot) = 0$ and show that the resulting BSDE admits a unique solution. Specifically, 
		\begin{align}\label{BSDE:M}
			\left\{
			\begin{array}{lr}
				dM_s = \left\{-2r_sM_s  - \left( \frac{\nu_1(s)^2}{\sigma(s)^2} + \frac{\nu_2(s)^2}{k_1\hat{\lambda}_s\mathbb{E}[z^2]}\right)\Gamma_s^{(1)}\right\}ds + U_0(s) dW_0(s),\\
				M_T =1.\\
			\end{array}
			\right.
		\end{align}
		
		\begin{pro}
			In this setting, the BSDE \eqref{BSDE:M} admits a unique solution $(M, U_0) \in S_\mathcal{F}^q(0,T;\mathbb{R}, \mathbb{P})\times H_\mathcal{F}^q(0,T;\mathbb{R}, \mathbb{P})$ for any $q>2$. 
		\end{pro}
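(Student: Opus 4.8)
The plan is to recognise \eqref{BSDE:M} as a one-dimensional linear BSDE driven by $W_0$ alone, with a bounded deterministic Lipschitz coefficient in $M$ and an $L^q$-integrable inhomogeneous term, and then to appeal to the standard $L^q$ existence--uniqueness theorem for BSDEs (Theorem 5.1 in \cite{EPQ}), exactly as was done for Proposition \ref{pro:gam0} and for the analogous BSDE \eqref{BSDE:gamma22}. First I would make the coefficients explicit. Since $r(\cdot)$, $\mu(\cdot)$ and $\sigma(\cdot)$ are bounded and deterministic, $\Gamma_t^{(1)} = \phi_1 e^{\int_t^T r_s ds}$ is bounded and deterministic, and $\nu_1(s)^2/\sigma(s)^2$ is likewise bounded and deterministic. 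Recalling $\nu_2(s) = \eta\mu_z k_1\hat\lambda_s$, the only seemingly problematic term collapses to
\[\frac{\nu_2(s)^2}{k_1\hat\lambda_s\mathbb{E}[z^2]} = \frac{\eta^2\mu_z^2 k_1}{\mathbb{E}[z^2]}\,\hat\lambda_s,\]
so that the driver of \eqref{BSDE:M} has the form $f(s,m) = -2r_s m - g_s$ with
\[g_s = \left(\frac{\nu_1(s)^2}{\sigma(s)^2} + \frac{\eta^2\mu_z^2 k_1}{\mathbb{E}[z^2]}\,\hat\lambda_s\right)\Gamma_s^{(1)}.\]
Thus $f$ is affine in $m$ with bounded deterministic slope $-2r_s$, hence uniformly Lipschitz with linear growth, and (once $U_1\equiv 0$) it does not involve the martingale integrand at all.

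Next I would check that the data $(M_T, g) = (1, g)$ lie in the spaces required by Theorem 5.1 in \cite{EPQ}. The terminal value $M_T = 1$ is trivially in $L^q(\Omega,\mathcal{F}_T,\mathbb{P})$ for every $q$, so it remains to show $g \in H_\mathcal{F}^q(0,T;\mathbb{R},\mathbb{P})$ for each $q>2$. Because $\nu_1^2/\sigma^2$ and $\Gamma^{(1)}$ are bounded, this reduces to $\hat\lambda \in H_\mathcal{F}^q(0,T;\mathbb{R},\mathbb{P})$, which follows from Lemma \ref{lemma1}: writing $\hat\lambda_s = l(s) + \lambda_s$ with $l$ bounded and using H\"older's inequality,
\[\mathbb{E}\left[\left(\int_0^T \hat\lambda_s\, ds\right)^q\right] \le C\int_0^T \mathbb{E}\big[\hat\lambda_s^q\big]\, ds < \infty\]
for every $q>2$. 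With the Lipschitz/linear-growth structure and the integrability of the data in hand, Theorem 5.1 in \cite{EPQ} delivers the unique solution $(M,U_0)\in S_\mathcal{F}^q(0,T;\mathbb{R},\mathbb{P})\times H_\mathcal{F}^q(0,T;\mathbb{R},\mathbb{P})$.

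I do not anticipate a genuine obstacle here, since this is a routine linear BSDE estimate. The only point deserving attention is the one that recurs throughout the paper: confirming that the unbounded mortality rate $\hat\lambda$ enters \eqref{BSDE:M} only through the inhomogeneous term $g$, so that the Lipschitz constant stays bounded and deterministic, while $g$ nonetheless sits in the correct $L^q$-space. Both facts hinge on the exact cancellation $\nu_2(s)^2/(k_1\hat\lambda_s) = (\eta\mu_z)^2 k_1\hat\lambda_s$ together with the uniform moment bounds of Lemma \ref{lemma1}, mirroring the reasoning already carried out for Proposition \ref{pro:gam0}.
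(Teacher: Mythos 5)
Your proposal is correct and follows essentially the same route as the paper's proof: observe that $\nu_2(s)^2/(k_1\hat\lambda_s\mathbb{E}[z^2])$ collapses to a constant multiple of $\hat\lambda_s$, so the driver is affine in $M$ with bounded deterministic Lipschitz constant and the unbounded process enters only through the inhomogeneous term, then invoke Lemma \ref{lemma1} for the $L^q$-integrability of $\hat\lambda$ and Theorem 5.1 of \cite{EPQ}. The paper's argument is just a terser version of yours.
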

		
		\begin{proof}
			In this setting, $r$ and $\frac{\nu_1}{\sigma}$ are deterministic functions. Here, $\frac{\nu_2^2}{k_1\hat{\lambda}\mathbb{E}[z^2]} = \frac{\eta^2\mu_z^2}{\mathbb{E}[z^2]}k_1\hat{\lambda}$. According to Lemma \ref{lemma1}, $\hat{\lambda} \in L^q_\mathcal{F}(0, T; \mathbb{R}, \mathbb{P})$ for any $q >2$. The result follows according to Theorem 5.1 in \cite{EPQ}. 
		\end{proof}
		
		\begin{pro}\label{pro:M}
			The explicit solution to the BSDE \eqref{BSDE:M} is given by
			\begin{eqnarray}
				M_t &=&  e^{\int_{t}^{T}2r_sds} + \int_{t}^{T}e^{\int_{t}^{s}2r_udu}\left( \frac{(\mu(s)-r_s)^2}{\sigma(s)^2} + \frac{\mu_z^2\eta^2}{ \mathbb{E}[z^2]}k_1\mathbb{E}[\hat{\lambda}_s|\mathcal{F}_t]\right)\Gamma_s^{(1)}ds,\label{M}  \cr
				U_0(t) &=& \int_{t}^{T}e^{\int_{t}^{s}2r_udu}\frac{\mu_z^2\eta^2}{\mathbb{E}[z^2]}k_1E_B(s-t)\sigma_\lambda\sqrt{\lambda_s}ds, \label{U0}
			\end{eqnarray}
			where 
			\begin{equation}\label{expect:lam}
				\mathbb{E}[\hat{\lambda}_s|\mathcal{F}_t] = l(s) + \left(1 - \int_{0}^{s}R_B(u)du\right)\lambda_0 + \int_{0}^{s}E_B(s-u)(b_1 - a_1\lambda_u)du + \int_{0}^{t}E_B(s- u)\sigma_{\lambda}\sqrt{\lambda_u}dW_0(u); 
			\end{equation}
			here, $B=-a_1$, $R_B$ is the resolvent of $-KB$, and $E_B = K- R_B*K$.
			Furthermore, $M_t \geq 1$ for $t \in [0, T]$. 
		\end{pro}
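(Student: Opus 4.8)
The plan is to treat \eqref{BSDE:M} as a linear BSDE whose generator is affine in $M_s$ with the \emph{deterministic} coefficient $-2r_s$ and does not contain $U_0$; such equations are solved explicitly by an integrating factor. First I would rewrite the source term: since $\nu_2(s)=\eta k_1\hat{\lambda}_s\mu_z$, one has $\frac{\nu_2(s)^2}{k_1\hat{\lambda}_s\mathbb{E}[z^2]}=\frac{\eta^2\mu_z^2k_1}{\mathbb{E}[z^2]}\hat{\lambda}_s$, so the generator equals $-2r_sM_s-g_s$ with $g_s:=\big(\tfrac{(\mu(s)-r_s)^2}{\sigma(s)^2}+\tfrac{\eta^2\mu_z^2k_1}{\mathbb{E}[z^2]}\hat{\lambda}_s\big)\Gamma_s^{(1)}\ge 0$, where $\Gamma_s^{(1)}=\phi_1 e^{\int_s^Tr_udu}$ is deterministic. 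Applying It\^o's formula to $e^{\int_0^t2r_udu}M_t$ shows that $P_t:=e^{\int_0^t2r_udu}M_t+\int_0^te^{\int_0^s2r_udu}g_sds$ satisfies $dP_t=e^{\int_0^t2r_udu}U_0(t)\,dW_0(t)$; it is a true martingale because $U_0\in H_\mathcal{F}^q$ for some $q>2$ (by the preceding proposition) and $r$ is bounded, while $g$ is integrable by Lemma \ref{lemma1}. Hence $P_t=\mathbb{E}_t[P_T]$, and, using $M_T=1$ and combining exponents, $M_t=\mathbb{E}_t\big[e^{\int_t^T2r_udu}+\int_t^Te^{\int_t^s2r_udu}g_s\,ds\big]$.

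Next I would pull the deterministic factors ($r$, $\mu$, $\sigma$, $\Gamma^{(1)}$, $e^{\int_t^s2r_udu}$) out of $\mathbb{E}_t$ and apply Fubini's theorem — justified by Lemma \ref{lemma1}, which gives $\mathbb{E}\int_t^T|\hat{\lambda}_s|\,ds<\infty$ — to interchange $\mathbb{E}_t$ with the $ds$-integral; this produces \eqref{M} with $\mathbb{E}_t[\hat{\lambda}_s]$ in the integrand. Formula \eqref{expect:lam} then follows from $\hat{\lambda}_s=l(s)+\lambda_s$ together with Lemma \ref{lemma:expmu} (equation \eqref{expectation:mu}) with $T$ replaced by $s$; the fact I will reuse is that, for each fixed $s$, the process $t\mapsto\mathbb{E}_t[\hat{\lambda}_s]$ is a martingale with $d_t\mathbb{E}_t[\hat{\lambda}_s]=E_B(s-t)\sigma_\lambda\sqrt{\lambda_t}\,dW_0(t)$, where $E_B=K-R_B*K$ and $R_B$ is the resolvent of $-KB$, $B=-a_1$.

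To identify $U_0$ I would differentiate $P_t$ directly, writing $P_t=e^{\int_0^T2r_udu}+\int_0^te^{\int_0^s2r_udu}g_s\,ds+\int_t^Te^{\int_0^s2r_udu}\big(h(s)+\tfrac{\eta^2\mu_z^2k_1}{\mathbb{E}[z^2]}\mathbb{E}_t[\hat{\lambda}_s]\big)\Gamma_s^{(1)}\,ds$ with $h(s)=(\mu(s)-r_s)^2/\sigma(s)^2$: the finite-variation contributions of the two time-integrals cancel (as they must, $P$ being a martingale), and a stochastic-Fubini interchange applied to $\int_t^T(\cdot)\,d_t\mathbb{E}_t[\hat{\lambda}_s]\,ds$ leaves a single $dW_0(t)$-term; matching its integrand against $e^{\int_0^t2r_udu}U_0(t)$ yields \eqref{U0}. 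Finally $M_t\ge 1$ is immediate from \eqref{M}: every factor in the integrand is nonnegative — $\Gamma_s^{(1)}>0$, $\mathbb{E}_t[\hat{\lambda}_s]\ge 0$ (as $\lambda$ is $\mathbb{R}_+$-valued and $l\ge 0$), $e^{\int_t^s2r_udu}>0$, and $h(s)\ge 0$ — so $M_t\ge e^{\int_t^T2r_sds}\ge 1$ since $r>0$.

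The main obstacle I anticipate is the bookkeeping for $U_0$: one must correctly read the Brownian martingale representation of $\mathbb{E}_t[\hat{\lambda}_s]$ off Lemma \ref{lemma:expmu} (the resolvent kernel $E_B$ being the nontrivial ingredient) and justify the stochastic-Fubini interchange, for which the required $L^2$-in-$(s,t)$ integrability of $E_B(s-t)\sqrt{\lambda_t}$ follows from the standing assumptions on $K$ together with Lemma \ref{lemma1}. The integrating-factor step and the positivity claim are routine; alternatively one may verify the candidate pair $(M,U_0)$ a posteriori by substituting it into \eqref{BSDE:M} and invoking the uniqueness established in the preceding proposition.
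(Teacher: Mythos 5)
Your proof is correct and follows essentially the same route as the paper: express $M_t$ as a conditional expectation (the integrating-factor/martingale computation you spell out is exactly what the paper compresses into ``we can easily see''), interchange $\mathbb{E}_t$ with the $ds$-integral by Fubini, invoke Lemma~\ref{lemma:expmu} for $\mathbb{E}_t[\hat\lambda_s]$ and for the $dW_0$-representation that gives $U_0$, and read off $M_t\geq 1$ from nonnegativity. One bookkeeping remark your derivation exposes: from $d_t\mathbb{E}_t[\hat\lambda_s]=E_B(s-t)\sigma_\lambda\sqrt{\lambda_t}\,dW_0(t)$ the stochastic-Fubini step actually produces
\[
U_0(t)=\int_t^Te^{\int_t^s2r_udu}\tfrac{\mu_z^2\eta^2}{\mathbb{E}[z^2]}\,k_1\,E_B(s-t)\,\sigma_\lambda\sqrt{\lambda_t}\,\Gamma_s^{(1)}\,ds,
\]
i.e.\ with $\sqrt{\lambda_t}$ (outer time) rather than $\sqrt{\lambda_s}$ and with the factor $\Gamma_s^{(1)}$ retained; the form in \eqref{U0} appears to be a typographical slip in the paper, not a gap in your argument.
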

		\begin{proof}
			From \eqref{BSDE:M}, we can easily see that 
			\begin{align*}
				M_t &= \mathbb{E}\left[\left.  e^{\int_{t}^{T}2r_sds}  +  \int_{t}^{T}e^{\int_{t}^{s}2r_udu}\left( \frac{\nu_1(s)^2}{\sigma(s)^2} + \frac{\nu_2(s)^2}{k_1\hat{\lambda}_s\mathbb{E}[z^2]}\right)\Gamma_s^{(1)}ds\right|\mathcal{F}_t\right]\\
				& =  e^{\int_{t}^{T}2r_sds}  +  \int_{t}^{T}e^{\int_{t}^{s}2r_udu}\left( \frac{(\mu(s)-r_s)^2}{\sigma(s)^2} + \frac{\mu_z^2\eta^2}{ \mathbb{E}[z^2]}k_1\mathbb{E}[\hat{\lambda}_s|\mathcal{F}_t]\right)\Gamma_s^{(1)}ds.
			\end{align*}
			The result follows according to Lemma \ref{lemma:expmu}.
			
			As $\hat{\lambda} \geq 0$, the conditional expectation $\mathbb{E}[\hat{\lambda}_s|\mathcal{F}_t] \geq 0$ when $0 \leq t \leq s \leq T$. The interest rate $r > 0$. According to the representation of $M_t$ in \eqref{M}, we know that $M_t \geq 1$ for $t \in [0, T]$. 
		\end{proof}

		Although we obtain an explicit expression of the equilibrium control, we must prove its admissibility. Under state-dependent risk aversion, the equilibrium control depends on $X^*$. We also encounter a challenge posed by the unbounded Volterra process $\hat{\lambda}$. We must make some additional assumptions to prove admissibility. 
		
		\begin{assumption}\label{assume:U0}
			A sufficiently large constant $C_1$ exists, such that 
			\[\mathbb{E}\left[\exp\left(C_1\int_{0}^{T}U_0^2(s)ds\right) \right] < \infty. \]
		\end{assumption}
		\begin{assumption}\label{assume:lambda}
			A sufficiently large constant $C_2$ exists, such that 
			\[ \mathbb{E}\left[\exp\left(C_2\int_{0}^{T}\lambda_t dt\right)\right] < \infty. \]
		\end{assumption}
		
		Assumptions \ref{assume:U0} and \ref{assume:lambda} are similar to those made by \cite{HW19} and \cite{YW19, YW}. These assumptions are regularities for the unbounded parameters under Brownian filtration. However, we also encounter jumps. To enable the use of It\^o's calculus under a Poisson random measure, we must regulate the randomness of the claim size $z$.
		
		\begin{assumption} \label{assume:z}
			\[\phi_1\eta\mu_z\max\{z\}\leq \mathbb{E}[z^2].  \]
		\end{assumption}
		
		According to Assumption \ref{assume:z}, the claim size $z$ has an upper bound associated with the safety loading parameter $\eta$, which is often a small value.
		
		\begin{theorem}
			Based on Assumptions \ref{assume:U0}--\ref{assume:z}, there exists an admissible equilibrium control to Problem \eqref{objective} under state-dependent risk aversion, as follows:
			\begin{equation}\label{u^*state}
				\pi^*(t) = \frac{\nu_1(t)}{M_t\sigma(t)^2} \Gamma_t^{(1)}X^*_t, ~
				a^*(t) = \frac{\eta\mu_z}{M_t\mathbb{E}[z^2]}\Gamma_t^{(1)}X^*_t > 0,
			\end{equation}
			where $M$ is given in \eqref{M}. Furthermore,  $(\pi^*(t), a^*(t)) \in H_\mathcal{F}^2(0,T;\mathbb{R}, \mathbb{P})\times \cup_{q>2}L_\mathcal{F}^q(0,T;\mathcal{D}, \mathbb{P})$, $X \in S_\mathcal{F}^2(0,T;\mathbb{R}_+, \mathbb{P})$, and $a^*(t) \leq \frac{\eta\mu_z}{\mathbb{E}[z^2]}\phi_1X^*_t$. 
		\end{theorem}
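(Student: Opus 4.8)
The plan is to substitute the candidate pair $(\pi^*,a^*)$ of \eqref{u^*state} into the wealth dynamics \eqref{X} and to exploit the fact that, since $\eta=\theta$ forces $c\equiv0$, the closed-loop equation for $X^*$ is a \emph{linear homogeneous} SDE with jumps,
\[
dX^*_t=X^*_t\Big[\alpha_t\,dt+\beta_t\,dW_1(t)-\int_{\mathbb{R}_+}\kappa_t(z)\,\widetilde N(dt,dz)\Big],\qquad X^*_0=X_0,
\]
with $\alpha_t=r_t+\tfrac{\nu_1(t)^2}{M_t\sigma(t)^2}\Gamma_t^{(1)}+\tfrac{\eta^2\mu_z^2k_1}{M_t\mathbb{E}[z^2]}\hat\lambda_t\Gamma_t^{(1)}$, $\beta_t=\tfrac{\nu_1(t)}{M_t\sigma(t)}\Gamma_t^{(1)}$, $\kappa_t(z)=\tfrac{\eta\mu_z}{M_t\mathbb{E}[z^2]}\Gamma_t^{(1)}z$. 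Because $M_t\ge1$ (Proposition \ref{pro:M}), $\Gamma^{(1)}_t=\phi_1e^{\int_t^T r}$ is bounded and $\nu_1/\sigma,\ \nu_1^2/\sigma^2$ are bounded deterministic functions, $\beta$ is bounded, $\alpha_t\le C(1+\hat\lambda_t)$ and $\kappa_t(z)\ge0$; moreover $\Gamma_t^{(1)}/M_t\le\phi_1e^{\int_t^T r}\big/e^{\int_t^T 2r}=\phi_1e^{-\int_t^T r}\le\phi_1$, so $\kappa_t(z)\le\tfrac{\eta\mu_z\phi_1}{\mathbb{E}[z^2]}\max\{z\}\le1$ by Assumption \ref{assume:z}. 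Since $\int_0^T\hat\lambda_s\,ds<\infty$ a.s.\ by Lemma \ref{lemma1}, the coefficients are integrable enough for the equation to admit the unique strong solution $X^*_t=X_0\,\mathcal{E}_t$, $\mathcal{E}$ the Dol\'eans--Dade exponential of the driving semimartingale.

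Positivity is then read off from $\mathcal{E}$: its jump factors equal $1-\kappa_t(z)\ge0$, so $X^*\ge0$ whenever $X_0\ge0$. Consequently $X^*$ is $\mathbb{R}_+$-valued, $a^*(t)=\tfrac{\eta\mu_z}{M_t\mathbb{E}[z^2]}\Gamma_t^{(1)}X^*_t\ge0$ (and $>0$ on $\{X^*_t>0\}$), and, using $\Gamma_t^{(1)}/M_t\le\phi_1$ once more, $a^*(t)\le\tfrac{\eta\mu_z}{\mathbb{E}[z^2]}\phi_1X^*_t$, which is the claimed bound. (If the inequality in Assumption \ref{assume:z} is not strict, $X^*$ may be absorbed at $0$ after a maximal-size claim, but this does not affect any of the conclusions.)

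The delicate step, and the one I expect to be the main obstacle, is the moment bound $\mathbb{E}[\sup_{0\le t\le T}|X^*_t|^q]<\infty$ for some $q>2$, needed to place $X^*$ in $S^2_\mathcal F$ (in fact $S^q_\mathcal F$) despite the unbounded coefficient $\hat\lambda$ and the jumps. I would apply It\^o's formula to $(X^*_t)^q$, localize by stopping times, and use $\kappa_t(\cdot)\in[0,1]$ together with $(1-\kappa)^q-1+q\kappa\le q\kappa$ to get the pathwise estimate $(X^*_{t})^q\le X_0^q+\mathrm{(local\ martingale)}_t+C\int_0^t(X^*_s)^q(1+\hat\lambda_s)\,ds$; a stochastic Gronwall inequality (continuous driver $H_t=C\int_0^t(1+\hat\lambda_s)\,ds$) then bounds $\mathbb{E}[\sup_t(X^*_t)^q]$ by a power of $\mathbb{E}[\exp(C'\int_0^T\hat\lambda_s\,ds)]$, which is finite by Assumption \ref{assume:lambda} (with $C_2$ large enough, recalling $\hat\lambda=l+\lambda$ and $l$ bounded). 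This gives $X^*\in S^q_\mathcal F(0,T;\mathbb{R}_+,\mathbb P)$; since $\pi^*$ and $a^*$ are bounded prefactors times $X^*$, it follows that $(\pi^*,a^*)\in H^2_\mathcal F(0,T;\mathbb{R},\mathbb P)\times\cup_{q>2}L^q_\mathcal F(0,T;\mathcal D,\mathbb P)$ with $\mathcal D=[0,\infty)$, so the pair is admissible by Definition \ref{admissible} (equivalently, by Theorem \ref{remark:admissible}).

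Finally, to confirm that $(\pi^*,a^*)$ is an equilibrium I would verify the hypotheses of Theorem \ref{theorem2}. Put $p^*(s;t)=M_sX^*_s-\Gamma_s^{(1)}X^*_t-\mathbb{E}_t[M_sX^*_s]$ and apply It\^o in $s$ with jumps: since $M$ is $W_0$-driven while $X^*$ is driven by $W_1$ and $\widetilde N$ the covariation vanishes, and $s\mapsto\mathbb{E}_t[M_sX^*_s]$ is of finite variation, so the martingale integrands are $Z_0^*(s;t)=X^*_sU_0(s)$, $Z_1^*(s;t)=M_s\pi^*(s)\sigma(s)=\tfrac{\nu_1(s)\Gamma_s^{(1)}}{\sigma(s)}X^*_s$, $Z_2^*(s,z;t)=-M_sa^*(s)z$; the terminal value is $X^*_T-\phi_1X^*_t-\mathbb{E}_t[X^*_T]$ (matching \eqref{p*} with $\phi_2=0$), and the drift collapses to $-r_sp^*(s;t)$ exactly because $\Gamma^{(1)}_t=\phi_1e^{\int_t^T r}$ and $(M,U_0)$ solves \eqref{BSDE:M} with $U_1\equiv0$. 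That $(p^*,Z^*,Z_2^*)$ lies in $L^2_\mathcal F\times H^2_\mathcal F\times F^2$ follows by H\"older's inequality from the $S^q$-bound on $X^*$, the polynomial $\hat\lambda$-moments of Lemma \ref{lemma1}, and the exponential integrability of $\int_0^T U_0^2$ from Assumption \ref{assume:U0}; hence $(p^*,Z^*,Z_2^*)$ is the unique solution of \eqref{p*}. Evaluating at $s=t$ gives $p^*(t;t)=-\Gamma_t^{(1)}X^*_t$, whence $\nu_1(t)p^*(t;t)+\sigma(t)Z_1^*(t;t)=0$ and $\nu_2(t)p^*(t;t)-\int_{\mathbb{R}_+}zZ_2^*(t,z;t)\delta(dz)=-\eta k_1\hat\lambda_t\mu_z\Gamma_t^{(1)}X^*_t+M_ta^*(t)k_1\hat\lambda_t\mathbb{E}[z^2]=0$ after substituting $a^*(t)$; both lines of \eqref{condition} therefore hold (the second as an identity, for every admissible $\rho_2$), and Theorem \ref{theorem2} yields that $(\pi^*,a^*)$ is an open-loop equilibrium control.
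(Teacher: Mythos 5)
Your overall scheme is sound, and the first half (explicit Dol\'eans--Dade representation, nonnegativity of $X^*$ because the jump factor $1-\kappa_t(z)\ge0$ under Assumption~\ref{assume:z}, the bound $\Gamma^{(1)}_t/M_t\le\phi_1$, and the verification via \eqref{p*} and Theorem~\ref{theorem2} with $p^*(t;t)=-\Gamma_t^{(1)}X^*_t$, $Z_1^*(t;t)=M_t\pi^*\sigma$, $Z_2^*(t,z;t)=-M_ta^*z$) is exactly the route the paper takes, except that the paper leaves the equilibrium verification implicit (it was performed while deriving \eqref{u^*state}) and devotes its proof entirely to admissibility.

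Where you genuinely diverge from the paper is the $S^q$ moment bound. The paper computes $d\ln(M_tX^*_t)$ so that the $\hat\lambda$ drift produced by $a^*$ cancels against the generator of $M$; the cost is the appearance of a stochastic exponential $\mathcal E_t(U_0/M\cdot W_0)$, which is why Assumption~\ref{assume:U0} enters. It then writes $X^*_t$ as $e^{\int r}(X_0M_0/M_t)$ times a product of two Brownian stochastic exponentials and a jump exponential $\le1$, and bounds $\mathbb E[\sup|X^*|^{2m}]$ by Doob's maximal inequality applied to each factor, with H\"older splitting the product, invoking Assumptions~\ref{assume:U0}--\ref{assume:lambda}. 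You instead apply It\^o to $(X^*)^q$ directly, use the convexity bound $(1-\kappa)^q-1+q\kappa\le q\kappa$ to absorb the jump term into a $C(1+\hat\lambda)$ drift, and invoke a stochastic Gronwall inequality. This avoids passing through $M_tX^*_t$, so it never needs Assumption~\ref{assume:U0} for the state bound (consistent with the paper's own remark that Assumption~\ref{assume:U0} is implied by a large-enough $C_2$), which is a modest economy; the price is reliance on a stochastic Gronwall inequality with jumps, which is a less elementary tool and, as written, you use only as a black box (one must apply the $p<1$ version to $(X^*)^{q'}$ for $q'>q$, check the predictability of the driver and the one-sided jump condition on the local martingale, etc.). Both routes are valid; the paper's is more self-contained, yours localizes the role of the unboundedness more cleanly. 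One small loose end shared with the paper: if equality holds in Assumption~\ref{assume:z} a claim of maximal size sends $X^*$ to zero and then $a^*$ is no longer strictly positive, so the stated strict inequality $a^*>0$ implicitly presumes strict inequality in Assumption~\ref{assume:z} (or $X^*_t>0$ a.s.), which you rightly flag.
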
 
		\begin{proof}
			We need only prove the admissibility of the equilibrium control in \eqref{u^*state}. 
			Define $\pi^*(t) = \zeta_1(t)X^*_t$ and $a^*(t) = \zeta_2(t)X_t^*$, where $\zeta_1(t) = \frac{\nu_1(t)}{M_t\sigma(t)^2}\Gamma_t^{(1)}$, and $\zeta_2(t) = \frac{\eta\mu_z}{M_t\mathbb{E}[z^2]}\Gamma_t^{(1)}$. Note that $0<\frac{\Gamma_t^{(1)}}{M_t} \leq \phi_1$. According to Assumption \ref{assume:z}, we obtain $0\leq\zeta_2(t)z\leq 1$ for $t \in [0, T]$. 
			By substituting  $\pi^*(t)$ and $a^*(t)$ into \eqref{X} and applying It\^o's formula, we obtain 
			\begin{align*}
				d(\ln M_tX^*_t) = \left( - r_t - \frac{1}{2}\sigma(t)^2\zeta_1(t)^2 -\frac{U_0^2(t)}{2M_t^2}  \right)dt + \int_{\mathbb{R}_+}\left\{\ln\left(1 -\zeta_2(t)z \right) + \zeta_2(t)z \right\}\delta(dz)dt\\
				+\frac{U_0(t)}{M_t}dW_0(t) + \sigma(t)\zeta_1(t)dW_1(t) +  \int_{\mathbb{R}_+}\ln\left(1 -\zeta_2(t)z \right) \widetilde{N}(dt, dz). 
			\end{align*}
			Under standard Brownian motion $W$, $\mathcal{E}_t(h\cdot W) = \exp\left(\int_{0}^{t}h_sdW_s - \frac{1}{2}\int_{0}^{t}h_s^2ds \right)$. 
			Therefore, 
			\begin{align*}
				X^*_t = e^{\int_{0}^{t}r_udu}\frac{X_0M_0}{M_t}\mathcal{E}_t\left(\frac{U_0}{M}\cdot W_0\right)\mathcal{E}_t(\sigma\zeta_1\cdot W_1)\exp\bigg\{ \int_{0}^{t}\int_{\mathbb{R}_+}\left\{\ln\left(1 -\zeta_2(s)z \right) + \zeta_2(s)z \right\}\delta(dz)ds\\
				+\int_{0}^{t}\int_{\mathbb{R}_+}\ln\left(1 -\zeta_2(s)z \right) \widetilde{N}(ds, dz)\bigg\}. 
			\end{align*}
			Note that 
			\begin{align*}
				&\int_{0}^{t}\int_{\mathbb{R}_+}\ln\left(1 -\zeta_2(s)z \right)\delta(dz)ds
				+ \int_{0}^{t}\int_{\mathbb{R}_+}\ln\left(1 -\zeta_2(s)z \right) \widetilde{N}(ds, dz)\\
				&= \int_{0}^{t}\int_{\mathbb{R}_+}\ln\left(1 -\zeta_2(s)z \right) \widetilde{N}(ds, dz) = \sum_{i = 0}^{N(t)}\ln\left(1 -\zeta_2(t_i)z_i \right) \leq 0. 
			\end{align*}
			Hence, $$0 < \exp\left\{\int_{0}^{t}\int_{\mathbb{R}_+}\ln\left(1 -\zeta_2(t)z \right)\delta(dz)dt
			+ \int_{0}^{t}\int_{\mathbb{R}_+}\ln\left(1 -\zeta_2(t)z \right) \widetilde{N}(dt, dz)\right\} \leq 1.$$
			As $M_0$ is bounded, for any $m>1$, we obtain
			\begin{align*}
				&\mathbb{E}\left[\sup_{0\leq t\leq T}|X_t^*|^{2m}\right] \\
				&\leq C \mathbb{E}\left[\sup_{0\leq t \leq T}\mathcal{E}^{2m}_t\left(\frac{U_0}{M}\cdot W_0\right)\mathcal{E}_t^{2m}(\sigma\zeta_1\cdot W_1)\exp\left(\int_{0}^{t}\int_{\mathbb{R}_+}2m\zeta_2(s)z\delta(dz)dt\right) \right]\\
				& \leq C \Bigg\{\mathbb{E}\left[\sup_{0\leq t \leq T}\mathcal{E}^{4m}_t\left(\frac{U_0}{M}\cdot W_0\right)\right]\mathbb{E}\left[\sup_{0\leq t \leq T}\mathcal{E}_t^{8m}(\sigma\zeta_1\cdot W_1)\right]\\
				&\mathbb{E}\left[\sup_{0\leq t \leq T}\exp\left(\int_{0}^{t}\int_{\mathbb{R}_+}8m\zeta_2(s)z\delta(dz)dt\right)\right]\Bigg\}^\frac{1}{2},
			\end{align*}
			at a constant $C > 0$. 
			According to Doob's martingale maximum inequality, if $C_1 \geq 4m(8m-1)$ in Assumption \ref{assume:U0}, then
			\begin{align*}
				&\mathbb{E}\left[\sup_{0\leq t \leq T}\mathcal{E}^{4m}_t\left(\frac{U_0}{M}\cdot W_0\right)\right]\leq \left(\frac{4m}{4m -1} \right)^{4m}\mathbb{E}\left[\sup_{0\leq t \leq T}\mathcal{E}^{4m}_T\left(\frac{U_0}{M}\cdot W_0\right)\right]\\
				&\leq  \left(\frac{4m}{4m -1} \right)^{4m}\left\{\mathbb{E}\left[\mathcal{E}_T\left(\frac{8mU_0}{M}\cdot W_0\right)\right]\right\}^\frac{1}{2}\left\{\mathbb{E}\left[\exp\left(4m(8m -1)\int_{0}^{T}\frac{U_0^2(s)}{M_s^2}ds\right)\right]\right\}^\frac{1}{2}< \infty. 
			\end{align*}
			As $\sigma\zeta_1$ is a bounded deterministic function, we have $\mathbb{E}\left[\sup_{0\leq t \leq T}\mathcal{E}_t^{8m}(\sigma\zeta_1\cdot W_1)\right] < \infty$. Setting $C_2 \geq 8m\frac{\eta\mu_z^2\phi_1k_1}{\mathbb{E}[z^2]}$ in Assumption \ref{assume:lambda}, 
			\begin{align*}
				\mathbb{E}\left[\sup_{0\leq t \leq T}\exp\left(\int_{0}^{t}\int_{\mathbb{R}_+}8m\zeta_2(s)z\delta(dz)dt\right)\right] = \mathbb{E}\left[\exp\left(\int_{0}^{T}8m\frac{\eta\mu_z^2\Gamma_t^{(1)}}{M_t\mathbb{E}[z^2]}k_1\hat{\lambda}_tdt\right)\right]< \infty. 
			\end{align*}
			Hence, we obtain $X^*\in S_\mathcal{F}^m(0, T; \mathbb{R}_+, \mathbb{P})$ for any $m >1$, and $(\pi^*(t), a^*(t)) \in H_\mathcal{F}^2(0,T;\mathbb{R}, \mathbb{P})\times \cup_{q>2}L_\mathcal{F}^q(0,T;\mathcal{D}, \mathbb{P})$. 
		\end{proof}
		
		We further analyze Assumption \ref{assume:U0}. As $U_0$ is given in \eqref{U0}, for any constant $C_1$, 
		\begin{align*}
			\mathbb{E}\left[\exp\left(C_1\int_{0}^{T}U_0^2(s)ds\right) \right] = \mathbb{E}\left[\exp\left(C_1\int_{0}^{T}\left(\int_{t}^{T}e^{\int_{t}^{s}2r_udu}\frac{\mu_z^2\eta^2}{\mathbb{E}[z^2]}k_1E_B(s-t)\sigma_\lambda ds\right)^2\lambda_tdt\right)\right].
		\end{align*}
		Assumption \ref{assume:U0} holds when $C_2 \geq C_1\sup_{0\leq t \leq T}\left|\int_{t}^{T}e^{\int_{t}^{s}2r_udu}\frac{\mu_z^2\eta^2}{\mathbb{E}[z^2]}k_1E_B(s-t)\sigma_\lambda ds\right|^2$ in Assumption \ref{assume:lambda}. According to the above proof, $\mu_z^2 \leq \mathbb{E}[z^2]$; therefore, we have 
		$$C_2 > \max\left\{125\eta^4\sigma_\lambda^2\sup_{0\leq t \leq T}\left|\int_{t}^{T}e^{\int_{t}^{s}2r_udu}k_1E_B(s-t) ds\right|^2,  18\eta\phi_1k_1\right\}$$.
		This is sufficient for Assumption \ref{assume:lambda} to hold if $E_B$ is defined as in Proposition \ref{pro:M}. 
		

		\subsection{Uniqueness of the equilibrium control}
		We further establish the uniqueness of the equilibrium control under the condition of state-dependent risk aversion. A detailed proof is provided in Appendix \ref{Appendix:theoremunique}. 
		\begin{theorem}\label{uniqueSDRA}
			Suppose that Assumption \ref{assume:lambda} holds true. Let $M$ be defined by \eqref{M}. For the case involving state-dependent risk aversion, the control $u^* = (\pi^*, a^*)$, given by \eqref{u^*state}, is the unique equilibrium control for the RI Problem \eqref{objective} in which $\phi_1 = 0$ and $\phi_2>0$. 
		\end{theorem}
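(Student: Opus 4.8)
The plan is to adapt, essentially line by line, the argument that proves Theorem \ref{uniqueCRA}; the only structural differences are that the candidate solution of \eqref{p*} now takes the state-dependent form $p^*(s;t)=M_sX^*_s-\Gamma_s^{(1)}X^*_t-\mathbb{E}_t[M_sX^*_s]$ and that the coefficient $M$ given by \eqref{M} is itself a stochastic process rather than a deterministic one. Suppose $(\pi,a)\in H_\mathcal{F}^2(0,T;\mathbb{R},\mathbb{P})\times\cup_{q>2}L_\mathcal{F}^q(0,T;\mathcal{D},\mathbb{P})$ is a second admissible equilibrium control with state process $X$, so that replacing $X^*$ by $X$ in \eqref{p*} yields a unique $(p,Z,Z_2)$ satisfying \eqref{condition}. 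I would introduce the difference processes
\[\bar p(s;t)=p(s;t)-\big(M_sX_s-\Gamma_s^{(1)}X_t-\mathbb{E}_t[M_sX_s]\big),\quad \bar Z_0(s;t)=Z_0(s;t)-X_sU_0(s),\]
\[\bar Z_1(s;t)=Z_1(s;t)-M_s\pi(s)\sigma(s),\quad \bar Z_2(s,z;t)=Z_2(s,z;t)+M_sa(s)z,\]
the subtracted integrands being those produced when It\^o's product rule is applied to $M_sX_s$ (there are no cross terms, since $W_0$ and $W_1$ are independent and $M$ has no jumps). Because $p^*(t;t)=-\Gamma_t^{(1)}X^*_t$, one has $p(t;t)=\bar p(t;t)-\Gamma_t^{(1)}X_t$; substituting this into \eqref{condition} gives $\pi(t)=\frac{\nu_1(t)\Gamma_t^{(1)}}{M_t\sigma(t)^2}X_t+D_1(t)$ with $D_1(t)=-\frac{\nu_1(t)\bar p(t;t)+\sigma(t)\bar Z_1(t;t)}{M_t\sigma(t)^2}$, while, exactly as in the proof of Theorem \ref{uniqueCRA}, the variational inequality over $\mathcal{D}=\mathbb{R}_+$ gives $a(t)=\frac{\eta\mu_z\Gamma_t^{(1)}}{M_t\mathbb{E}[z^2]}X_t+D_2(t)$ with $D_2(t)=\frac{A_t}{M_tk_1\hat\lambda_t\mathbb{E}[z^2]}\big(\int_{\mathbb{R}_+}z\bar Z_2(t,z;t)\delta(dz)-\nu_2(t)\bar p(t;t)\big)$ for some bounded adapted $A_t\in[0,1]$. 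It then remains to show $D_1\equiv D_2\equiv 0$.

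The decisive step is the computation of $d\bar p$. Applying It\^o's product rule to $M_sX_s$, using the driver of the BSDE \eqref{BSDE:M} for $M$ (recall $c\equiv 0$ here, since $\eta=\theta$), and inserting the feedback expressions for $\pi$ and $a$ just obtained, the terms proportional to $X_s\big(\tfrac{\nu_1(s)^2}{\sigma(s)^2}+\tfrac{\nu_2(s)^2}{k_1\hat\lambda_s\mathbb{E}[z^2]}\big)\Gamma_s^{(1)}$ cancel; combining this with $\partial_s\Gamma_s^{(1)}=-r_s\Gamma_s^{(1)}$ and noting that $s\mapsto\mathbb{E}_t[M_sX_s]$ contributes only the $\mathcal{F}_t$-conditional expectation of the drift, one arrives at precisely the structure of \eqref{pbar0}:
\[d\bar p(s;t)=\big\{-r_s\bar p(s;t)-M_s\nu_1D_1(s)-M_s\nu_2D_2(s)+\mathbb{E}_t[M_s\nu_1D_1(s)+M_s\nu_2D_2(s)]\big\}ds+\bar Z(s;t)^\top dW_s+\int_{\mathbb{R}_+}\bar Z_2(s,z;t)\widetilde{N}(ds,dz),\]
with $\bar p(T;t)=0$. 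Taking $r\equiv 0$ without loss of generality and conditioning on $\mathcal{F}_t$ gives $\mathbb{E}_t[\bar p(s;t)]=0$ for $s\ge t$, in particular $\bar p(t;t)=0$; hence $M_s\nu_1D_1(s)=-\theta_1(s)\bar Z_1(s;s)$ and $M_s\nu_2D_2(s)=\int_{\mathbb{R}_+}\theta_2(s,z)\bar Z_2(s,z;s)\delta(dz)$ with $\theta_1(s)=\frac{\mu(s)-r_s}{\sigma(s)}$ and $\theta_2(s,z)=\frac{\eta\mu_zA_s}{\mathbb{E}[z^2]}z\ge 0$ — exactly the $\theta_1,\theta_2$ that appear in the proof of Theorem \ref{uniqueCRA}. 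Proposition \ref{prop2} then gives $\bar Z_1(s;s)=\bar Z_1(s;t)$ and $\bar Z_2(s,\cdot;s)=\bar Z_2(s,\cdot;t)$ for $s\ge t$, so with $\Delta(u)=-\theta_1(u)\bar Z_1(u;t)+\int_{\mathbb{R}_+}\theta_2(u,z)\bar Z_2(u,z;t)\delta(dz)$ the process $\tilde p(s;t)=\bar p(s;t)-\int_s^T\mathbb{E}_t[\Delta(u)]du$ carries no $\mathcal{F}_t$-measurable drift.

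From this point the argument coincides with that of Theorem \ref{uniqueCRA}: I would introduce $\mathbb{Q}$ through the density $\mathcal{E}_t$ of \eqref{mathcalE} built from these $\theta_1,\theta_2$, under which $\tilde p(s;t)$ is a local martingale; since $\theta_2(s,z)\le\frac{\eta\mu_z}{\mathbb{E}[z^2]}z$ exactly as before, the bound $\mathbb{E}[\mathcal{E}_T^2]\le C\,\mathbb{E}\big[\exp\big((2\eta+\eta^2)\int_0^Tk_1\hat\lambda_sds\big)\big]<\infty$ holds once $C_2\ge k_1(2+\eta)\eta$ in Assumption \ref{assume:lambda}, and an $L^{q_0}$ estimate with $q_0\in(1,2)$ — needed because $M\in S_\mathcal{F}^q$ for all $q>2$ while the admissible state process only belongs to $S_\mathcal{F}^2$, so that $M_sX_s$, and hence $\bar p$, lies only in $L^{q_0}$ — shows $\mathbb{E}^{\mathbb{Q}}[\sup_{t\le s\le T}|\tilde p(s;t)|]<\infty$. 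Thus $\tilde p$ is a genuine $\mathbb{Q}$-martingale with terminal value $0$, which forces $\tilde p\equiv\bar Z\equiv\bar Z_2\equiv 0$ and therefore $D_1\equiv D_2\equiv 0$. Consequently $\pi(t)=\frac{\nu_1(t)\Gamma_t^{(1)}}{M_t\sigma(t)^2}X_t$ and $a(t)=\frac{\eta\mu_z\Gamma_t^{(1)}}{M_t\mathbb{E}[z^2]}X_t$ have the same feedback form as $u^*$ in \eqref{u^*state}, so $X$ solves the same (linear, well-posed) closed-loop SDE as $X^*$; pathwise uniqueness then gives $X=X^*$, and hence $(\pi,a)=(\pi^*,a^*)$. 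I expect the main obstacle to be the cancellation that yields the clean dynamics of $\bar p$ — this is what removes the state-dependence from the driver — together with the integrability bookkeeping, since neither $M$ nor the admissible state process is bounded and one is forced to work in $L^{q_0}$ with $q_0<2$ throughout.
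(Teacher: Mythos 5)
Your proposal follows exactly the paper's own strategy: subtract the candidate solution $M_sX_s-\Gamma_s^{(1)}X_t-\mathbb{E}_t[M_sX_s]$ from $p(s;t)$, read off $D_1,D_2$ from \eqref{condition}, derive the dynamics of $\bar p$ (which, after cancellation against the driver of the BSDE \eqref{BSDE:M} for $M$, have the same shape as \eqref{pbar0}), conclude $\bar p(t;t)=0$ from the conditional expectation, and then reuse verbatim the measure-change and $L^{q_0}$ martingale argument from Theorem \ref{uniqueCRA}. The only things you add beyond the paper's Appendix proof are (i) the remark that the absence of cross-variation between $M$ and $X$ keeps $\bar Z_0=Z_0-X_sU_0(s)$ clean, and (ii) the explicit closing step that equal feedback coefficients force $X=X^*$ by pathwise uniqueness of the linear closed-loop SDE — both are correct and merely make explicit what the paper treats as implicit.
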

		
		By combining this theorem with the result of the analysis of admissibility, we discover that   
		\begin{align*}
			C_2\geq &\max\bigg\{k_1(2 + \eta)\eta,~ 125\eta^4\sigma_{\lambda}^2\sup_{0\leq t \leq T}\left|\int_{t}^{T}e^{\int_{t}^{s}2r_udu}k_1E_B(s-t)ds\right|^2, ~ 18\eta\phi_1k_1\bigg\}
		\end{align*}
		in Assumption \ref{assume:lambda} is sufficient to ensure both the admissibility and uniqueness of the equilibrium control in \eqref{u^*state} under state-dependent risk aversion, where $E_B$ is defined as in Proposition \ref{pro:M}. Although the value $125$ seems to be large, $\sigma_{\lambda}$ and the safety loading factor are usually very small. In other words, a large value of $C_2$ is not required in Assumption \ref{assume:lambda}. Below, we give a sufficient condition under which Assumption \ref{assume:lambda} holds.
		
		\begin{pro}\label{largeenough}
			If $a_1^2 - 2C\sigma_\lambda^2 > 0$, then 
			\[\mathbb{E}\left[\exp\left(C\int_{0}^{T}\lambda_sds\right)\right] < \infty. \]
		\end{pro}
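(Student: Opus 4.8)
The plan is to reduce the claim, via Lemma~\ref{lemma:expmu}, to a non-explosion statement for the Volterra--Riccati equation attached to $\lambda$, and then to produce an a priori bound on its solution out of the hypothesis $a_1^2 - 2C\sigma_\lambda^2 > 0$. Applying Lemma~\ref{lemma:expmu} with $c_0 = C$, one has $\mathbb{E}\big[\exp(C\int_0^T\lambda_s\,ds)\big] = \exp(Y_0)$ with
\[
Y_0 = \int_0^T\Big[C\lambda_0 + \psi(s)(b_1 - a_1\lambda_0) + \tfrac12\sigma_\lambda^2\psi(s)^2\Big]\,ds ,
\]
where $\psi$ solves $\psi = \big(C - a_1\psi + \tfrac12\sigma_\lambda^2\psi^2\big)*K$; this identity is legitimate as soon as the Volterra--Riccati equation admits a bounded solution on $[0,T]$, for then $\psi\in L^2([0,T])$, $Y_0$ is a finite integral, and $\exp(Y_\cdot(T))$ is a genuine (not merely local) martingale. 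Hence it suffices to exhibit such a bounded solution under the stated condition.

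Write $F(x):=\tfrac12\sigma_\lambda^2 x^2 - a_1 x + C$, so the equation reads $\psi = (F\circ\psi)*K$. Its discriminant is $a_1^2 - 2C\sigma_\lambda^2$, positive by hypothesis, and since $F(0)=C>0$ with positive leading coefficient, $F$ has two roots $0<x_-<x_+$ with $F\ge 0$ on $[0,x_-]$, $F(x_-)=0$, and $F$ strictly decreasing there (the vertex $a_1/\sigma_\lambda^2$ lying between the roots). Consequently the constant $x_-$ is a stationary supersolution, $x_- \ge (F(x_-))*K = 0$, and $0$ is a subsolution. I would then establish the a priori bound $0\le\psi(t)\le x_-$ on $[0,T]$. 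Local well-posedness of a continuous solution is the standard Picard argument for Volterra equations with $L^2_{loc}$ kernels used in \cite{AJ}; to bound it, set $\tau:=\inf\{t\ge0:\psi(t)>x_-\}$. On $[0,\tau]$ one has $\psi\le x_-$, hence $F\circ\psi\ge0$, hence $\psi=(F\circ\psi)*K\ge0$ by nonnegativity of $K$, so $\psi\in[0,x_-]$ there. If $\tau<T$, continuity gives $\psi(\tau)=x_-$; using that $K$ is nonincreasing (a consequence of complete monotonicity in Assumption~\ref{assumption1}) one gets $\int_0^\tau K(t-s)F(\psi(s))\,ds\le\int_0^\tau K(\tau-s)F(\psi(s))\,ds=\psi(\tau)=x_-$ for $t>\tau$, while the $(\tau,t]$-contribution vanishes as $t\downarrow\tau$; combined with $F'(x_-)<0$ this contradicts $\psi$ crossing $x_-$ upward. (Alternatively, via Bernstein's representation $K(t)=\int_0^\infty e^{-xt}\mu(dx)$ one passes to the Markovian lift $\psi=\int_0^\infty\psi^x\,\mu(dx)$, $\dot\psi^x=-x\psi^x+F(\psi)$, $\psi^x(0)=0$, and compares $\psi$ with the scalar ODE $\dot y=F(y)$, $y(0)=0$, which increases monotonically to $x_-$.) A bounded solution therefore extends to all of $[0,T]$.

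With $\psi\in L^\infty([0,T])\subset L^2([0,T])$, the integral defining $Y_0$ is finite, so $\mathbb{E}\big[\exp(C\int_0^T\lambda_s\,ds)\big]=e^{Y_0}<\infty$. The main obstacle is precisely the a priori bound: in the classical CIR case the Riccati equation is an ODE whose solution simply asymptotes to the equilibrium $x_-$ without crossing it, but here the equation is non-local with a non-monotone (convex) nonlinearity, so the comparison must be extracted from structural properties of completely monotone kernels --- nonnegativity and monotonicity of $K$, nonnegativity of the associated resolvent --- and one must handle the limited time-regularity of $\psi$ (H\"older rather than $C^1$ for kernels such as the fractional one), which is the delicate point.
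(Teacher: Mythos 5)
The overall reduction you make -- via Lemma~\ref{lemma:expmu}, finiteness of the exponential moment hinges on a global bounded solution $\psi$ of the Volterra--Riccati equation, and the discriminant condition is what should force the a priori bound $0\le\psi\le x_-$ -- is exactly what the paper invokes, except that the paper simply cites Lemma~A.1 and Theorem~2.4 of \cite{HW} for these two facts while you attempt to reprove them. The difficulty is that your first-crossing argument does not actually close. With $\tau=\inf\{t:\psi(t)>x_-\}$ and $t_n\downarrow\tau$ chosen so that $\psi(t_n)>x_-$, your two estimates only give $\psi(t_n)=\int_0^\tau K(t_n-s)F(\psi(s))\,ds+\int_\tau^{t_n}K(t_n-s)F(\psi(s))\,ds\le x_-+\varepsilon_n$ with $\varepsilon_n\to 0$; this is perfectly compatible with $\psi(t_n)>x_-$ and yields no contradiction. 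The heuristic ``$F'(x_-)<0$ forbids an upward crossing'' is an ODE statement -- it uses $\dot\psi(t)=F(\psi(t))<0$ just past the crossing -- but $\psi$ satisfies no pointwise ODE here; the nonlocal equation allows $\psi$ to exceed $x_-$ at a sequence of times accumulating at $\tau$ without the $(\tau,t]$-contribution having a definite sign. Likewise, the sketched Markovian-lift comparison with $\dot y=F(y)$ does not hold pointwise when $\mu$ has infinite mass (as for the fractional kernel), since $\psi^x(t)\le\int_0^t F(\psi)^+\,ds$ integrated against $\mu(dx)$ produces a divergent prefactor $K(0^+)$.

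A clean way to repair this -- and the ingredient you are missing -- is the resolvent form already set up in the paper: with $B=-a_1$, $R_B$ the resolvent of $a_1K$, and $E_B=K-R_B*K$, the Riccati equation $\psi=\bigl(C+\tfrac12\sigma_\lambda^2\psi^2-a_1\psi\bigr)*K$ is equivalent to $\psi=\bigl(C+\tfrac12\sigma_\lambda^2\psi^2\bigr)*E_B$. For completely monotone $K$ one has $E_B\ge0$ and $\int_0^\infty E_B(s)\,ds=\hat K(0)/(1+a_1\hat K(0))\le 1/a_1$; hence $\psi\ge0$. Now set $\Psi(t)=\sup_{0\le s\le t}\psi(s)$. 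For every $s\le t$, $\psi(s)\le\bigl(C+\tfrac12\sigma_\lambda^2\Psi(t)^2\bigr)\int_0^s E_B\le\bigl(C+\tfrac12\sigma_\lambda^2\Psi(t)^2\bigr)/a_1$, so taking the supremum over $s\le t$ gives $a_1\Psi(t)\le C+\tfrac12\sigma_\lambda^2\Psi(t)^2$, i.e.\ $F(\Psi(t))\ge0$, i.e.\ $\Psi(t)\in[0,x_-]\cup[x_+,\infty)$. Since $\Psi$ is continuous with $\Psi(0)=0$, it stays in $[0,x_-]$, giving $0\le\psi\le x_-$ on the maximal interval of existence and hence a global bounded solution. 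This is precisely the content of the cited lemma in \cite{HW}; with it, the remainder of your argument (finiteness of $Y_0$ and the martingale identity from Lemma~\ref{lemma:expmu}) goes through.
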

		\begin{proof}
			If $a_1^2 - 2C\sigma_\lambda^2 > 0$, then according to Lemma A.1 in \cite{HW}, the Riccati equation,
			\begin{equation}\label{psi_C}
				\psi = (C - a_1\psi + \frac{1}{2}\sigma_\lambda^2\psi^2 ),
			\end{equation}
			has a unique global continuous solution over $[0, T]$. According to Theorem 2.4 in \cite{HW},
			\[\mathbb{E}\left[\exp\left(C\int_{0}^{T}\lambda_sds\right)\right] < \infty. \]
		\end{proof}
		
		Proposition \ref{largeenough} further clarifies that the value of constant $C$ is actually not very large.

		\section{Numerical study}
		\label{section:numerical}
		To demonstrate the influence of LRD on a reinsurance strategy, we compare reinsurance strategies under the LRD and Markovian mortality models. Using the VMM defined in \eqref{mortality}, we can easily see that the model reduces to a Markovian mortality model,
		\begin{equation}
			d\lambda_t = (b_1 - a_1\lambda_t)dt + \sigma_{\lambda}\sqrt{\lambda_t}dW_0, \nonumber,
		\end{equation}
		when $K\equiv 1$. Hence, the VMM actually contains a Markovian mortality model as a special case. By setting $K \equiv 1$, our result can also be applied to a Markovian case. We thus compare the equilibrium strategies under VMM with those under LRD and its Markovian counterparts by setting different values of $K$ and retaining the same values of other parameters. 
		
		In this section, we use the fractional $K(t) = \frac{t^{\alpha - 1}}{\Gamma(\alpha)}$ for the VMM such that the Hurst parameter $H = \alpha - \frac{1}{2}$. The VMM reflects the LRD feature for $\alpha >1$ and the Markovian feature for $\alpha = 1$. We focus on a population whose members are all aged 50 years at the time $t = 0$. To reflect the effect of LRD, the insurer is assumed to have access to historical mortality rate data of this population beginning at age 30. Let $T = 3$ years. Next, we simulate a sample path of the mortality rate from ages 30 to 53 years, as shown in  Figure \ref{subfig:lam}. For simplicity, we set $l(t) \equiv 0$. The values of the other parameters are given as follows: $\alpha = 1.33$, $b_1 = 0.15$, $a_1 = 0.5$, $\lambda_0 = 0.18$, and $\sigma_{\lambda} = 0.1$. 
		
		Based on this mortality rate path, we compare the reinsurance and investment strategies under the VMM and Markovian mortality model. In the constant risk aversion case, equation \eqref{u*c} reveals that the equilibrium strategies are the same under both models. Hence, the LRD feature of the mortality rate does not affect the RI equilibrium strategy under constant risk aversion. Under state-dependent risk aversion, the difference in equilibrium strategies deduced using \eqref{u^*state} depends on the process $M$ in \eqref{M}. In the expression of $M_t$, only $\mathbb{E}[\hat{\lambda}_s|\mathcal{F}_t]~ (s \geq t)$ differs between the two mortality models. The value of $\mathbb{E}[\hat{\lambda}_s|\mathcal{F}_t]$ is calculated using \eqref{expect:lam} under the VMM, and using $\mathbb{E}[\hat{\lambda}_s|\mathcal{F}_t] = l(s) + \lambda_te^{-a_1(s-t)} + \frac{b_1}{a_1}(1 - e^{-a_1(s-t)})$ under the Markovian mortality model. According to \eqref{expect:lam}, we recognize that the historical mortality rate enables an adjustment to the value of $\mathbb{E}[\hat{\lambda}_s|\mathcal{F}_t]$. Under the Markovian mortality model, no adjustment is made, and the value of $\mathbb{E}[\hat{\lambda}_s|\mathcal{F}_t]$ only depends on the current mortality rate. Based on this observation, we numerically compare the equilibrium strategies under the two models. The following parameter values are assigned: $k_1 = 10$, $\mu_z = 1$, $\mathbb{E}[z^2] = 1.2$,  $r \equiv 0.05$, $\eta \equiv 0.2$, $\mu = 0.07$, and $\sigma = 0.2$. The path of the risky asset $S$ on $[0, T]$ is simulated as shown in Figure \ref{subfig:asset}.   
		
		\begin{figure}[H]
			\centering
			\subfigure[]{\includegraphics[width= 7cm]{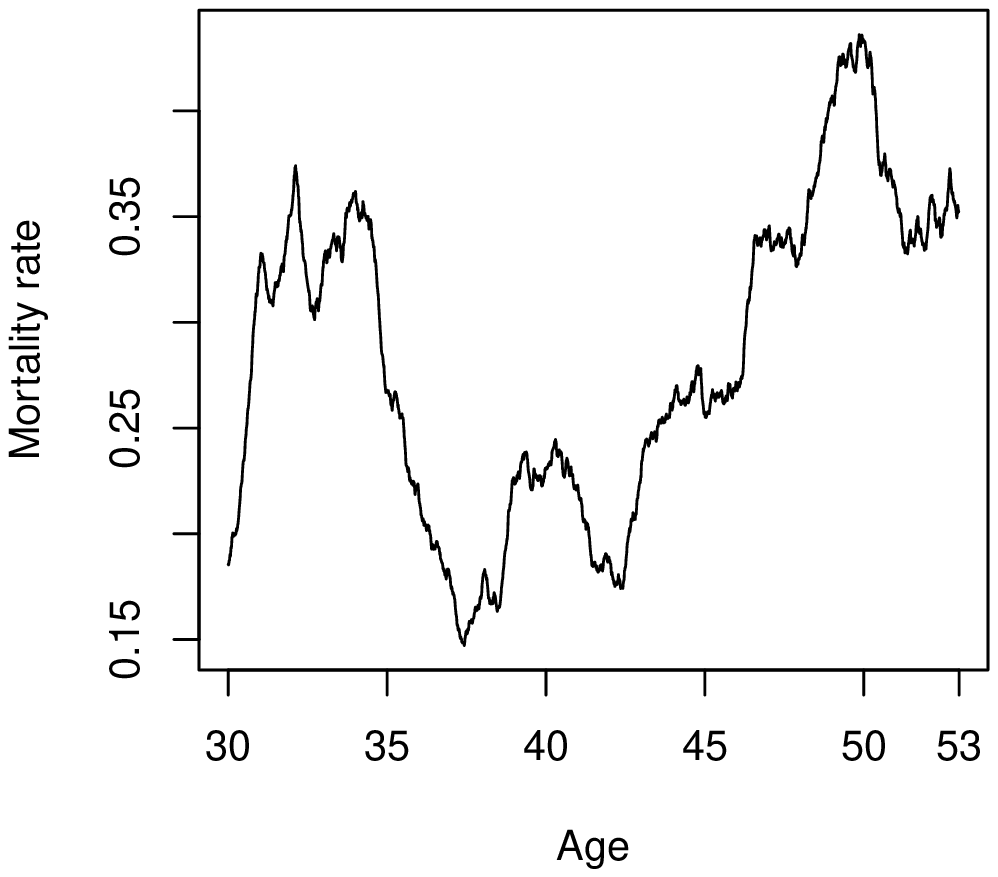} \label{subfig:lam}}
			\subfigure[]{\includegraphics[width = 7cm]{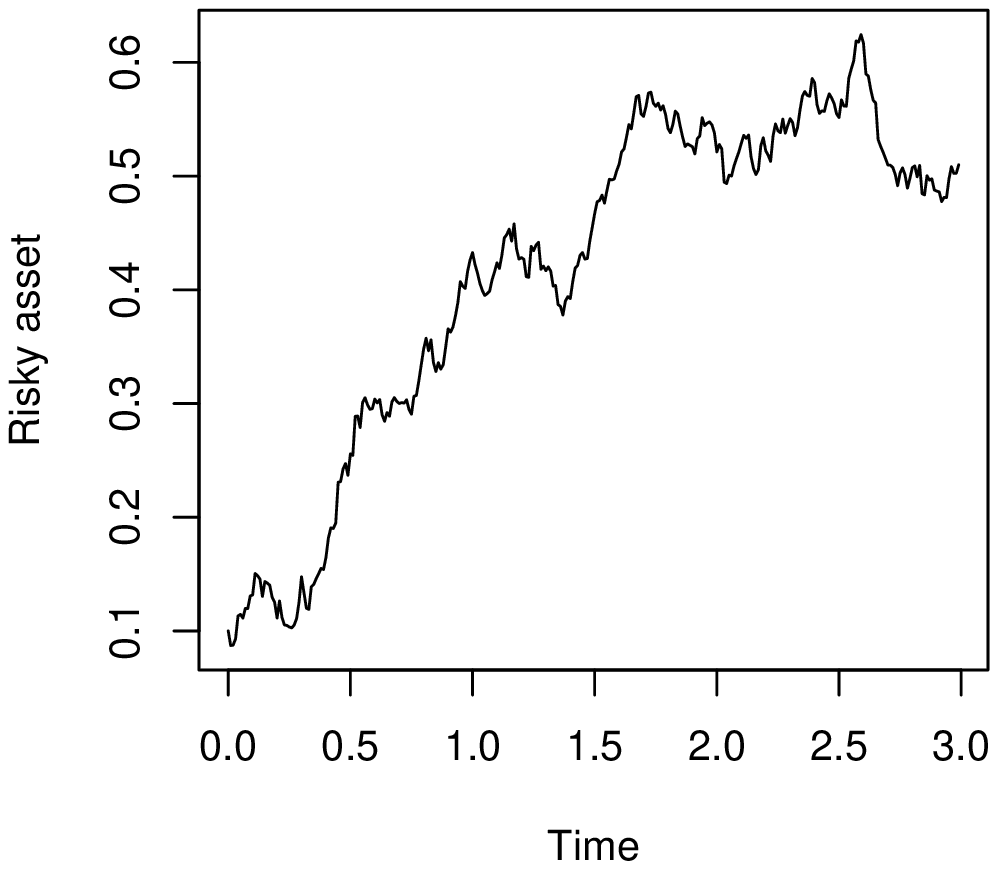}\label{subfig:asset}}
			\caption{Sample paths of the mortality rate and risky asset}
			\label{fig1}
		\end{figure}

		\begin{figure}[H]
			\centering
			\subfigure[]{\includegraphics[width= 7cm]{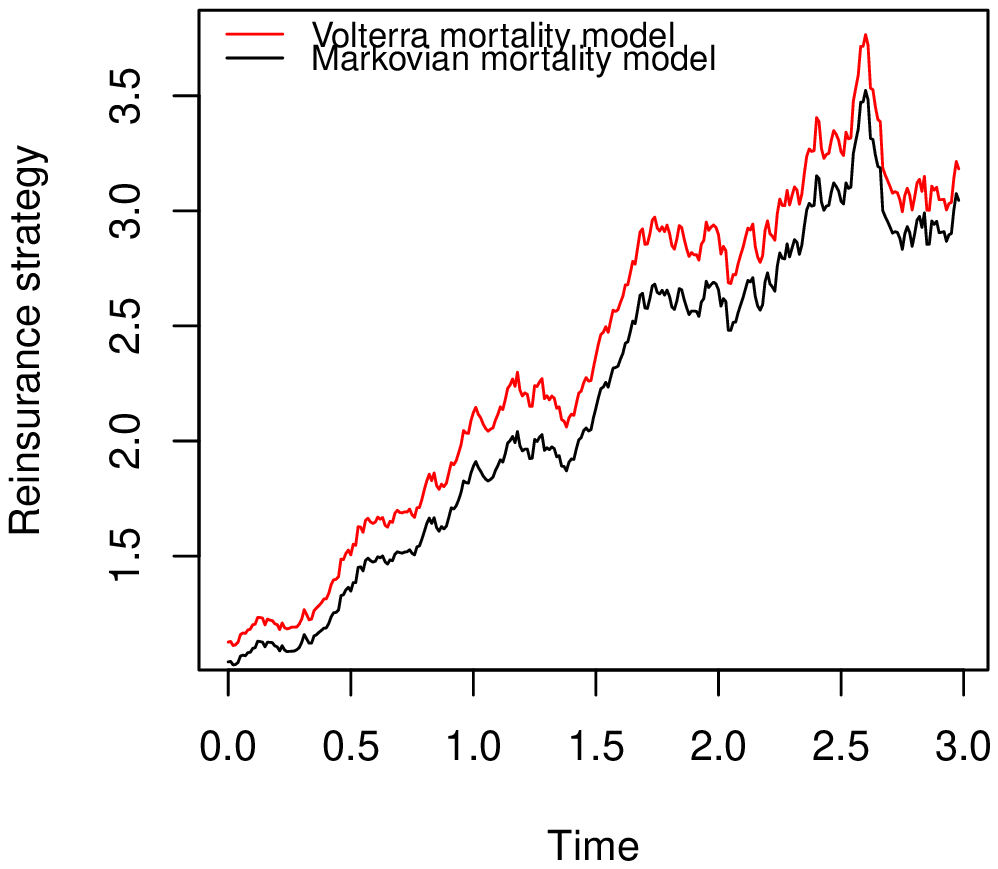}}
			\subfigure[]{\includegraphics[width = 7cm]{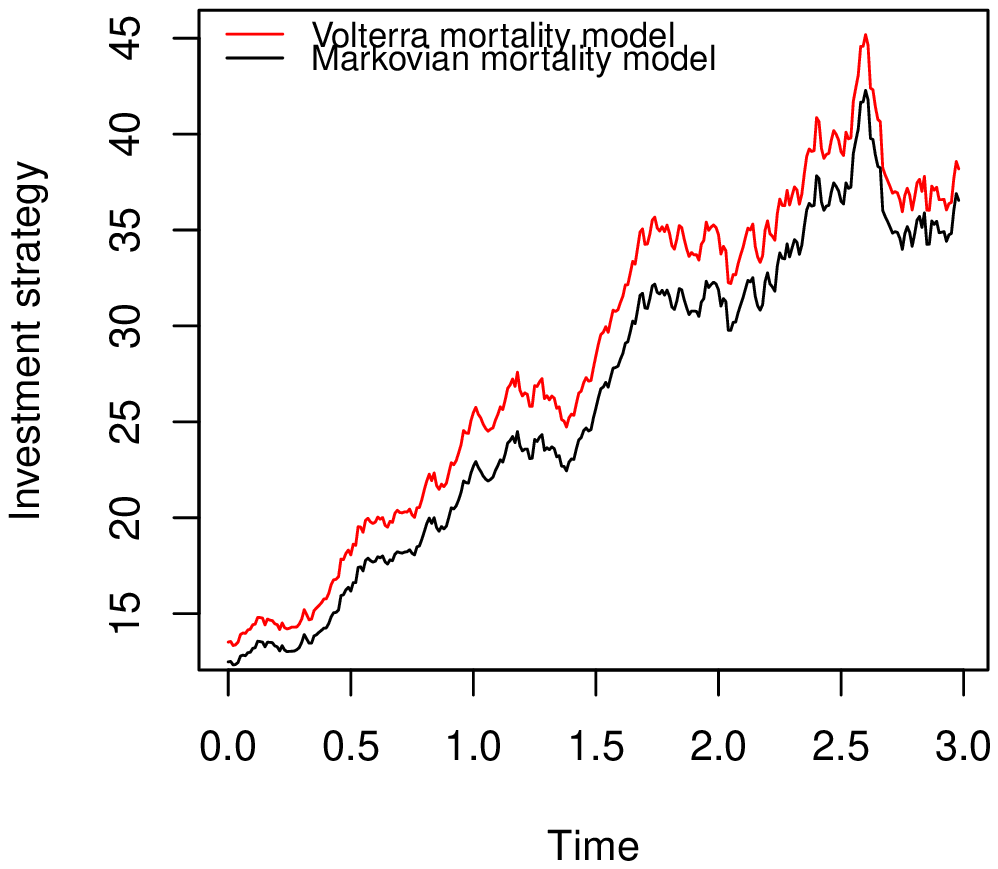}}
			\caption{Investment and reinsurance strategies under two mortality models}
			\label{fig2}
		\end{figure}
		
		We set an initial wealth value of $X_0 = 10$ at $t =0$ and let $\phi_1 = 1$. Then, we calculate the equilibrium strategies under the VMM and Markovian mortality model according to \eqref{u^*state}. Figures \ref{fig2} and \ref{fig3} plot the strategies and wealth processes under the two models. Figure \ref{fig2} shows that the LRD feature influences both the investment and reinsurance strategies. Figure \ref{fig3} shows that the LRD mortality model outperforms its Markovian counterpart once the mortality rate includes the LRD feature. 
		
		\begin{figure}[H]
			\centering
			\includegraphics[width= 7 cm ]{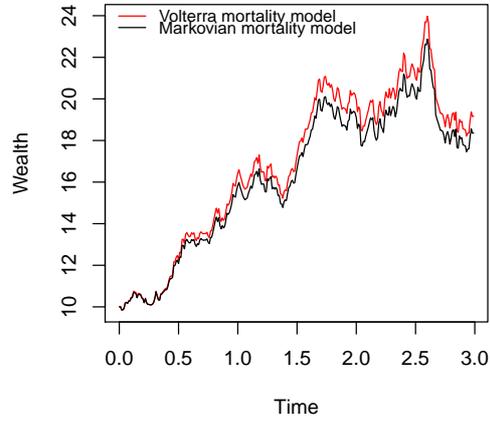}
			\caption{Wealth processes under the equilibrium strategies of the two models}
			\label{fig3}
		\end{figure}
		
		\begin{figure}[H]
			\centering
			\includegraphics[width= 7cm ]{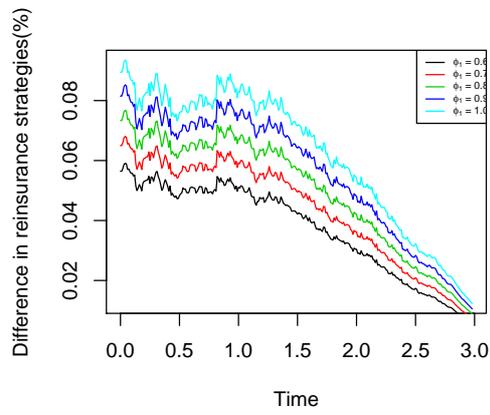}
			\caption{Percentage differences in the reinsurance strategies under the two models with different values of $\phi_1$}
			\label{fig4}
		\end{figure}
		
		\begin{figure}[H]
			\centering
			\includegraphics[width= 7cm ]{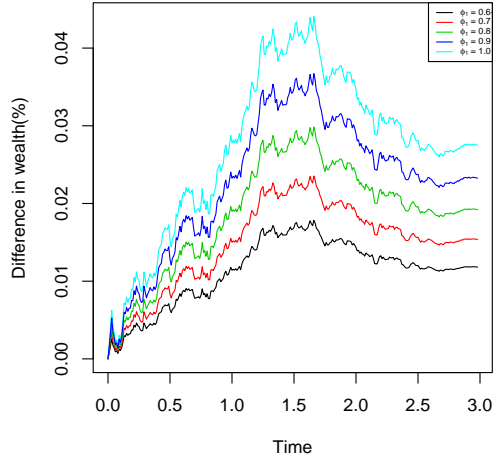}
			\caption{Percentage differences in wealth under the two models with different values of $\phi_1$}
			\label{fig5}
		\end{figure}
		
		To further analyze the influence of LRD, we vary the state-dependent risk aversion parameter $\phi_1$ from 0.5 to 1. We then plot the percentage differences in the reinsurance strategies and wealth processes under the two models in Figures \ref{fig4} and \ref{fig5}, respectively. These percentage differences both increase with $\phi_1$; at $\phi_1 = 1$, the differences in reinsurance strategies under both models could reach 10 percent, and the differences in the wealth level could reach 4 percent. Numerically, the LRD mortality model outperforms its Markovian counterparts by providing additional profit when LRD is included in the mortality rate.

		\section{Conclusion}
		\label{section:conclusion}
		Studies by \cite{WCW} and \cite{WW21} show that inclusion of the LRD feature in a mortality rate has a significant impact on longevity hedging. As a life risk management tool, reinsurance is a popular alternative to the longevity market. We show that the LRD feature in a mortality rate has a limited effect on equilibrium reinsurance strategies. In other words, risk management with reinsurance is more robust to the LRD. Specifically, under constant risk aversion, the equilibrium controls under the LRD and Markovian mortality models coincide with each other and the equilibrium control is unique. Under state-dependent risk aversion, LRD has a mild level of influence, which increases with the risk aversion level. Technically, this paper provides a mathematical solution to the time-consistent mean-variance reinsurance-investment problem for a mortality rate that includes LRD. We derive explicit open-loop equilibrium strategies for both constant and state-dependent risk aversion cases. By using unbounded parameters and imposing some technical conditions, we also prove the admissibility and uniqueness of the equilibrium controls in these two cases.
		
		\begin{appendices}
			\section{Some Proofs}
			\subsection{Proof of admissibility in Remark \ref{remark:admissible}}
			\label{Appendix:remarkproof}
			\begin{proof}
				From \eqref{X}, we have 
				\begin{align*}
					X_t  &= e^{\int_{0}^{t}r_u du}X_0 + \int_{0}^{t}e^{\int_{s}^{t}r_udu}\left(\nu(s)^\top u(s) + c_s\right)ds \\
					&+ \int_{0}^{t}e^{\int_{s}^{t}r_udu}\pi(s)\sigma(s)dW_1(s) - \int_{0}^{t}\int_{\mathbb{R}_+}e^{\int_{s}^{t}r_udu}a(s)z\widetilde{N}(ds,dz). 
				\end{align*}	
				Using the Burkholder--Davis--Gundy (BDG) inequality and H\"older's inequality and Lemma \ref{lemma1}, for any constant $q>2$,  there exists a constant $C> 0$ such that 
				\begin{align*}
					\mathbb{E}\left[\sup_{0\leq t \leq T}|X_t|^2\right] &\leq C \left\{1 +  \mathbb{E}\left[\int_{0}^{T}\pi_s^2ds\right] + \mathbb{E}\left[\left(\int_{0}^{T}\hat{\lambda}_s a_sds\right)^2\right] + \mathbb{E}\left[\int_{0}^{T}\hat{\lambda}_sa_s^2ds\right] \right\}\\
					& \leq C \left\{1 + \mathbb{E}\left[\int_{0}^{T}\pi_s^2ds\right] +   \left(\mathbb{E}\left[\int_{0}^{T}a_s^qds\right]\right)^{\frac{2}{q}} \left(\mathbb{E}\left[\int_{0}^{T}|\hat{\lambda}_s|^{\frac{2q}{q-2}}ds\right]\right)^{1-\frac{2}{q}} \right\}
					< \infty. 
				\end{align*}
			\end{proof}
			
			\subsection{Proof of Theorem \ref{theorem1} and Proposition \ref{prop1}}
			\label{appendix:Theo2prop1}
			\begin{proof}
				Denote by $X^\epsilon(\cdot)$ the state process corresponding to  $u^{t,\epsilon, \rho}(s) = ( \pi_s^{t, \epsilon, \rho_1}, a_s^{t, \epsilon, \rho_2} )^\top$. Using the standard perturbation approach, 
				\[X^\epsilon_s = X^*_s + Y^\epsilon(s) + Z^\epsilon(s),~ s\in[t, T] \]
				where $Y^\epsilon(s)$ and $Z^\epsilon(s)$ are the respective solutions to the following SDEs: 
				\begin{align*}
					\left\{
					\begin{array}{lr}
						dY^\epsilon(s) = r_sY^\epsilon(s)ds + \sigma\rho_1\boldsymbol{1}_{[t, t+\epsilon]}(s)dW_1(s) - \int_{\mathbb{R}_+}z(\rho_2- a^*) \boldsymbol{1}_{[t, t+\epsilon]}(s)\widetilde{N}(ds, dz), \\
						Y^\epsilon(t) =0, ~ s\in [t, T],  
					\end{array}\right.
				\end{align*}
				\[dZ^\epsilon(s) = r_sZ^\epsilon(s) +  \nu^\top\rho\boldsymbol{1}_{[t, t+\epsilon]}(s)ds, ~ Z^\epsilon(t) =0, ~ s\in [t, T]. \]
				According to the BDG inequality, there exists a positive constant $C$, such that 
				\begin{align*}
					&\mathbb{E}_t[\sup_{s \in [t, T]}(Y_s^\epsilon)^2]\\ 
					&= \mathbb{E}_t\left[\sup_{s\in[t, T]}\left(\int_{t}^{s}e^{\int_{v}^{s}r_udu}\sigma \rho_1\boldsymbol{1}_{[t, t+\epsilon]}dW_1(v) - \int_{t}^{s}\int_{\mathbb{R}_+}e^{\int_{v}^{s}r_udu}z(\rho_2 - a^*)\boldsymbol{1}_{[t, t + \epsilon]}(s)\widetilde{N}(dv, dz) \right)^2\right] \\
					&\leq C\mathbb{E}_t\left[\int_{t}^{T} \rho_1^2\boldsymbol{1}_{[t, t+\epsilon]}dv\right] + C\mathbb{E}_t\left[\int_{0}^{T}(\rho_2 - a^*_s)^2\hat{\lambda}_s\boldsymbol{1}_{[t, t+\epsilon]}ds \right]= O(\epsilon). 
				\end{align*}
				Moreover, 
				\begin{align}
					&\mathbb{E}_t[\sup_{s \in [t, T]}(Z_s^\epsilon)^2] = \mathbb{E}_t\left[\sup_{s\in[t, T]}\left(\int_{t}^{s}e^{\int_{v}^{s}r_udu}\nu^\top\rho\boldsymbol{1}_{[t, t+\epsilon]}dv\right)^2\right]\notag\\
					&\leq C \mathbb{E}_t\left[\left(\int_{t}^{t+\epsilon}|\nu^\top\rho| dv\right)^2\right] = O(\epsilon^2). 
				\end{align}
				The result is obtained by applying the same method as in \cite{SG}. 
			\end{proof}
			
			\subsection{Proof of Proposition \ref{prop2}} \label{proofP2}
			\begin{proof}
				Set $\widetilde{p}(s; t) = e^{-\int_{s}^{T}r_udu}p^*(s; t) + \mathbb{E}_t[X^*_T] + \phi_1X_t^* + \phi_2$,   $\widetilde{Z}(s; t) = e^{-\int_{s}^{T}r_udu}Z^*(s; t)$, and   $\widetilde{Z}_2(s, \cdot; t) = e^{-\int_{s}^{T}r_udu}Z^*_2(s, \cdot, t)$. According to It\^o's lemma, we obtain
				\begin{equation}\label{ptilde}
					d\widetilde{p}(s; t)  =  \widetilde{Z}(s;t)dW_s + \int_{\mathbb{R}_+} \widetilde{Z}_2(s,z;t) \widetilde{N}(ds,dz)  , ~ \widetilde{p}(T; t) = X^*_{T}. 
				\end{equation}
				Note that neither the coefficients nor the terminal condition of the above equation depend on $t$. \eqref{ptilde} can be regarded as a BSDE on the entire time interval $[0, T]$. For $s \in [0, T]$, denote the solution of \eqref{ptilde} as $(\widetilde{p}(s), \widetilde{Z}(s), \widetilde{Z}_2(s,\cdot)) \in S_\mathcal{F}^2(t,T;\mathbb{R}, \mathbb{P}) \times L_\mathcal{F}^2(t,T;\mathbb{R}^2, \mathbb{P}) \times F^2(t,T;\mathbb{R})$. Given the uniqueness of the solution, for any $t \in [0, T]$, $(\widetilde{p}(s; t), \widetilde{Z}(s; t), \widetilde{Z}_2(s,\cdot; t)) = (\widetilde{p}(s), \widetilde{Z}(s), \widetilde{Z}_2(s,\cdot))$. Then, the first claim in this lemma follows. 
				
				Using the definition of $\widetilde{p}(s; t)$, we obtain  
				\[{p}^*(s; t) = e^{\int_{s}^{T}r_udu}\widetilde{p}(s) - e^{\int_{s}^{T}r_udu}(\mathbb{E}_t[X^*_T] + \phi_1X_t^* + \phi_2) =e^{\int_{s}^{T}r_udu}\widetilde{p}(s) + e^{\int_{s}^{T}r_udu}\xi(t),\]
				where $\xi(t) = -(\mathbb{E}_t[X^*_T] + \phi_1X_t^* + \phi_2)$, and $\xi(\cdot) \in S_\mathcal{F}^2(t,T;\mathbb{R}, \mathbb{P})$. 
				Then, 
				\[\Lambda(s; t) = \Lambda_0(s) +e^{\int_{s}^{T}r_udu}\nu(s)\xi(t),\]
				where $\Lambda_0(s)  = \nu(s)e^{\int_{s}^{T}r_udu}\widetilde{p}(s)  +e^{\int_{s}^{T}r_udu}\left(\sigma(s)\widetilde{Z}_1(s),  k_1\hat{\lambda}_s\int_{\mathbb{R}_+}zf(z)\widetilde{Z}_2(s; z)dz\right)^\top$. 
			\end{proof}
			\subsection{Proof of Theorem \ref{theorem2}}
			\label{appendix:theorem3}
			\begin{proof}
				First, to prove sufficiency, we recall the representation $ \Lambda(s; t) = \Lambda_0(s) + e^{\int_{s}^{T}r_vdv}\nu(s)\xi(t)$ from Proposition \ref{prop2}. We still set $\rho_s = (\rho_1, \rho_2- a^*_s)^\top$.  Then,
				\[ \frac{1}{\epsilon}\int_{t}^{t+\epsilon}\mathbb{E}_t[\langle\Lambda(s; t), \rho_s\rangle]ds -  \frac{1}{\epsilon}\int_{t}^{t+\epsilon}\mathbb{E}_t[\langle \Lambda(s; s), \rho_s \rangle]ds  =  \frac{1}{\epsilon}\int_{t}^{t+\epsilon}\mathbb{E}_t\left[e^{\int_{s}^{T}r_vdv}\langle \nu(s), \rho_s\rangle(\xi(t)- \xi(s))\right]ds. \]
				Hence, we obtain 
				\begin{equation}\label{limLam}
					\liminf_{\epsilon\downarrow 0}\left|\frac{1}{\epsilon}\int_{t}^{t+\epsilon}\mathbb{E}_t[\langle\Lambda(s; t), \rho_s\rangle]ds -  \frac{1}{\epsilon}\int_{t}^{t+\epsilon}\mathbb{E}_t[\langle\Lambda(s; s), \rho_s\rangle]ds\right| = 0.
				\end{equation}
				If the condition in \eqref{condition} is satisfied, then 
				\[\liminf_{\epsilon\downarrow 0}\frac{1}{\epsilon}\int_{t}^{t+\epsilon}\mathbb{E}_t[\langle\Lambda(s; t), \rho_s\rangle]ds = \liminf_{\epsilon\downarrow 0}\frac{1}{\epsilon}\int_{t}^{t+\epsilon}\mathbb{E}_t[\langle\Lambda(s; s), \rho_s\rangle]ds \geq 0.\]
				According to Proposition \ref{prop1}, $(\pi^*, a^*)$ is an open-loop equilibrium control.  
				
				Second, we prove the necessity. If $u^* = (\pi^*, a^*)^\top$ is an open-loop equilibrium control, then according to Definition \ref{def2} and the variational equation in Theorem \ref{theorem1}, 
				\[\liminf_{\epsilon \downarrow0}\int_{t}^{t + \epsilon} \mathbb{E}_t \left[\langle \Lambda(s; t), \rho_s \rangle + \Theta(s)\langle \rho_s, \rho_s \rangle\right] ds \geq 0,  \]
				where $\rho$ is defined as in Theorem \ref{theorem1} and Proposition \ref{prop1}.
				Let $\rho_2 = 0$; then, the first condition in \eqref{condition} is a direct result of Theorem 3.1 in \cite{SG}. 
				For the second condition in \eqref{condition}, let $\rho_1=0$. According to Theorem \ref{theorem1}, 
				\begin{align*}
					\liminf_{\epsilon \downarrow 0}\int_{t}^{t+\epsilon}\mathbb{E}_t\bigg[\left(\nu_2(s)p^*(s; t)  - \int_{\mathbb{R}_+}zk_1\hat{\lambda}_tf(z) Z^*_2(s, z; t)dz\right)\widetilde{\rho}_2(s) \\
					+ \frac{1}{2}e^{\int_{s}^{T} 2r_u du }\left(\sigma(s)^2 + k_1\hat{\lambda}_s\mathbb{E}[z^2]\right)\widetilde{\rho}_2(s)^2\bigg]ds\geq 0, 
				\end{align*}
				where $\widetilde{\rho}_2  = \rho_2 - a^*\in \cup_{q>2}L_{\mathcal{F}}^q(0, T, \mathbb{R}, \mathbb{P})$.  Thus, 
				\begin{align*}
					\liminf_{\epsilon \downarrow 0}\int_{t}^{t+\epsilon}\mathbb{E}_t\bigg[\left(\nu_2(s)p^*(s; t)  - \int_{\mathbb{R}_+}zk_1\hat{\lambda}_tf(z) Z^*_2(s, z; t)dz\right){\rm sgn}(\widetilde{\rho}_2(s)) \\
					+ \frac{1}{2}e^{\int_{s}^{T} 2r_u du }\left(\sigma(s)^2 + k_1\hat{\lambda}_s\mathbb{E}[z^2]\right)|\widetilde{\rho}_2(s)|\bigg]ds\geq 0. 
				\end{align*}
				According to Lemma \ref{lemma1} and H\"older's inequality, 
				\[ \mathbb{E}_t\left[\int_{0}^{T}(\hat{\lambda}_sp^*(s;t))^2ds\right] \leq \left\{\mathbb{E}_t\left[\int_{0}^{T}p^*(s; t)^2\right]\right\}^{\frac{q}{2}} \left\{\mathbb{E}_t\left[\int_{0}^{T}\hat{\lambda}_s^{\frac{2q}{2 -q}}\right]\right\}^{1-\frac{q}{2}}<\infty,   \] 
				at a constant $q>2$. 
				Similarly,  $\mathbb{E}_t\left[\int_{0}^{T}(\hat{\lambda}_s\widetilde{\rho}_2(s))^2ds\right] < \infty$. 
				Thus, according to Lemma 3.5 in \cite{HHLb}, 
				$$\left(\nu_2(t)p^*(t; t)  - \int_{\mathbb{R}_+}zk_1\hat{\lambda}_tf(z) Z^*_2(t, z; t)dz\right)\widetilde{\rho}_2(t) + \frac{1}{2}e^{\int_{t}^{T} 2r_s ds }\left(\sigma(t)^2 + k_1\hat{\lambda}_t\mathbb{E}[z^2]\right)\widetilde{\rho}_2(t)^2\geq 0.$$
				For any $\theta \in (0, 1]$,  we use the same method as in \cite{HHLb} to obtain
				$$\left(\nu_2(t)p^*(t; t)  - \int_{\mathbb{R}_+}zk_1\hat{\lambda}_tf(z) Z^*_2(t, z; t)dz\right)\widetilde{\rho}_2(t) + \frac{1}{2}e^{\int_{t}^{T} 2r_s ds }\left(\sigma(t)^2 + k_1\hat{\lambda}_t\mathbb{E}[z^2]\right)\theta\widetilde{\rho}_2(t)^2\geq 0.$$
				Let $\theta\rightarrow 0^+$; we thus obtain
				$$\left(\nu_2(t)p^*(t; t)  - \int_{\mathbb{R}_+}zk_1\hat{\lambda}_tf(z) Z^*_2(t, z; t)dz\right)(\rho_2(t) - a^*(t))\geq 0.$$
			\end{proof}
			
			\subsection{Proof of Theorem \ref{uniqueSDRA}}
			\label{Appendix:theoremunique}
			\begin{proof}
				Suppose another equilibrium control $ u(\cdot) = (\pi(\cdot), a(\cdot))$ with the corresponding state process $X(\cdot)$. Then, when $X^*$ is replaced by $X$, the BSDE \eqref{p*} admits a unique solution $(p(s; t), Z(s; t), Z_2(s,z;t)) \in L_\mathcal{F}^2(t,T;\mathbb{R}, \mathbb{P})\times H_\mathcal{F}^2(t,T;\mathbb{R}^2, \mathbb{P})\times F^2(t,T;\mathbb{R})$. This satisfies the condition \eqref{condition}, where $Z(s; t) =\left(Z_0(s; t), Z_1(s; t)\right)^\top$.
				Then, we define
				\begin{align*}
					& \bar{p}(s;t) = p(s; t) - \left(  M_sX_s - \Gamma_s^{(1)}X_t - \mathbb{E}_t[M_sX_s]\right), \\
					&\bar{Z}_0(s;t) = Z_0(s;t) - X_sU_0(s),~\bar{Z}_1(s; t) = Z_1(s;t) - M_s \pi(s)\sigma(s) ,\\
					&\bar{Z}_2(s,z;t) = Z_2(s,z;t) + M_sa(s)z,
				\end{align*}
				where $M$, $U_0$, and $\Gamma^{(1)}$ are defined in Section \ref{Sec:SDRA}. 
				
				Clearly,  $(\bar{p}(s; t), \bar{Z}(s; t), \bar{Z}_2(s,z;t)) \in L_\mathcal{F}^2(t,T;\mathbb{R}, \mathbb{P})\times H_\mathcal{F}^2(t,T;\mathbb{R}^2, \mathbb{P})\times F^2(t,T;\mathbb{R})$. Similar to the proof of Theorem \ref{uniqueCRA}, we obtain
				\begin{align}
					\left\{
					\begin{array}{lr}
						\nu_1(t)\left[\bar{p}(t; t) -\Gamma_t^{(1)} X_t \right] + \sigma(t)\left[\bar{Z}_1(t; t) + M_t \pi(t)\sigma(t)  \right] = 0,\\ 
						\langle\nu_2(t)\left[\bar{p}(t; t) -\Gamma_t^{(1)}X_t\right]  - \int_{\mathbb{R}_+}zk_1\hat{\lambda}_tf(z) \left[\bar{Z}_2(t, z; t)- M_ta(t)z\right]dz, \rho_2(t) - a_t\rangle \geq 0
					\end{array}\right.
				\end{align}
				for any $\rho_2 \in \cup_{q>2}L_\mathcal{F}^q(t,T;\mathbb{R}_+, \mathbb{P})$. Then, 
				\begin{align*}
					\pi(t) &= \frac{\nu_1}{M_t\sigma(t)^2}\Gamma_t^{(1)}X_t - \frac{1}{M_t\sigma(t)^2}(\nu_1(t)\bar{p}(t; t) + \sigma(t)\bar{Z}_1(t; t))\\
					& = \pi^*(t) - \frac{1}{M_t\sigma(t)^2}(\nu_1(t)\bar{p}(t; t) + \sigma(t)\bar{Z}_1(t; t)) \triangleq  \pi^*(t) + D_1^d(t), \\
					a(t) &= \frac{1}{k_1\hat{\lambda}_tM_t\mathbb{E}[z^2]}\left[\nu_2\Gamma_t^{(1)}X_t - \left(\nu_2\bar{p}(t; t) - k_1\hat{\lambda}_t\int_{\mathbb{R}_+}zf(z)\bar{Z}_2(t, z;t)dz\right)\right]^+\\
					& = a^*_t - \frac{A^d_t}{k_1\hat{\lambda}_tM_t\mathbb{E}[z^2]}\left(\nu_2\bar{p}(t; t) - k_1\hat{\lambda}_t\int_{\mathbb{R}_+}zf(z)\bar{Z}_2(t, z;t)dz\right) \triangleq a^*_t + D_2^d(t), 
				\end{align*}
				where $0\leq A^d_t\leq 1$ is a bounded adapted process. Next, we show $D^d_1(t) \equiv D^d_2(t)\equiv 0$ to prove the uniqueness of the equilibrium control. We obtain
				\begin{align}\label{pbar}
					\begin{split}
						d\bar{p}(s;t) &=\left\{ -r_s\bar{p}(s;t) - \nu_1D^d_1(s)M_s - \nu_2D^d_2(s)M_s + \mathbb{E}_t[ \nu_1D^d_1(s)M_s + \nu_2D^d_2(s)M_s]\right\}ds \\
						&+ \bar{Z}(s;t)^\top dW_s + \int_{\mathbb{R}_+}\bar{Z}_2(s, z;t)d\widetilde{N}(ds, dz), \\
						\bar{p}(T; t) &= 0, s\in [t, T].
					\end{split}
				\end{align}
				As the interest rate $r(\cdot)$ is a bounded deterministic function, we take $r \equiv 0$ without a loss of generality. By taking the conditional expectation on both sides of \eqref{pbar},  we obtain $\mathbb{E}_t[\bar{p}(s; t)] = 0$ at $s \geq t$. Particularly, $\bar{p}(t; t) =0$. Hence,  $D^d_1(t) = -\frac{\bar{Z}_1(t; t)}{M_t\sigma(t)}$ and $D^d_2(t)  = \frac{A^d_t}{M_t\mathbb{E}[z^2]}\int_{\mathbb{R}_+}zf(z)\bar{Z}_2(t, z; t)dz$. 
				Then, $\nu_1D^d_1(t)M_t = -\frac{\mu(t)-r_t}{\sigma(t)}\bar{Z}_1(t; t)$ and $\nu_2D^d_2(t)M_t = \frac{\eta\mu_zA^d_t}{\mathbb{E}[z^2]}\int_{\mathbb{R}_+}z\bar{Z}_2(t, z; t)\delta(dz)$.  
				
				Following the same method used in the proof of Theorem \ref{uniqueCRA}, we obtain $\bar{Z} \equiv \bar{Z}_2 \equiv 0$. Thus, $D^d_1 \equiv D^d_2 \equiv 0$. As a result, $(\pi^*, a^*)$, given by \eqref{u^*state}, is the unique open-loop equilibrium control when $C_2 \geq k_1(2 + \eta)\eta$ in Assumption \ref{assume:lambda}. 
			\end{proof}
			
		\end{appendices}

	\end{document}